\newcommand{\argmin}{\operatornamewithlimits{\textrm{argmin}}}
\newcommand{\sbt}{\mathrm{subject\; to }}
\numberwithin{figure}{section}
\newtheorem{prop}{Proposition}[section]
\newtheorem{thm}[prop]{Theorem}
\newtheorem{lemma}[prop]{Lemma}
\newtheorem{cor}[prop]{Corollary}
\newtheorem{remark}[prop]{Remark}
\newcommand{\E}{{\mathbb E}}
\newcommand{\R}{{\mathbb R}}
\renewcommand{\P}{{\mathbb P}}
\newcommand{\B}{\boldsymbol}
\newcommand{\C}{{\mathcal{C}}}
\newcommand{\I}{{\mathcal{I}}}
\newcommand{\D}{{\mathcal{D}}}
\newcommand{\s}{{\mathcal{S}}}
\newcommand{\Z}{{\mathcal{Z}}}
\newcommand{\h}{{\mathcal{H}}}
\newcommand{\N}{{\mathcal{N}}}
\newcommand{\G}{{\mathcal{G}}}
\newcommand{\X}{{\mathcal{X}}}
\newcommand{\eps}{{\epsilon}}
\newcommand{\bt}{\pmb \theta}
\newcommand{\tbd}{\tilde{\pmb \delta}}
\newcommand{\br}{\pmb \rho}
\newcommand{\bd}{\pmb \delta}
\newcounter{rcnt}[section]
\def\argmin{\mathop{\rm argmin}}
\begin{document}

\title{Testing against a linear regression model using ideas from shape-restricted estimation}  
\author{Bodhisattva Sen\footnote{Supported by NSF Grants DMS-1150435 and AST-1107373}  $\hspace{0.03in}$ and Mary Meyer\footnote{Supported by NSF Grant DMS-0905656} \\ Columbia University, New York and Colorado State University, Fort Collins}
\maketitle

\begin{center}{\bf Abstract}\end{center}
A formal likelihood ratio hypothesis test for the validity of a parametric regression function is proposed, using a large-dimensional, nonparametric {\em double cone}  alternative.  For example, the test against a constant function uses the alternative of increasing or decreasing regression functions, and the test against a linear function uses the convex or concave alternative. The proposed test is exact, unbiased and the critical value is easily computed. The power of the test increases to one as the sample size increases, under very mild assumptions -- even when the alternative is mis-specified. That is, the power of the test converges to one for any true regression function that deviates (in a non-degenerate way) from the parametric null hypothesis. We also formulate tests for the linear versus partial linear model, and consider the special case of the additive model. Simulations show that our procedure behaves well consistently when compared with other methods. Although the alternative fit is non-parametric, no tuning parameters are involved. 

{\em Key words:}  additive model, asymptotic power, closed convex cone, convex regression, double cone, nonparametric likelihood ratio test, monotone function, projection, partial linear model, unbiased test.

\section{Introduction}
Let $\mathbf{Y} := (Y_1,Y_2,\ldots, Y_n) \in \R^n$ and consider the model 
\begin{equation}\label{eq:Mdl}
\mathbf{Y} ={\B\theta_0 + \sigma \B\epsilon}, 
\end{equation}
where $\B\theta_0 \in \R^n$, $\B \eps =(\eps_1,\ldots,\eps_n)$, 
are i.i.d.~$G$ with mean 0 and variance 1, and $\sigma >0$.  In this paper we address the problem of testing $H_0: \B \theta_0 \in \s$ where 
\begin{equation}\label{eq:DefC}
\s := \{\B \theta  \in \R^n:\B \theta = \B{X}\B{\beta},\B\beta\in\R^k, \mbox{ for some } k \ge 1 \},
\end{equation}
and $\B{X}$ is a known design matrix.
We develop a test for $H_0$ which is equivalent to the likelihood ratio test with normal errors. To describe the test, let $\I$ be a large-dimensional convex cone (for a quite general alternative) that contains the linear space $\s$, and define the ``opposite'' cone $\D=-\I = \{\B x: -\B x \in \I\}$. We test $H_0$ against $H_1: \B \theta \in \I\cup\D \backslash \s$ and the test statistic is formulated by comparing the projection of $\mathbf{Y}$ onto ${\s}$ with the projection of $\mathbf{Y}$ onto the {\em double cone} $\I\cup\D$.  Projections onto convex cones are discussed in~\citet{silvapulle05}, Chapter~3; see~\citet{rwd} for the specific case of isotonic regression, and~\citet{meyer13} for a cone-projection algorithm. 

We show that the test is unbiased, and that the critical value of the test, for any fixed level $\alpha \in (0,1)$, can be computed exactly (via simulation)  if the error distribution $G$ is known (e.g., $G$ is assumed to be standard normal). If $G$ is assumed to be completely unknown, the the critical value can be approximated via the bootstrap. Also, the test is completely automated and does not involve the choice of tuning parameters (e.g., smoothing bandwidths). More importantly, we show that the power of the test converges to 1, under mild conditions as $n$ grows large, for $\B \theta_0$ not only in the alternative $H_1$ but for ``almost'' all $\B \theta_0 \notin \s$. To better understand the scope of our procedure we first look at a few motivating examples. \newline

{\bf Example 1:} Suppose that $\phi_0: [0,1]\rightarrow \R$ is an unknown function of bounded variation and we are given design points $ x_1< x_2<\dots< x_n$ in $[0,1]$, and data $Y_i$, $i = 1,\ldots, n$, from the model: 
\begin{equation}\label{eq:RegMdl}
  Y_i = \phi_0(x_i) + \sigma \eps_i,
\end{equation}
where $\eps_1, \ldots, \eps_n$ are i.i.d.~mean zero variance 1 errors,  and $\sigma > 0$. Suppose that we want to test that $\phi_0$ is a constant function, i.e., $\phi_0\equiv c$, for some unknown $c \in \R$. We can formulate this as in~\eqref{eq:DefC} with $\B \theta_0 := (\phi_0(x_1), \phi_0(x_2), \ldots, \phi_0(x_n))^\top$ $ \in \R^n$ and $\B X = \mathbf{e} := (1,1,\ldots,1)^\top \in \R^n$. We can take 
$$\I = \{\B \theta \in \R^n: \theta_1 \le \theta_2 \le \ldots \le \theta_n\} $$ to be the set of sequences of non-decreasing real numbers. The cone $\I$ can also be expressed as 
\begin{equation}\label{eq:CvxPoly}
\I =\left\{\B\theta\in\R^n:\B{A}\B\theta\geq \B{0}\right\},
\end{equation}
where the $(n-1)\times n$  constraint matrix $\B{A}$ contains mostly zeros, except $\B{A}_{i,i}=-1$ and $\B{A}_{i,i+1}=1$, and $\s=\{\B\theta\in\R^n:\B{A}\B\theta= \B{0}\}$ is the largest linear space in $\I$. Then, not only can we test for $\phi_0$ to be constant against the alternative that it is monotone, but as will be shown in Corollary~\ref{cor:MonoCons}, the power of the test will converge to $1$, as $n \rightarrow \infty$, for any $\phi_0$ of bounded variation that deviates from a constant function in a non-degenerate way. In fact, we find the rates of convergence for our test statistic under $H_0$ (Theorem \ref{thm:Null}) and the alternative (Theorem \ref{thm:Misspec}). The intuition behind this remarkable power property of the test statistic is that a function is both increasing and decreasing if and only if it is a constant. So, if either of the projections of $\mathbf{Y}$, on $\I$ and $\D$ respectively, is not close to $\s$, then the underlying regression function is unlikely to be a constant.
\begin{figure}
\centerline{\includegraphics[height=2.1in, width=5.7in]{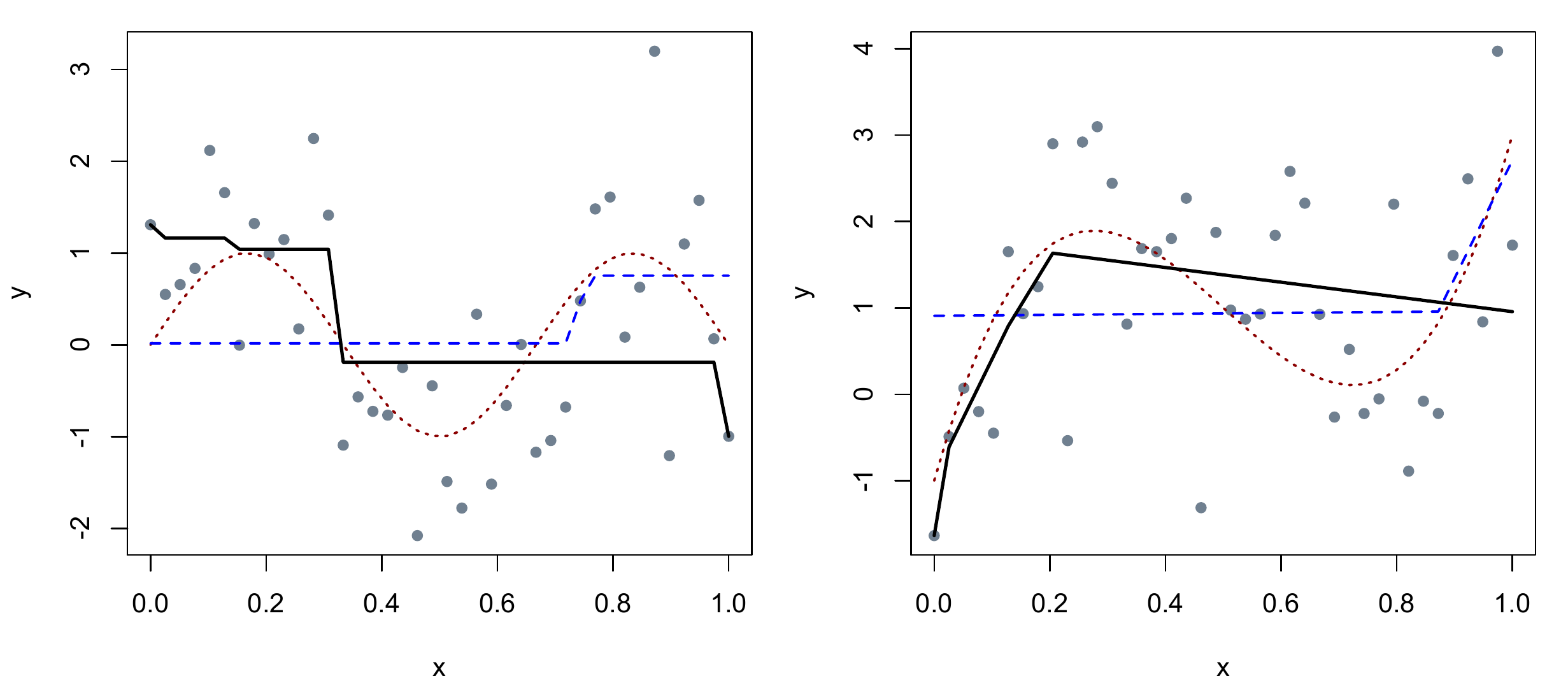} }
\caption{Scatterplots generated from~\eqref{eq:RegMdl} with independent $\N(0,1)$ errors, $\sigma = 1$, and equally spaced $x_i$, where $\phi_0(x)$ is shown as the dotted curve. Left: Increasing (dashed) and decreasing (solid) fits with $\phi_0(x)=\sin(3\pi x)$. Right: Convex (dashed) and concave (solid) fits with $\phi_0(x)=4-6x+40(x-1/2)^3$. }
\label{exampleplot}
\end{figure}

Consider testing against a constant function given the left scatterplot in Fig.~\ref{exampleplot} 
which was generated from a sinusoidal function.  The decreasing fit to the scatterplot represents the projection of $\mathbf{Y}$ onto the double cone, because it has smaller sum of squared residuals than the increasing fit.   Although the true regression function is neither increasing nor decreasing, the projection onto the double cone is sufficiently different from the projection onto $\s$, to the extent that the proposed test rejects $H_0$ at level $\alpha=0.05$.  The power for this test, given the indicated function and error variance, is $0.16$ for $n=50$, rising to $0.53$ for $n=100$ and $0.98$ for $n=200$. Our procedure can be extended to test against a constant function when the covariates are multi-dimensional; see Section~\ref{TestCnst} for the details. \newline

{\bf Example 2:} Consider~\eqref{eq:RegMdl} and suppose that we want to test whether $\phi_0$ is affine, i.e., $\phi_0(x) = a + b x$, $x \in [0,1]$, where $a, b \in \R$ are unknown. This problem can again be formulated as in~\eqref{eq:DefC} where $\mathbf{X} = [\mathbf{e} | \mathbf{e_1}]$ and $\mathbf{e_1} := (x_1, x_2,\ldots, x_n)^\top$. We may use the double cone of convex/concave functions, i.e., $\I$ can be defined as

\begin{equation}\label{eq:CvxCons}
\I = \left \{\B \theta \in \R^n: \frac{\theta_2 - \theta_1}{x_2 - x_1} \le \frac{\theta_3 - \theta_2}{x_3 - x_2} \le \ldots \le \frac{\theta_n - \theta_{n-1}}{x_n - x_{n-1}} \right\},
\end{equation}
if the $x$ values are distinct and ordered.  $\I$ can also be defined by a constraint matrix $\B A$ as in~\eqref{eq:CvxPoly} where  the non-zero elements of the $(n-2)\times n$ matrix $\B{A}$ are $\B{A}_{i,i}=x_{i+2}-x_{i+1}$, $\B{A}_{i,i+1}=x_i-x_{i+2}$, and $\B{A}_{i,i+2}=x_{i+1}-x_i$. We also show that not only can we test for $\phi_0$ to be affine against the alternative that it is convex/concave, but as will be shown in Corollary~\ref{thm:CvxMisspec}, the power of the test will converge to $1$ for any non-affine smooth $\phi_0$.  The second scatterplot in Fig.~\ref{exampleplot} 
is generated from a cubic regression function that is neither convex nor concave, but our test rejects the linearity hypothesis at~$\alpha=0.05$. The power for this test with $n=50$ and the function specified in the plot is $0.77$, rising to 0.99 when $n=100$.

If we want to test against a quadratic function, we can use a matrix appropriate for constraining the third derivative of the regression function to be non-negative. In this case, $\B{A}$ is $(n-3)\times n$ and the non-zero elements are $\B{A}_{i,i}=-(x_{i+3}-x_{i+2})(x_{i+3}-x_{i+1})(x_{i+2}-x_{i+1})$, $\B{A}_{i,i+1}=(x_{i+3}-x_i)(x_{i+3}-x_{i+2})(x_{i+2}-x_i)$, $\B{A}_{i,i+2}=-(x_{i+3}-x_i)(x_{i+3}-x_{i+1})(x_{i+1}-x_i)$, and $\B{A}_{i,i+3}=(x_{i+1}-x_{i})(x_{i+2}-x_{i})(x_{i+2}-x_{i+1})$, for $i=1,\ldots,n-3$. Higher-order polynomials can be used in the null hypothesis by determining the appropriate constraint matrix in a similar fashion.  \newline

{\bf Example 3:} We assume the same setup as in model~\eqref{eq:RegMdl} where now $\{\B{x}_1,\ldots,\B{x}_n\}$ are $n$ distinct points in $\R^d$, for $d \ge 1$. Consider testing for goodness-of-fit of the linear model, i.e., test the hypothesis $H_0: \phi_0$ is affine (i.e., $\phi_0(\B x) = a + {\B b}^\top {\B x}$ for some $a \in \R$ and $\B b \in \R^d$). Define $\B \theta_0 := (\phi_0(\B x_1), \phi_0(\B x_2), \ldots, \phi_0(\B x_n))^\top \in \R^n$, and the model can be seen as a special case of~\eqref{eq:Mdl}. We want to test $H_0: \B \theta_0 \in \s$ where $\s$ is defined as in \eqref{eq:DefC} with $\B X$ being the $n \times (d+1)$ matrix with the $i$-th row $(1, \B x_i)$, for $i=1,\ldots,n$. We can consider $\I$ to be the cone of evaluations of all convex functions, i.e.,
\begin{equation}\label{eq:CvxCone}
\I = \{\B \theta \in \R^n: \psi(\B x_i) = \theta_i, \mbox{ where } \psi:\R^d \rightarrow \R \mbox{ is any convex function}\}.
\end{equation}
The set $\I$ is a closed convex cone in $\R^n$; see~\citet{SS11}. Then, $\D := -\I$ is the set of all vectors that are evaluations of concave functions at~the~data~points. 

We show that under certain assumptions, when $\phi_0$ is any smooth function that is not affine, the power of our test converges to 1; see Section~\ref{MultCvx} for the details and the computation of the test statistic. Again, the intuition behind this power property of the test statistic is that if a function is both convex and concave then it must be affine. 

Over the last two decades several tests for the goodness-of-fit of a parametric model have been proposed; see e.g., \citet{CoxEtAl88}, \citet{Azzalini93}, \citet{Eubank90}, \citet{Hardle93}, \citet{Fan01}, \citet{Stute97}, \citet{Guerre05}, \citet{Christensen10}, \citet{NVK10} and the references therein. Most tests use a nonparametric regression estimator and run into the problem of choosing tuning parameters, e.g., smoothing bandwidth(s). Our procedure does not involve the choice of any tuning parameter. Also, the critical values of most competing procedures need to be approximated using resampling techniques (e.g., the bootstrap) whereas the cut-off in our test can be computed exactly (via simulation) if the error distribution $G$ is assumed known (e.g., Gaussian). \newline

The above three examples demonstrate the usefulness of the test with the double cone alternative. In addition, we formulate a test for the linear versus partial linear model.   For example, we can test the significance of a single predictor while controlling for covariate effects, or we can test the null hypothesis that the response is linear in a single predictor, in the presence of parametrically-modeled covariates. We also provide a test suitable for the special case of additive effects, and a special test against a constant model.

It is worth mentioning that the problem of testing $H_0$ versus a closed convex cone $\I \supset \s$ is well studied; see~\citet{raubertas86} and \citet{rwd}, Chapter 2. Under the normal errors assumption, the null distribution of a likelihood ratio test statistic is that of a mixture of Beta random variables, where the mixing parameters are determined through simulations.
 
The paper is organized as follows: In Section~\ref{Prelim} we introduce some notation and definitions and describe the problem of testing $H_0$ versus a closed convex cone $\I \supset \s$. We describe our test statistic and state the main results about our testing procedure in Section~\ref{Proc}. In Section~\ref{Examples} we get back to the three examples discussed above and characterize the limiting behavior of the test statistic under $H_0$ and otherwise. Extensions of our procedure to weighted regression, partially linear and additive models, as well as testing for a constant function in multi-dimension, are discussed in Section~\ref{Ext}. In Section~\ref{Simul} we illustrate the finite sample behavior of our method and its variants using simulation and real data examples,  and compare it with other competing methods. The Appendix contains proofs of some of the results and other technical details.

\section{Preliminaries}\label{Prelim} 
A set $\C \subset \R^n$ is a {\it cone} if for all $\bt \in \C$ and $\lambda >0$, we have $\lambda \bt \in \C$. If $\C$ is a convex cone then $\alpha_1 \bt_1 + \alpha_2 \bt_2 \in \C$, for any positive scalars $\alpha_1,\alpha_2$ and any $\bt_1, \bt_2 \in \C$. We define the {\it projection} $\B \theta_{I}$ of $\B \theta_0 \in \R^n$ onto ${\cal I}$ as
\begin{equation*}
 \B \theta_{I} := \Pi({\B \theta}_0|\I) \equiv \argmin_{\B \theta \in \I} \|\B \theta_0 - \B \theta\|^2,
\end{equation*}
where $\|\cdot \|$ is the usual Euclidean norm in $\R^n$. The projection is uniquely defined by the two conditions:
\begin{equation}\label{eq:InProj}
	\langle \B \theta_{I}, \B \theta_{0} - \B \theta_{I} \rangle = 0, \quad \quad \langle \B \theta_{0} - \B \theta_{I}, \B \xi \rangle \le 0, \; \mbox{ for all } \B \xi \in \I,
\end{equation}
where for $\mathbf{a} = (a_1,\ldots, a_n)$, $\mathbf{b} = (b_1,\ldots, b_n)$, $\langle \mathbf{a}, \mathbf{b} \rangle = \sum_{i=1}^n a_i b_i$; see Theorem 2.2.1 of \citet{Bertsekas03}. From these it is clear that
\begin{equation}\label{eq:ProjC}
	 \langle \B \theta_{0} - \B \theta_{I}, \B \xi \rangle= 0,  \;\;\mbox{for all}\;\;  \B \xi \in \s.
\end{equation}
Similarly, the projection of $\B \theta_0$ onto $\D$ is
  $\B \theta_{D} := \Pi(\B \theta_0|\D) \equiv \argmin_{\B \theta \in \D} \|\B \theta_0 - \B \theta\|^2$.

Let 
\begin{equation*}
\Omega_I := \I \cap \s^{\perp} \qquad \mbox{and}\qquad \Omega_D := \D \cap \s^{\perp}, 
\end{equation*} 
where $\s^{\perp}$ refers to the orthogonal complement of $\s$. Then $\Omega_I$ and $\Omega_D$ are closed convex cones, and the projection of $\B{y}\in\R^n$ onto $\I$ is the sum of the projections onto $\Omega_I$ and $\s$. The cone $\Omega_I$ can alternatively be specified by a set of {\em generators}; that is, a set of vectors $\B{\delta}_1,\ldots,\B{\delta}_M$ in the cone so that
$$ \Omega_I=\left\{\B\theta\in\R^n:\B\theta=\sum_{j=1}^M b_j \B{\delta}_j, \;\;\mbox{for}\;\;b_j\geq 0,\;\;j=1,\ldots,M\right\}.$$
If the $m \times n$ matrix $\B{A}$ in~\eqref{eq:CvxPoly} has full row rank, then $M=m$ and the generators  of $\Omega_I$ are the columns of $\B{A}^\top(\B{A}\B{A}^\top)^{-1}$. Otherwise, Proposition~1 of \citet{meyer99} can be used to find the generators. The generators of $\Omega_D=\D\cap\s^{\perp}$ are $-\B{\delta}_1,\ldots,-\B{\delta}_M$.

Define the cone {\em polar to} $\I$ as
$\I^o = \left\{ \B{\rho}\in\R^n:\langle\B{\rho},\bt\rangle\leq0,\;\;\mbox{for all}\;\;\bt\in\I\right\}.$
Then, $\I^o$ is a convex cone orthogonal to $\s$.  We can similarly define $\D^o$, also orthogonal to $\s$.
For a proof of the following see \citet{meyer99}.
\begin{lemma}
The projection of $\B{y}\in\R^n$ onto $\I^o$ is the residual of the projection of $\B{y}$ onto $\I$ and vice-versa; also, the projection of $\B{y}$ onto $\D^o$ is the residual of the projection of $\B{y}$ onto $\D$ and vice-versa.
\label{projresid}
\end{lemma}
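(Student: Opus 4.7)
The plan is to recognize this as Moreau's decomposition for closed convex cones and to verify it directly from the characterization~\eqref{eq:InProj} of cone projections, which is already available to us.

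Write $\B y_I := \Pi(\B y | \I)$ and let $\B r := \B y - \B y_I$ denote the residual. The goal is to show that $\B r = \Pi(\B y | \I^o)$; the reverse statement will then follow by a symmetric argument. First I would show that $\B r \in \I^o$. By the second condition in~\eqref{eq:InProj}, applied with $\B \theta_0 = \B y$, we have $\langle \B y - \B y_I, \B \xi \rangle = \langle \B r, \B \xi \rangle \le 0$ for every $\B \xi \in \I$, which is precisely the defining condition for membership in the polar cone $\I^o$.

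Next I would verify that $\B r$ satisfies the characterization~\eqref{eq:InProj} for the projection of $\B y$ onto the closed convex cone $\I^o$. The residual of $\B r$ is $\B y - \B r = \B y_I$. The first condition requires $\langle \B r, \B y - \B r \rangle = \langle \B y - \B y_I, \B y_I \rangle = 0$, which is exactly the first equation in~\eqref{eq:InProj}. The second condition requires $\langle \B y_I, \B \rho \rangle \le 0$ for every $\B \rho \in \I^o$; but this is immediate from the definition of $\I^o$ together with $\B y_I \in \I$. Uniqueness of the projection then gives $\B r = \Pi(\B y | \I^o)$.

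For the converse direction, I would note that if $\B r^\star := \B y - \Pi(\B y | \I^o)$, the same argument with $\I$ and $\I^o$ interchanged shows that $\B r^\star$ is the projection of $\B y$ onto $(\I^o)^o$. Since $\I$ is a closed convex cone, the bipolar theorem gives $(\I^o)^o = \I$, so $\B r^\star = \Pi(\B y | \I) = \B y_I$, proving the reverse claim. Finally, since $\D = -\I$ is itself a closed convex cone (with polar $\D^o = -\I^o$), the identical argument applies verbatim to $\D$ and $\D^o$. There is no real obstacle here beyond bookkeeping; the one spot that could be skipped too lightly is invoking the bipolar theorem for the ``vice-versa,'' but in finite dimensions this is standard for closed convex cones.
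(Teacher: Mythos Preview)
Your argument is correct and is exactly the standard proof of Moreau's decomposition for closed convex cones. The paper does not actually prove this lemma; it simply cites \citet{meyer99}. Your verification from~\eqref{eq:InProj} is self-contained and complete, and the appeal to the bipolar theorem for the ``vice-versa'' direction is the right move (and standard in finite dimensions for closed convex cones, as you note).
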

We will assume the following two conditions for the rest of the paper:

\begin{itemize}
	\item[(A1)] $\s$ is the largest linear subspace contained in $\I$;	
	\item[(A2)] $\Omega_I \subset \D^o$, or equivalently, $\Omega_D \subset \I^o$.

\end{itemize}
Note that if (A1) holds, $\Omega_I$ does not contain a linear space (of dimension one or larger), and the intersection of $\Omega_I$ and $\Omega_D$ is the origin.  Assumption (A2) is needed for unbiasedness of our testing procedure.  For Example~1 of the Introduction, (A2) says that the projection of an increasing vector $\bt$ onto the decreasing cone $\D$ is a constant vector, and vice-versa.   To see that this holds, consider $\bt\in\Omega_I$, i.e., $\sum_{i=1}^n\theta_i=0$ and $\theta_1\leq\cdots\leq\theta_n$, and $\pmb{\rho}\in\D$, i.e., $\rho_1\geq\cdots \geq \rho_n$.  Then defining the partial sums $\Theta_k = \sum_{i=1}^k\theta_i$, we have \[\sum_{i=1}^n\theta_i\rho_i = \theta_1\rho_1 + \sum_{i=2}^n\rho_i(\Theta_i-\Theta_{i-1})\\
= \sum_{i=1}^{n-1} (\rho_i-\rho_{i+1})\Theta_i 
\]
which is negative because $\Theta_i\leq 0$ for $i=1,\ldots,n-1$.  Then by~(\ref{eq:InProj}) the projection of $\bt$ onto $\D$ is the origin.

\subsection{Testing}
We start with a brief review for testing $H_0: \B \theta_0 \in \s$ versus $H_1: \B \theta_0 \in \I \backslash \s$, under the normal errors assumption. The log-likelihood function (up to a constant) is 
$$\ell(\B \theta,\sigma^2) = -\frac{1}{2 \sigma^2} \|\mathbf{Y} - \B \theta\|^2 - \frac{1}{2} n \log \sigma^2.$$ 
After a bit of simplification, we get the likelihood ratio statistic to be $$ \Lambda_I = 2 \left\{ \max_{\bt \in \I, \sigma^2 >0} \ell(\bt,\sigma^2) - \max_{\bt \in \s, \sigma^2 >0} \ell(\bt,\sigma^2) \right \} = n \log \left(\frac{\|\mathbf{Y} - \hat{\B \theta}_S\|^2}{\|\mathbf{Y} - \hat{\B \theta}_I\|^2} \right), $$ where $\hat{\B \theta}_I = \Pi(\mathbf{Y}|\I)$ and $\hat{\B \theta}_S = \Pi(\mathbf{Y}|\s)$. An equivalent test is to reject $H_0$ if 
\begin{eqnarray}\label{eq:TestIso}
T_I (\mathbf{Y})  := \frac{\|\hat{\B \theta}_S - \hat{\B \theta}_I\|^2}{\|\mathbf{Y} - \hat{\B \theta}_S\|^2} 
= \frac{SSE_0-SSE_I}{SSE_0}
\end{eqnarray} 
is large,
where $SSE_0 := \|\mathbf{Y} - \hat{\B \theta}_S\|^2$ is the squared length of the residual of $\mathbf{Y}$ onto $\s$, and $SSE_I := \|\mathbf{Y} - \hat{\B \theta}_I\|^2$ is the squared length of the residual of $\mathbf{Y}$ onto $\I$.   Further,
$SSE_0-SSE_I =\| \Pi(\mathbf{Y}|\Omega_I)\|^2$,
by orthogonality of $\Omega_I$ and $\s$.

Since the null hypothesis is composite, the dependence of the test statistic on parameters under the hypotheses must be assessed. The following result shows that the distribution of $T_I$ is invariant to translations in $\s$ as well as invariant to scale.

\begin{lemma}\label{lem:DistFreeH_0}  For any $\B{s}\in\s$, $\B\epsilon\in\R^n$,
\begin{equation}\label{eq:Invar}
	\Pi(\B\epsilon + \B s|\I) = \Pi(\B\epsilon|\I) + \B s, \quad \mbox{ and } \quad \Pi(\B\epsilon + \B s|\s) = \Pi(\B\epsilon|\s) + \B s.
\end{equation}
Next, consider model~\eqref{eq:Mdl} and suppose that $\B \theta_0 \in \s$. Then the distribution of $T_I(\mathbf{Y})$ is the same as that of $T_I(\B \eps)$.
\end{lemma}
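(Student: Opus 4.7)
The plan is to establish the two translation identities in \eqref{eq:Invar} first, and then to combine them with positive homogeneity of the cone to show that $T_I(\mathbf{Y})$ and $T_I(\B\epsilon)$ are in fact equal as random variables (hence equal in distribution).

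The key geometric observation underlying the first identity is that $\I$ is closed under addition of elements of $\s$. Indeed, because $\s \subset \I$ by assumption (A1), for any $\B v \in \I$ and $\B s \in \s$ one has $\frac{1}{2}\B v + \frac{1}{2}\B s \in \I$ by convexity, and then $\B v + \B s \in \I$ by the cone property. Applying the same reasoning to $-\B s \in \s$ yields $\I - \B s \subset \I$, so $\I + \B s = \I$. This in turn gives the change of variables identity
\begin{equation*}
\Pi(\B\epsilon + \B s \mid \I) \;=\; \argmin_{\B\theta \in \I} \|(\B\epsilon + \B s) - \B\theta\|^2 \;=\; \argmin_{\B\theta' \in \I-\B s} \|\B\epsilon - \B\theta'\|^2 + \B s \;=\; \Pi(\B\epsilon \mid \I) + \B s,
\end{equation*}
by substituting $\B\theta' = \B\theta - \B s$ and using $\I - \B s = \I$. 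The analogous identity for $\s$ is standard: since $\s$ is a linear subspace and $\B s \in \s$, the map $\B\theta \mapsto \B\theta - \B s$ is a bijection of $\s$ onto itself, which yields $\Pi(\B\epsilon + \B s \mid \s) = \Pi(\B\epsilon \mid \s) + \B s$.

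For the second statement, assume $\B\theta_0 \in \s$ and write $\mathbf{Y} = \B\theta_0 + \sigma \B\epsilon$. Applying the first identity with $\B s = \B\theta_0$ (and replacing $\B\epsilon$ by $\sigma\B\epsilon$),
\begin{equation*}
\hat{\B\theta}_I \;=\; \Pi(\mathbf{Y} \mid \I) \;=\; \Pi(\sigma \B\epsilon \mid \I) + \B\theta_0 \;=\; \sigma\, \Pi(\B\epsilon \mid \I) + \B\theta_0,
\end{equation*}
where the last equality uses positive homogeneity $\Pi(c \mathbf{z} \mid \I) = c\,\Pi(\mathbf{z} \mid \I)$ for $c>0$, which follows immediately from the cone structure of $\I$. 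The same argument gives $\hat{\B\theta}_S = \sigma\,\Pi(\B\epsilon \mid \s) + \B\theta_0$. Subtracting,
\begin{equation*}
\hat{\B\theta}_S - \hat{\B\theta}_I \;=\; \sigma\bigl[\Pi(\B\epsilon\mid\s) - \Pi(\B\epsilon\mid\I)\bigr], \qquad \mathbf{Y} - \hat{\B\theta}_S \;=\; \sigma\bigl[\B\epsilon - \Pi(\B\epsilon\mid\s)\bigr],
\end{equation*}
so the factors of $\sigma$ and $\B\theta_0$ cancel in the ratio defining $T_I$, giving $T_I(\mathbf{Y}) = T_I(\B\epsilon)$ identically (not just in distribution).

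The only nontrivial step is the closure of $\I$ under translation by $\s$, but this is a short convexity plus cone argument as indicated. Everything else is a direct change-of-variables calculation, so I do not expect any real obstacle.
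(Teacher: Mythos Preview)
Your argument is correct, and its overall structure matches the paper's: establish translation invariance of the projections onto $\I$ and $\s$, then positive homogeneity, then observe that both $\B\theta_0$ and $\sigma$ cancel in the ratio defining $T_I$, so $T_I(\mathbf{Y})=T_I(\B\eps)$ identically.

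The one methodological difference lies in how you prove $\Pi(\B\epsilon+\B s\mid\I)=\Pi(\B\epsilon\mid\I)+\B s$. The paper verifies directly that $\Pi(\B\epsilon\mid\I)+\B s$ satisfies the two characterizing conditions~\eqref{eq:InProj} for the projection of $\B\epsilon+\B s$ onto $\I$, using in particular~\eqref{eq:ProjC} (i.e., $\langle\B\epsilon-\Pi(\B\epsilon\mid\I),\B s\rangle=0$ for $\B s\in\s$). You instead prove the set identity $\I+\B s=\I$ from convexity plus the cone property and then do a change of variables in the $\argmin$. Both are equally short; your route is perhaps a touch more geometric and avoids invoking~\eqref{eq:InProj} and~\eqref{eq:ProjC} explicitly, while the paper's route has the advantage of working verbatim for any closed convex set containing $\s$ (not just a cone), since it never uses positive scaling in that step. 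For the scale invariance and the final cancellation, the two arguments are essentially identical.
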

\begin{proof}
To establish the first of the assertions in~\eqref{eq:Invar} it suffices to show that $\Pi(\B\epsilon|\I) + \B s$ satisfies the necessary and sufficient conditions~(\ref{eq:InProj}) for $\Pi(\B\epsilon + \B s|\I)$. Clearly, $\Pi(\B\epsilon|\I) + \B s \in \I$ and $$ \langle \B\epsilon + \B s - (\Pi(\B\epsilon|\I) + \B s), \B \xi \rangle =  \langle \B\epsilon - \Pi(\B\epsilon|\I), \B \xi \rangle \le 0,\qquad \mbox{ for all $\B \xi \in \I$}.$$ Also, 
$$ \langle \Pi(\B\epsilon|\I) + \B s, \B\epsilon+ \B s - (\Pi(\B\epsilon|\I) + \B s)   \rangle =  \langle \Pi(\B\epsilon|\I) + \B s,\B\epsilon  - \Pi(\B\epsilon|\I)  \rangle = 0,$$ as $\langle \Pi(\B\epsilon|\I), \B\epsilon  - \Pi(\B\epsilon|\I)  \rangle = 0$, and $\langle \B s, \B\epsilon  - \Pi(\B\epsilon|\I)  \rangle = 0$ (as $\pm \B s \in \s$). The second assertion in \eqref{eq:Invar} can be established similarly, and in fact, more easily. Also, it is easy to see that, for any $\B\epsilon \in \R^n$,
\begin{equation*}
	\Pi(\sigma\B\epsilon|\I) = \sigma \Pi(\B\epsilon|\I), \quad \mbox{ and } \quad \Pi(\sigma\B\epsilon|\s) = \sigma \Pi(\B\epsilon|\s).
\end{equation*}
Now, using~\eqref{eq:TestIso}, 
$$ T_I(\mathbf{Y}) = \frac{\|\Pi(\sigma \B \eps|\s) - \Pi(\sigma \B \eps|\I)\|^2}{\|\sigma \B \eps - \Pi(\sigma \B \eps|\s)\|^2} = \frac{\|\Pi( \B \eps|\s) - \Pi(\B \eps|\I)\|^2}{\|\B \eps - \Pi(\B \eps|\s)\|^2} = T_I(\B \eps).
$$ This completes the proof.
\end{proof}

\section{Our procedure}\label{Proc}
To test the hypothesis $H_0: \B \theta_0 \in \s$ versus $H_1: \B \theta_0 \in \I \cup\D \backslash \s$, 
we project $\mathbf{Y}$ separately on $\I$  and $\D$ to obtain $\hat{\B \theta}_{I} := \Pi(\mathbf{Y}|\I)$ and $\hat{\B \theta}_{D} := \Pi(\mathbf{Y}|\D)$, respectively. Let $T_I(\mathbf{Y})$ be defined as in \eqref{eq:TestIso} and $T_D(\mathbf{Y})$ defined similarly, with $\hat{\B \theta}_{D}$ instead of $\hat{\B \theta}_{I}$. We define our test statistic (which is equivalent to the likelihood ratio statistic under normal errors) as
\begin{eqnarray}\label{eq:TestS}
	T(\mathbf{Y}) & := & \max \left\{T_I(\mathbf{Y}),T_D(\mathbf{Y})\right\} \nonumber \\ 
	& = & \frac{\max  \left\{ \|\Pi(\mathbf{Y}|\s) - \Pi(\mathbf{Y}|\I)\|^2, \|\Pi(\mathbf{Y}|\s) - \Pi(\mathbf{Y}|\D)\|^2 \right\}}{\|\mathbf{Y} - \Pi(\mathbf{Y}|\s)\|^2}.
\end{eqnarray}
We reject $H_0$ when $T$ is large. 

\begin{lemma}\label{lem:Level}
Consider model~\eqref{eq:Mdl}. Suppose that $\B \theta_0 \in \s$. Then the distribution of $T(\mathbf{Y})$ is the same as that of $T(\B \eps)$, i.e., 
\begin{equation}\label{eq:DistTestS}
T(\mathbf{Y}) = \frac{\max  \left\{ \|\Pi(\B \eps|\s) - \Pi(\B \eps|\I)\|^2, \|\Pi(\B \eps|\s) - \Pi(\B \eps|\D)\|^2 \right\}}{\|\B \eps - \Pi(\B \eps|\s)\|^2} = T(\B \eps).
\end{equation}
\end{lemma}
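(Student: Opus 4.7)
The plan is to piggyback directly on Lemma~\ref{lem:DistFreeH_0}, since the only new ingredient in $T(\mathbf{Y})$ compared to $T_I(\mathbf{Y})$ is the projection onto $\D$. My first step is to observe that the translation and scale invariance established in the proof of Lemma~\ref{lem:DistFreeH_0} holds for $\D$ as well. The argument given there used only two facts about $\I$: that it is a closed convex cone, and that $\s \subseteq \I$ (so that $\pm \B s \in \I$ and $\langle \B s, \B\epsilon - \Pi(\B\epsilon|\I)\rangle = 0$). Both hold for $\D = -\I$, since $\s$ is a linear space (hence $-\s = \s \subseteq \D$). So the same verification of the two conditions in~\eqref{eq:InProj} yields, for every $\B s \in \s$ and $\B\epsilon \in \R^n$,
\begin{equation*}
\Pi(\B\epsilon + \B s|\D) = \Pi(\B\epsilon|\D) + \B s, \qquad \Pi(\sigma \B\epsilon|\D) = \sigma\,\Pi(\B\epsilon|\D).
\end{equation*}

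Next, under $H_0$ we have $\mathbf{Y} = \B\theta_0 + \sigma\B\epsilon$ with $\B\theta_0 \in \s$. Applying translation invariance with $\B s = \B\theta_0$ and then scale invariance gives, for each $\K \in \{\I,\D,\s\}$,
\begin{equation*}
\Pi(\mathbf{Y}|\K) = \Pi(\sigma\B\epsilon|\K) + \B\theta_0 = \sigma\,\Pi(\B\epsilon|\K) + \B\theta_0.
\end{equation*}
Substituting into the definition~\eqref{eq:TestS}, the $\B\theta_0$ terms cancel out of each difference $\Pi(\mathbf{Y}|\s) - \Pi(\mathbf{Y}|\K)$ for $\K \in \{\I,\D\}$ and out of $\mathbf{Y} - \Pi(\mathbf{Y}|\s) = \sigma(\B\epsilon - \Pi(\B\epsilon|\s))$. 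A common factor of $\sigma^2$ then appears in both the numerator and the denominator of $T(\mathbf{Y})$, and cancels, leaving exactly $T(\B\epsilon)$.

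There is no real obstacle here — the work was done in Lemma~\ref{lem:DistFreeH_0}. The only minor point worth flagging in the write-up is the remark that the proof of the earlier lemma transfers to $\D$ without change (one could alternatively note $\Pi(\B x|\D) = -\Pi(-\B x|\I)$ to derive the $\D$-invariance as a corollary rather than repeating the argument). Once the invariances for $\I, \D, \s$ are in hand, the conclusion $T(\mathbf{Y}) = T(\B\epsilon)$ is a one-line calculation, so the stated distributional equality follows immediately.
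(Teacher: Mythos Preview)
Your proof is correct and follows exactly the paper's approach: the paper's proof is a one-sentence appeal to the translation and scale invariance established in Lemma~\ref{lem:DistFreeH_0}, noting that the same technique applies to $T$. You have simply spelled out the details (in particular, why the argument transfers to $\D$), which is a faithful elaboration of what the paper sketches.
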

\begin{proof}
The distribution of $T$ is also invariant to translations in $\s$ and scaling, so if $\B \theta_0 \in \s$, using the same technique as in Lemma~\ref{lem:DistFreeH_0}, we have the desired result. 
\end{proof}
Suppose that $T(\B \eps)$ has distribution function $H_n$. Then, we reject $H_0$ if
\begin{equation}\label{eq:Test}
T(\mathbf{Y}) > c_\alpha := H_n^{-1}(1- \alpha),
\end{equation} 
where $\alpha \in [0,1]$ is the desired level of the test. The distribution $H_n$ can be approximated, up to any desired precision, by Monte Carlo simulations using~\eqref{eq:DistTestS} if $G$ is assumed known; hence, the test procedure described in~\eqref{eq:Test} has exact level $\alpha$, under $H_0$, for any $\alpha \in [0,1]$.

If $G$ is completely unknown, we can approximate $G$ by $G_n$, where $G_n$ is the empirical distribution of the standardized residuals, obtained under $H_0$. Here, by the residual vector we mean $\tilde {\B r} := \mathbf{Y} - \Pi(\mathbf{Y}|\s)$ and by the standardized residual we mean $\hat {\B r} := [\tilde {\B r} - (\B e^\top  \tilde {\B r}/n) \B e]/\sqrt{\mbox{Var}(\hat {\B r})}$, where $\B e:=(1,\ldots, 1)^\top \in \R^n$. Thus, $H_n$ can be approximated by using Monte Carlo simulations where, instead of $G$, we draw i.i.d.~(conditional on the given data) samples from $G_n$.

In fact, the following theorem, proved in Section~\ref{Proofs}, shows that if $G$ is assumed completely unknown, we can bootstrap from any consistent estimator $\hat G_n$ of $G$ and still consistently estimate the critical value of  our test statistic. Note that the conditions required for Theorem~\ref{thm:CutOff} to hold are indeed very minimal and will be satisfied for any reasonable bootstrap scheme, and in particular, when bootstrapping from the empirical distribution of the standardized residuals. 

\begin{thm}\label{thm:CutOff}
Suppose that $\hat G_n$ is a sequence of distribution functions such that $\hat G_n \rightarrow G$ a.s.~and $\int x^2 d\hat G_n(x) \rightarrow  \int x^2 dG(x)$ a.s. Also, suppose that the sequence $\E (n \|\B \eps - \Pi(\B \eps|\s)\|^{-2})$ is bounded. Let $\hat {\B \eps}=(\hat \eps_1,\ldots,\hat \eps_n)$ where $\hat \eps_1,\ldots, \hat \eps_n$ are (conditionally) i.i.d.~$\hat G_n$ and let $D_n$ denote the distribution function of $T(\hat {\B \eps})$, conditional on the data. Define the Levy distance between the distributions $H_n$ and $D_n$ as $$d_L(H_n,D_n) := \inf\{\eta >0: D_n(x - \eta) - \eta \le H_n(x) \le D_n(x + \eta) + \eta, \mbox{ for all } x \in \R\}.$$ Then,
\begin{equation}\label{eq:LevyDist}
	d_L(H_n, D_n) \to 0\;\; \mbox{ a.s.} 
\end{equation}
\end{thm}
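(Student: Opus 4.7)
The plan is to construct a coupling of $\B\eps \sim G^{\otimes n}$ (taken independent of the data $\samp$) and $\hat{\B\eps}\sim \hat G_n^{\otimes n}$ (conditionally on $\samp$) on a common probability space such that $|T(\B\eps)-T(\hat{\B\eps})|\to 0$ in conditional probability, a.s.\ in $\samp$. Because $T$ takes values in $[0,1]$ and the marginal laws of the coupled pair are $H_n$ and $D_n$, the standard implication ``coupling with $|X-Y|\to 0$ in probability $\Rightarrow$ $d_L$ of the laws $\to 0$'' then yields the conclusion $d_L(H_n,D_n)\to 0$ a.s.

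First I would note that $\hat G_n\Rightarrow G$ together with $\int x^2\,d\hat G_n\to\int x^2\,dG$ is equivalent to $2$-Wasserstein convergence $W_2(\hat G_n,G)\to 0$ a.s., which on $\R$ is the same as $\int_0^1(G^{-1}(u)-\hat G_n^{-1}(u))^2\,du\to 0$ a.s. Drawing $U_1,\dots,U_n$ i.i.d.~Uniform$(0,1)$ independent of $\samp$ and setting $\eps_i:=G^{-1}(U_i)$, $\hat\eps_i:=\hat G_n^{-1}(U_i)$ gives the desired coupling, with $\ce{(\eps_1-\hat\eps_1)^2}\to 0$ a.s.; conditional Markov then yields $n^{-1}\|\B\eps-\hat{\B\eps}\|^2\to 0$ in conditional probability. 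Writing $D(\B z):=\|\B z-\Pi(\B z|\s)\|^2$, the SLLN plus $\|\Pi(\B\eps|\s)\|^2=O_{\P}(1)$ shows $D(\B\eps)/n$ is bounded above, and the hypothesis $\E(n/D(\B\eps))=O(1)$ supplies the tightness of $n/D(\B\eps)$. A triangular-array LLN for $\{\hat\eps_i^2\}_{i=1}^n$, using the uniform integrability of $x^2$ under $\hat G_n$ that comes from the second-moment hypothesis, delivers $n^{-1}D(\hat{\B\eps})\to 1$ in conditional probability a.s.

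Next I would exploit the non-expansiveness of the projections onto the closed convex sets $\s$, $\I$, and $\D$: the reverse triangle inequality gives $|\sqrt{D(\B\eps)}-\sqrt{D(\hat{\B\eps})}|\leq 2\|\B\eps-\hat{\B\eps}\|$, with the analogous bound holding for the ``numerators'' $\sqrt{N_\I(\cdot)}:=\|\Pi(\cdot|\I)-\Pi(\cdot|\s)\|$ and $\sqrt{N_\D(\cdot)}$. Using $N_\I\leq D$ (equivalently, $T_I\in[0,1]$) and the identity $|a-b|=|\sqrt a-\sqrt b|(\sqrt a+\sqrt b)$, a short manipulation of the ratio $T_I=N_\I/D$ yields
\[
|T_I(\B\eps)-T_I(\hat{\B\eps})|\;\leq\;C\cdot\frac{\|\B\eps-\hat{\B\eps}\|}{\sqrt{n}}\cdot\sqrt{\frac{n}{D(\hat{\B\eps})}}\cdot\bigl(1+\sqrt{D(\B\eps)/D(\hat{\B\eps})}\bigr),
\]
whose right-hand side is $o_{\P}(1)\cdot O_{\P}(1)\cdot O_{\P}(1) = o_{\P}(1)$ in conditional probability by the preceding paragraph. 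The same bound applies to $T_D$, so $|T(\B\eps)-T(\hat{\B\eps})| = |\max\{T_I,T_D\}(\B\eps)-\max\{T_I,T_D\}(\hat{\B\eps})|\to 0$ in conditional probability a.s., and the proof is complete.

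The main obstacle is the triangular-array LLN for $\{\hat\eps_i^2\}$: only the second moment of $\hat G_n$ is assumed to converge, not any higher moment, so one must truncate at a level that depends on $\samp$ and combine weak convergence of $\hat G_n$ with the uniform integrability of $x^2$ under $\hat G_n$ to rule out a pathological bootstrap denominator. The moment hypothesis $\E(n/\|\B\eps-\Pi(\B\eps|\s)\|^2)=O(1)$ plays the matching role on the genuine-error side, without requiring structural assumptions about $\s$ (e.g., fixed dimension), and is what keeps the final ratio bound uniformly tight.
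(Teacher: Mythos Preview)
Your argument is correct and uses the same quantile coupling $\eps_i=G^{-1}(U_i)$, $\hat\eps_i=\hat G_n^{-1}(U_i)$ as the paper, together with the $1$-Lipschitz property of the three projections. The difference is in how the ratio is controlled. You bound $|T_I(\B\eps)-T_I(\hat{\B\eps})|$ with the \emph{bootstrap} denominator $D(\hat{\B\eps})$ appearing on the right, which forces you to establish $n^{-1}D(\hat{\B\eps})$ is bounded away from zero in conditional probability (the triangular-array LLN you flag as the main obstacle), and then conclude via ``coupling $+$ convergence in probability $\Rightarrow d_L\to 0$.'' The paper instead bounds the square root,
\[
|T^{1/2}(\B\eps)-T^{1/2}(\hat{\B\eps})|\le \frac{4\|\B\eps-\hat{\B\eps}\|}{\|\B\eps-\Pi(\B\eps|\s)\|},
\]
keeping the \emph{true-error} denominator $W=\sqrt{D(\B\eps)}$, and then takes expectations: since $T\in[0,1]$, $\E|T(\B\eps)-T(\hat{\B\eps})|\le 2\E|T^{1/2}(\B\eps)-T^{1/2}(\hat{\B\eps})|$, and Cauchy--Schwarz with the hypothesis $\E(n/W^2)=O(1)$ and $\E(n^{-1}\|\B\eps-\hat{\B\eps}\|^2)=W_2^2(\hat G_n,G)\to 0$ a.s.\ gives $d_1(H_n,D_n)\to 0$ a.s., hence $d_L\le\sqrt{d_1}\to 0$. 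This bypasses the triangular-array LLN and any analysis of $D(\hat{\B\eps})$ entirely: the moment assumption on $n/W^2$ is used in expectation rather than merely for tightness, which is exactly what makes the shorter route available.
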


It can also be shown that for $\B{\theta}_0\in \I\cup\D$ the test is {\it unbiased}, that is, the power is at least as large as the test size. The proof of the following result can be found in Section~\ref{Unbiasedness}.
\begin{thm}\label{unbiasedness}
Let $\mathbf{Y}_0 := \B{s} + \sigma \B \eps$, for $\B{s}\in\s$, and the components of $\B \eps$ are i.i.d.~$G$. Suppose further that $G$ is a symmetric (around 0) distribution. Choose any $\bt\in\Omega_I$ and let $\mathbf{Y}_1 :=\mathbf{Y}_0+\bt$. Then for any $a>0$,  
\begin{equation}\label{eq:Unbiasedness}
	\P\left(T(\mathbf{Y}_1)>a\right)\geq \P\left(T(\mathbf{Y}_0)>a\right).
\end{equation}
\end{thm}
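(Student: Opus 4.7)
The proof follows a reduction-then-symmetry strategy. By Lemma~\ref{lem:Level}, $T$ is invariant under translations in $\s$ and under positive scaling. Since $\bt\perp\s$ (so $\Pi(\mathbf{Y}_1|\s)=\Pi(\mathbf{Y}_0|\s)$) and $\bt/\sigma\in\Omega_I$ by the cone property, the claim reduces to proving
\begin{equation*}
\P\left(T(\B\eps+\bt)>a\right) \ge \P\left(T(\B\eps)>a\right)\qquad\text{for all }\bt\in\Omega_I.
\end{equation*}

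The crucial observation is $T(-\mathbf{Y})=T(\mathbf{Y})$, which follows from $\D=-\I$: negating $\mathbf{Y}$ swaps the projections onto $\I$ and $\D$, so $T_I$ and $T_D$ interchange while their maximum is preserved. Combined with the distributional symmetry $\B\eps\stackrel{d}{=}-\B\eps$ afforded by symmetric $G$, this yields
\begin{equation*}
\P\left(T(\B\eps+\bt)>a\right) = \P\left(T(\B\eps-\bt)>a\right) = \tfrac{1}{2}\,\E\!\left[\mathbf{1}_{T(\B\eps+\bt)>a}+\mathbf{1}_{T(\B\eps-\bt)>a}\right].
\end{equation*}
Assumption~(A2) contributes two concrete facts. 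First, using $\Omega_D\subset\I^o\cap\s^\perp=\Omega_I^{o,\s^\perp}$ together with Moreau's decomposition in $\s^\perp$, one obtains $\|\Pi(\mathbf{Y}|\Omega_I)\|^2+\|\Pi(\mathbf{Y}|\Omega_D)\|^2 \le \|\mathbf{Y}-\Pi(\mathbf{Y}|\s)\|^2$, so $T_I+T_D\le 1$ and the events $\{T_I>a\}$, $\{T_D>a\}$ are disjoint for $a\ge 1/2$. Second, since $\bt\in\Omega_I\subset\I$ and $\I$ is a convex cone, $\Pi(\mathbf{Y}|\I)+\bt$ is a feasible point for projecting $\mathbf{Y}+\bt$ onto $\I$, yielding the contraction $SSE_I(\mathbf{Y}+\bt)\le SSE_I(\mathbf{Y})$; symmetrically $-\bt\in\Omega_D$ gives $SSE_D(\mathbf{Y}-\bt)\le SSE_D(\mathbf{Y})$.

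The main obstacle is converting these pointwise residual inequalities into the probabilistic bound on the double-cone test. A naive pointwise inequality $\mathbf{1}_{T(\mathbf{e}+\bt)>a}+\mathbf{1}_{T(\mathbf{e}-\bt)>a}\ge 2\mathbf{1}_{T(\mathbf{e})>a}$ can be shown to fail on positive-measure sets of $\mathbf{e}$, so the argument must operate at the level of probabilities. I would decompose, via inclusion-exclusion, $\P(T>a) = \P(T_I>a) + \P(T_D>a) - \P(T_I>a,\,T_D>a)$, use the identity $T_D(\mathbf{Y})=T_I(-\mathbf{Y})$ combined with $-\B\eps\stackrel{d}{=}\B\eps$ to convert every $T_D$-probability at $\mathbf{Y}_1$ into a $T_I$-probability at $\B\eps\mp\bt$, and then exploit the directional monotonicity of the one-sided cone test together with the iid-symmetric structure of $\B\eps$---which affords the stronger invariance under sign-flipping \emph{any} subset of coordinates---to bound the various terms. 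The iid hypothesis (not just symmetry) is indispensable: with only $\B\eps\stackrel{d}{=}-\B\eps$, explicit two-point distributions can be constructed for which the unbiasedness inequality fails, so it is the richer symmetry group of a product law that ensures the cross-terms cancel with the correct sign. The regime $a<1/2$ requires additional care with the non-empty intersection $\{T_I>a\}\cap\{T_D>a\}$, for which the bound $T_I+T_D\le 1$ from~(A2) is the key input.
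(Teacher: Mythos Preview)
Your write-up never leaves the planning stage: after the correct reductions and the symmetrization $\P(T(\B\eps+\bt)>a)=\P(T(\B\eps-\bt)>a)$, you announce an inclusion--exclusion scheme and an appeal to coordinate-wise sign-flips, but nothing is actually carried out. As it stands there is no proof, only a sketch of intentions. Moreover, the two ingredients you single out as decisive turn out to be red herrings. The paper uses neither inclusion--exclusion nor the product structure of the $\eps_i$; only the global symmetry $\B\eps\stackrel{d}{=}-\B\eps$ is invoked, so your assertion that ``the iid hypothesis (not just symmetry) is indispensable'' is incorrect for this argument.

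What you are missing is a sharper pointwise inequality on the \emph{numerator}. Your observation $SSE_I(\mathbf{Y}+\bt)\le SSE_I(\mathbf{Y})$ uses only $\bt\in\I$ and is too weak. The paper instead proves that $\|\Pi(\mathbf{Y}_0+\bt|\Omega_I)\|^2\ge\|\Pi(\mathbf{Y}_0|\Omega_I)\|^2$ for every realization; this is where (A2) enters, because $-\bt\in\Omega_D\subset\I^o$ forces $\Pi(\mathbf{Y}_1|\I^o)-\bt\in\I^o$, and then $\|\Pi(\mathbf{Y}_1|\I)\|^2=\|\mathbf{Y}_0-(\Pi(\mathbf{Y}_1|\I^o)-\bt)\|^2\ge\|\mathbf{Y}_0-\Pi(\mathbf{Y}_0|\I^o)\|^2=\|\Pi(\mathbf{Y}_0|\I)\|^2$. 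With this lemma in hand, the paper simply conditions on the event $A=\{\,\|\Pi(\mathbf{Y}_0|\Omega_I)\|^2\ge\|\Pi(\mathbf{Y}_0|\Omega_D)\|^2\,\}$, which has probability $1/2$ by the global symmetry, and compares the numerator of $T(\mathbf{Y}_1)$ to that of $T(\mathbf{Y}_0)$ on $A$ and on $A^c$ separately. No separate treatment of $a<1/2$ is needed, and the fact $T_I+T_D\le 1$ (which you derive correctly) plays no role.
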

It is completely analogous to show that the theorem holds for $\bt \in\Omega_D$.  The unbiasedness of the test now follows from the fact that if $\bt_0 \in \I \cup \D \backslash \s$, then $\bt_0 =\B{s}+\bt$ for $\B{s}\in\s$ and either $\bt \in \Omega_I$ or $\bt \in \Omega_D$. In both cases, for $\mathbf{Y} = \bt_0 + \sigma \B \eps$ and $a := H_n^{-1}(1 -\alpha) > 0$, for some $\alpha \in (0,1)$, by~\eqref{eq:Unbiasedness} we have $\P(T(\mathbf{Y}) > a) \ge 1 - (1-\alpha) = \alpha.$

\subsection{Asymptotic power of the test}\label{AsymPower}
We no longer assume that $\bt_0$ is in the double cone unless explicitly mentioned otherwise. We show that, under mild assumptions, the power of the test goes to one, as the sample size increases, if $H_0$ is not true. For convenience of notation, we suppress the dependence on $n$ and continue using the notation introduced in the previous sections. For example we still use $\bt_0, \hat \bt_I, \hat \bt_D, \hat \bt_S$, etc.~although as $n$ changes these vectors obviously change. A intuitive way of visualizing $\bt_0$, as $n$ changes, is to consider $\bt_0$ as the evaluation of a fixed function $\phi_0$ at $n$ points as in \eqref{eq:RegMdl}.

We assume that for any $\bt_0 \in \I$ the projection $\hat \bt_I = \Pi(\mathbf{Y}|\I)$ is consistent in estimating $\bt_0$, under the squared error loss, i.e., for $\mathbf{Y}$ as in model~\eqref{eq:Mdl} and $\bt_0\in\I$, 
\begin{equation}\label{eq:ConsEst}
\|\hat \bt_I  -\bt_0\|^2 = o_p(n).
\end{equation}
The following lemma shows that even if $\bt_0$ does not lie in $\I$,~\eqref{eq:ConsEst} implies that the projection of the data onto $\I$ is close to the projection of $\bt_0$ onto $\I$; see Section~\ref{Proofs} for the proof.
\begin{lemma}\label{lem:conv} 
Consider model~\eqref{eq:Mdl} where now $\bt_0$ is any point in $\R^n$. Let $\bt_I$ be the projection of $\bt_0$ onto $\I$.  If~\eqref{eq:ConsEst} holds,  then $$  \| \hat{\bt}_I -\bt_I\|^2 = o_p(n). $$
Similarly, let $\bt_D$ and $\hat\bt_D$ be the projections of $\bt_0$ and $\mathbf{Y}$ onto $\D$, respectively. Then, if~\eqref{eq:ConsEst} holds, $  \| \hat{\bt}_D -\bt_D\|^2 = o_p(n).$
\end{lemma}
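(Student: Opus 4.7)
The plan is to bring hypothesis~\eqref{eq:ConsEst} to bear by comparing the random projection $\hat\bt_I=\Pi(\mathbf{Y}|\I)$ with a simpler ``well-specified'' projection. Introduce the auxiliary vector and its projection
\[
\tilde{\mathbf{Y}} := \bt_I + \sigma\B\eps, \qquad \tilde\bt_I := \Pi(\tilde{\mathbf{Y}}|\I).
\]
Since $\bt_I\in\I$ by construction, $\tilde{\mathbf{Y}}$ is an instance of model~\eqref{eq:Mdl} whose truth lies in $\I$, so~\eqref{eq:ConsEst} applies directly and gives $\|\tilde\bt_I - \bt_I\|^2 = o_p(n)$. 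By the triangle inequality it then suffices to show $\|\hat\bt_I - \tilde\bt_I\|^2 = o_p(n)$.

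For the perturbation term I would use firm non-expansiveness of the projection onto the closed convex set $\I$: pairing the variational inequality in~\eqref{eq:InProj} for $\hat\bt_I$ (tested at the feasible point $\tilde\bt_I$) with its counterpart for $\tilde\bt_I$ (tested at $\hat\bt_I$) and summing yields
\[
\|\hat\bt_I - \tilde\bt_I\|^2 \le \langle \mathbf{Y}-\tilde{\mathbf{Y}},\,\hat\bt_I - \tilde\bt_I\rangle = \langle \bt_0 - \bt_I,\,\hat\bt_I - \tilde\bt_I\rangle.
\]
The crux is to exploit that $\bt_I$ is a projection onto a \emph{cone}: property~\eqref{eq:InProj} gives both $\langle \bt_0-\bt_I,\bt_I\rangle = 0$ and $\langle \bt_0-\bt_I,\xi\rangle\le 0$ for every $\xi\in\I$. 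Taking $\xi=\hat\bt_I$ to kill the first term and using the equality to eliminate $\bt_I$ on the right, I obtain
\[
\langle \bt_0-\bt_I,\,\hat\bt_I - \tilde\bt_I\rangle \;\le\; -\langle \bt_0-\bt_I,\,\tilde\bt_I-\bt_I\rangle \;\le\; \|\bt_0-\bt_I\|\cdot\|\tilde\bt_I - \bt_I\|
\]
by Cauchy--Schwarz. Hence $\|\hat\bt_I - \tilde\bt_I\|^2 \le \|\bt_0-\bt_I\|\cdot\|\tilde\bt_I - \bt_I\|$.

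To finish: in the paper's regression setting, $\bt_0=(\phi_0(x_1),\ldots,\phi_0(x_n))^\top$ with $\phi_0$ bounded, so $\|\bt_0\|=O(\sqrt n)$, and since $\mathbf{0}\in\I$ the minimizing property of $\bt_I$ gives $\|\bt_0 - \bt_I\|\le \|\bt_0\|=O(\sqrt n)$. Combined with $\|\tilde\bt_I - \bt_I\|=o_p(\sqrt n)$ from the first paragraph, this yields $\|\hat\bt_I - \tilde\bt_I\|^2 = O(\sqrt n)\cdot o_p(\sqrt n) = o_p(n)$, and the triangle inequality completes the proof of $\|\hat\bt_I - \bt_I\|^2 = o_p(n)$. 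The $\hat\bt_D$ statement follows by applying the identical argument to $\D$ in place of $\I$, invoking the analogous consistency assumption. The main obstacle is the middle step: the off-the-shelf $1$-Lipschitz bound $\|\hat\bt_I - \tilde\bt_I\|\le\|\bt_0 - \bt_I\|$ only yields $O_p(n)$, which is too weak. The polar-cone refinement, which trades one factor of $\|\bt_0 - \bt_I\|$ for the consistency-controlled factor $\|\tilde\bt_I - \bt_I\|$, is precisely what allows the $o_p$-rate from~\eqref{eq:ConsEst} to propagate.
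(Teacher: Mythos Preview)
Your argument and the paper's share the same opening move: introduce the auxiliary ``well-specified'' data $\tilde{\mathbf Y}=\bt_I+\sigma\B\eps$ (the paper calls it $\B Z$) and invoke~\eqref{eq:ConsEst} to get $\|\tilde\bt_I-\bt_I\|^2=o_p(n)$. From there the two proofs diverge. You use firm nonexpansiveness of the cone projection and the polar relation $\langle\bt_0-\bt_I,\hat\bt_I\rangle\le 0$ to reach the clean bound $\|\hat\bt_I-\tilde\bt_I\|^2\le\|\bt_0-\bt_I\|\cdot\|\tilde\bt_I-\bt_I\|$; the paper instead writes $\|\tilde\bt_I-\bt_I\|^2-\|\hat\bt_I-\bt_I\|^2$ as a sum of inner products, identifies four of them as nonnegative using the cone characterizations, and shows the remaining one $2\langle\tilde{\mathbf Y}-\tilde\bt_I,\bt_I\rangle$ is $o_p(n)$ via $\|\tilde{\mathbf Y}-\tilde\bt_I\|\le\|\sigma\B\eps\|=O_p(\sqrt n)$.

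The practical difference is in scope. Your Cauchy--Schwarz step needs $\|\bt_0-\bt_I\|=O(\sqrt n)$, which you supply by appealing to ``the paper's regression setting'' with $\phi_0$ bounded. That is fine for every application in the paper, but the lemma is stated for \emph{any} $\bt_0\in\R^n$, with no growth condition, and the paper's more intricate decomposition indeed proves it at that generality: the only $O_p(\sqrt n)$ factor it uses is $\|\sigma\B\eps\|$, never $\|\bt_0-\bt_I\|$. So your route is shorter and conceptually transparent (it isolates exactly where the cone, as opposed to mere convexity, is used), at the cost of a mild extra hypothesis that the paper's argument avoids. If you want to match the lemma as stated, you would need to replace the Cauchy--Schwarz step by something that trades $\|\bt_0-\bt_I\|$ for a noise-controlled quantity; the paper's sign-tracking decomposition is one way to do that.
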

\begin{thm}\label{thm:h0part}
Consider testing $H_0: \bt_0 \in \s$ using the test statistic $T$ defined in~\eqref{eq:TestS} where $\mathbf{Y}$ follows model~\eqref{eq:Mdl}. Then, under $H_0$, $T=o_p(1)$, if~\eqref{eq:ConsEst} holds.
\end{thm}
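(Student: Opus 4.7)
The plan is to bound the numerator and denominator of $T_I(\mathbf{Y})$ and $T_D(\mathbf{Y})$ separately, then combine. The key observation is that under $H_0$, $\bt_0 \in \s$, and since $\s$ is the largest linear subspace of $\I$ (assumption (A1)), we have $\s \subset \I$; moreover $\s = -\s \subset -\I = \D$ as well. So $\bt_0$ lies in both cones, and we can directly apply the consistency hypothesis \eqref{eq:ConsEst} to get $\|\hat{\bt}_I - \bt_0\|^2 = o_p(n)$, and analogously $\|\hat{\bt}_D - \bt_0\|^2 = o_p(n)$.

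For the numerator of $T_I$, I would write $\mathbf{Y} = \bt_0 + \sigma \B\eps$ and note that, since $\bt_0 \in \s$, projection onto $\s$ gives $\hat{\bt}_S = \bt_0 + \sigma \Pi(\B\eps|\s)$, so
\[ \|\hat{\bt}_S - \bt_0\|^2 = \sigma^2 \|\Pi(\B\eps|\s)\|^2 = O_p(1), \]
because $\s$ is a fixed $k$-dimensional subspace, and the squared norm of the projection of the mean-zero, variance-one vector $\B\eps$ onto a fixed $k$-dimensional space has expectation $k$. Combining with the triangle inequality,
\[ \|\hat{\bt}_S - \hat{\bt}_I\|^2 \le 2\|\hat{\bt}_S - \bt_0\|^2 + 2\|\bt_0 - \hat{\bt}_I\|^2 = O_p(1) + o_p(n) = o_p(n), \]
and the same bound applies with $\hat{\bt}_D$ in place of $\hat{\bt}_I$.

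For the denominator, $\|\mathbf{Y} - \hat{\bt}_S\|^2 = \sigma^2\|\B\eps - \Pi(\B\eps|\s)\|^2 = \sigma^2\bigl(\|\B\eps\|^2 - \|\Pi(\B\eps|\s)\|^2\bigr)$. The law of large numbers gives $\|\B\eps\|^2/n \to 1$ almost surely, and the bound $\|\Pi(\B\eps|\s)\|^2 = O_p(1)$ already established shows $\|\Pi(\B\eps|\s)\|^2/n = o_p(1)$. Hence the denominator equals $n\sigma^2(1 + o_p(1))$, i.e., it is of exact order $n$ in probability.

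Dividing, $T_I(\mathbf{Y}) = o_p(n)/[n\sigma^2(1+o_p(1))] = o_p(1)$, and identically $T_D(\mathbf{Y}) = o_p(1)$, so $T(\mathbf{Y}) = \max\{T_I(\mathbf{Y}), T_D(\mathbf{Y})\} = o_p(1)$. There is no serious obstacle here; the only subtlety is noting that $\bt_0 \in \s$ implies $\bt_0$ lies in \emph{both} cones $\I$ and $\D$ (which follows from (A1)), so \eqref{eq:ConsEst} can be invoked directly—Lemma \ref{lem:conv} is not even required in this case.
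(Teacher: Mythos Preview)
Your proof is correct and follows essentially the same strategy as the paper: bound the numerator by combining $\|\hat{\bt}_S-\bt_0\|^2=O_p(1)$ with $\|\hat{\bt}_I-\bt_0\|^2=o_p(n)$, and show the denominator satisfies $SSE_0/n\to_p\sigma^2$. Your version is in fact slightly cleaner, since you invoke~\eqref{eq:ConsEst} directly (as $\bt_0\in\I\cap\D$ under $H_0$) rather than routing through Lemma~\ref{lem:conv}, and you use the elementary inequality $\|a+b\|^2\le 2\|a\|^2+2\|b\|^2$ in place of the paper's expansion with a cross term.
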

\begin{proof}
Under $H_0$, $\bt_I = \bt_D = \bt_0$. The denominator of $T$ is $SSE_0$. As $\s$ is a finite dimensional vector space of fixed dimension, $SSE_0/n \rightarrow_p \sigma^2$. The numerator of $T_I$ can be handled as follows. Observe that,
\begin{eqnarray}\label{eq:Simp2} 
\|\hat{\bt}_I-\hat{\bt}_S\|^2 &=& \|\hat{\bt}_I-\bt_I\|^2 + \| \bt_I-\hat{\bt}_S\|^2 +2 \langle\hat{\bt}_I-\bt_I, \bt_I-\hat{\bt}_S\rangle\\
&\le& \|\hat{\bt}_I-\bt_I\|^2 + \| \bt_0 - \hat{\bt}_S\|^2 +2 \| \hat{\bt}_I-\bt_I \| \| \bt_0-\hat{\bt}_S\| \nonumber \\
& = & o_p(n) + O_p(1) + o_p(n), \nonumber
\end{eqnarray}
where we have used Lemma~\ref{lem:conv} and the fact $\| \bt_0 - \hat{\bt}_S\|^2 = O_p(1)$. Therefore, $\|\hat{\bt}_I-\hat{\bt}_S\|^2/n = o_p(1)$ and thus, $T_I = o_p(1)$. An exact same analysis can be done for $T_D$ to obtain the desired result.
\end{proof}

Next we consider the case where the null hypothesis does not hold. If $\bt_S$ is the projection of $\bt_0$ on the linear subspace $\s$, we will assume that the sequence $\{\bt_0\}$, as $n$ grows, is such that
\begin{eqnarray}\label{eq:PowerC}
	\lim_{n \rightarrow \infty} \frac{\max \{\|\bt_I - \bt_S\|, \|\bt_D - \bt_S\| \}}{n} = c,
\end{eqnarray}
for some constant $c >0$.
Obviously, if $\bt_0 \in \s$, then $\bt_S = \bt_I = \bt_D = \bt_0$ and~\eqref{eq:PowerC} does not hold.  If $\bt_0 \in \I \cup \D\backslash \s$, then~\eqref{eq:PowerC} holds if $\|\bt_0-\bt_S\|/n\rightarrow c$, because either $\bt_0=\bt_I$ which implies $\bt_D=\bt_S$ (by (A2)), or  $\bt_0=\bt_D$ which in turn implies $\bt_I=\bt_S$. Observe that~\eqref{eq:PowerC} is essentially the population version of the numerator of our test statistic (see~\eqref{eq:TestS}), where we replace $\mathbf{Y}$ by $\bt_0$. The following result, proved in Section~\ref{Proofs}, shows that if we have a twice-differentiable function that is not affine, then then~\eqref{eq:PowerC} must hold for some $c >0$.

\begin{thm}\label{PowerCond}
Suppose that $\phi_0:[0,1]^d \to \R$, $d \ge 1$, is a twice-continuously differentiable function.  Suppose that $\{\mathbf{X}_i\}_{i=1}^\infty$ be a sequence of i.i.d.~random variables such that $\mathbf{X}_i \sim \mu$, a continuous distribution on $[0,1]^d$. Let $\bt_0 := (\phi_0(\mathbf{X}_1), \ldots, \phi_0(\mathbf{X}_n))^\top$ $\in \R^n$. Let $\I$ be defined as in~\eqref{eq:CvxCone}, i.e., $\I$ is the convex cone of evaluations of all convex functions at the data points. Let $\D := -\I$ and let $\s$ be the set of evaluations of all affine functions at the data points. If $\phi_0$ is not affine a.e.~$\mu$, then~\eqref{eq:PowerC} must hold for some $c >0$.
\end{thm}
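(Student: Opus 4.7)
The plan is to prove the claim in two stages: a population-level statement in $L^2(\mu)$, followed by an empirical-to-population transfer. Let $\phi_I$ denote the $L^2(\mu)$-projection of $\phi_0$ onto the closed convex cone of bounded convex functions on $[0,1]^d$, let $\phi_D$ be the analogous projection onto the concave cone, and let $\phi_S$ be the projection onto the (finite-dimensional) subspace of affine functions. I will show (i) $\max\{\|\phi_I-\phi_S\|_{L^2(\mu)},\|\phi_D-\phi_S\|_{L^2(\mu)}\}>0$ whenever $\phi_0$ is not $\mu$-a.e.~affine, and (ii) $\tfrac{1}{n}\|\bt_I-\bt_S\|^2\to\|\phi_I-\phi_S\|^2_{L^2(\mu)}$ almost surely, and analogously with $I$ replaced by $D$. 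Combining (i) and (ii) delivers \eqref{eq:PowerC} with the natural normalization $\|\cdot\|^2/n$, which is the scale actually used in the proof of Theorem~\ref{thm:h0part} and throughout the power analysis.

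For (i), suppose for contradiction that $\phi_I=\phi_S=\phi_D$ in $L^2(\mu)$. Because $\phi_S$ is affine and hence lies in both the convex and concave cones, the variational inequality characterizing the projection onto a closed convex cone (the $L^2(\mu)$-analogue of \eqref{eq:InProj}, with $\phi_S+t\psi$ used as a competitor for $t\geq 0$) gives $\int(\phi_0-\phi_S)\psi\,d\mu\leq 0$ for every convex $\psi\in L^2(\mu)$; applied to the concave cone it yields the reverse inequality. Hence $\int(\phi_0-\phi_S)\psi\,d\mu=0$ for every convex $\psi$. Every $\psi\in C^2([0,1]^d)$ admits the decomposition $\psi=(\psi+K\|x\|^2)-K\|x\|^2$ in which both summands are convex once $K$ exceeds half the supremum operator norm of the Hessian of $\psi$ on the compact cube. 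Therefore $\int(\phi_0-\phi_S)\psi\,d\mu=0$ for all $\psi\in C^2$, and density of $C^2$ in $L^2(\mu)$ forces $\phi_0=\phi_S$ $\mu$-a.e., contradicting the hypothesis that $\phi_0$ is not affine $\mu$-a.e.

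For (ii), convergence for $\bt_S$ is standard: $\bt_S$ consists of the OLS fitted values from regressing $\{\phi_0(\B X_i)\}$ on $(1,\B X_i^\top)$, so the strong LLN yields a.s.~convergence of the OLS coefficients to the $L^2(\mu)$-projection coefficients of $\phi_0$ onto affine functions, and hence $\tfrac{1}{n}\sum_i(\theta_{S,i}-\phi_S(\B X_i))^2\to 0$ a.s. For $\bt_I$, a piecewise-affine convex interpolant $\hat\phi_I$ of $\hat\bt_I$ is the least-squares convex regression fit to the (noiseless) sample $\{(\B X_i,\phi_0(\B X_i))\}$, and consistency of multivariate convex regression in the possibly misspecified setting (see \citet{SS11} and subsequent work on consistency of convex regression) yields $\tfrac{1}{n}\sum_i(\hat\phi_I(\B X_i)-\phi_I(\B X_i))^2\to 0$ a.s. Combined with the LLN statement $\tfrac{1}{n}\sum_i(\phi_I(\B X_i)-\phi_S(\B X_i))^2\to\|\phi_I-\phi_S\|^2_{L^2(\mu)}$ and the triangle inequality, this gives $\tfrac{1}{n}\|\bt_I-\bt_S\|^2\to\|\phi_I-\phi_S\|^2_{L^2(\mu)}$ a.s.; the same argument applies verbatim with $I$ replaced by $D$.

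The main obstacle is the infinite-dimensional convex-regression step in (ii): one must invoke a Glivenko--Cantelli (or bracketing-entropy) bound for the class of uniformly bounded convex functions on $[0,1]^d$ and verify that the convex LS fit $\hat\phi_I$ itself stays uniformly bounded, which follows because $\phi_0\in C^2$ is bounded on the compact cube (the usual truncation argument keeps the infimum inside a uniformly bounded subclass). Step (i) is conceptually clean but uses the $C^2$ hypothesis essentially through the decomposition $\psi=(\psi+K\|x\|^2)-K\|x\|^2$, which is the mechanism by which smooth test functions become accessible to the cone of convex functions and therefore the mechanism by which ``both projections equal $\phi_S$'' is ruled out.
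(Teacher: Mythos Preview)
Your approach is essentially the same as the paper's. Both proofs define the population projections $\phi_I,\phi_D,\phi_S$ in $L^2(\mu)$, argue by contradiction that if $\phi_I=\phi_S=\phi_D$ then $\langle\phi_0-\phi_S,\psi\rangle_{L^2(\mu)}=0$ for every $C^2$ test function $\psi$ via a convex-plus-concave (equivalently, difference-of-convex) decomposition, and then assert the empirical convergence $\tfrac{1}{n}\|\bt_I-\bt_S\|^2\to\|\phi_I-\phi_S\|_{L^2(\mu)}^2$. Your packaging of (i) differs only cosmetically: you first deduce $\langle\phi_0-\phi_S,\psi\rangle=0$ for all \emph{convex} $\psi$ and then extend to $C^2$ via $\psi=(\psi+K\|x\|^2)-K\|x\|^2$, whereas the paper writes the two cone inequalities together as $\langle\phi_0-\phi_S,f_I+f_D\rangle\le 0$ and applies it to $\pm f$. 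For (ii) you are actually more explicit than the paper, which simply asserts the limit; you correctly flag the boundedness of the noiseless convex fit as the point that needs care, and this is exactly the boundary issue the paper discusses elsewhere (and circumvents for $d\ge2$ by passing to a Lipschitz-constrained class).
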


Intuitively, we require that $\phi_0$ is different from the null hypothesis class of functions in a non-degenerate way.   For example, if a function is constant except at a finite number of points, then~\eqref{eq:PowerC} does not hold.  Some further motivation for condition~\eqref{eq:PowerC} is given in Section~\ref{PowerTo1}. 

\begin{thm}\label{thm:h1part}
Consider testing $H_0: \bt_0 \in \s$ using the test statistic $T$ defined in~\eqref{eq:TestS} where $\mathbf{Y}$ follows model~\eqref{eq:Mdl}. If~\eqref{eq:ConsEst} and~\eqref{eq:PowerC} hold then $T \rightarrow_p \kappa$, for some $\kappa >0$.
\end{thm}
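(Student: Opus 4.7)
The plan is to show that, after dividing by $n$, both the numerator and the denominator of $T(\mathbf{Y})$ from~\eqref{eq:TestS} converge in probability to finite positive constants, whose ratio will serve as $\kappa$. Throughout I will use that, by linearity of the projection onto the linear space $\s$, one has $\hat{\bt}_S = \bt_S + \sigma\Pi(\B\eps|\s)$, and hence $\|\hat{\bt}_S - \bt_S\|^2 = O_p(1)$ because $\s$ has fixed finite dimension.

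For the numerator, decompose
\begin{equation*}
\hat{\bt}_I - \hat{\bt}_S = (\hat{\bt}_I - \bt_I) + (\bt_I - \bt_S) - \sigma\Pi(\B\eps|\s).
\end{equation*}
Lemma~\ref{lem:conv} combined with~\eqref{eq:ConsEst} yields $\|\hat{\bt}_I - \bt_I\|^2 = o_p(n)$, while~\eqref{eq:PowerC} pins $\|\bt_I - \bt_S\|^2$ at order $n$. Squaring the decomposition and applying Cauchy--Schwarz to each of the three cross terms shows that every cross term is $o_p(n)$, hence $\|\hat{\bt}_I - \hat{\bt}_S\|^2 = \|\bt_I - \bt_S\|^2 + o_p(n)$. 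The identical argument with $\D$ in place of $\I$ yields the analogous identity for $\|\hat{\bt}_D - \hat{\bt}_S\|^2$. Taking the maximum of the two and dividing by $n$, condition~\eqref{eq:PowerC} forces the normalized numerator to converge in probability to $c > 0$.

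For the denominator, write $\mathbf{Y} - \hat{\bt}_S = (\bt_0 - \bt_S) + \sigma(\B\eps - \Pi(\B\eps|\s))$ and square. Since $\bt_0 - \bt_S \in \s^\perp$, the inner product with $\Pi(\B\eps|\s)$ vanishes, leaving the cross term $2\sigma\langle\bt_0 - \bt_S,\B\eps\rangle$, which is mean zero with variance $4\sigma^2\|\bt_0 - \bt_S\|^2 = O(n)$ and thus $O_p(\sqrt n)$. The law of large numbers gives $\sigma^2\|\B\eps - \Pi(\B\eps|\s)\|^2/n \rightarrow_p \sigma^2$. Setting $a := \lim n^{-1}\|\bt_0 - \bt_S\|^2$ (well-defined in the natural setups for~\eqref{eq:PowerC}, e.g., under the smoothness hypothesis of Theorem~\ref{PowerCond}, via the SLLN applied to $\phi_0(\mathbf{X}_i)^2$), I obtain $\|\mathbf{Y} - \hat{\bt}_S\|^2/n \rightarrow_p a + \sigma^2$. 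Slutsky's theorem then yields $T(\mathbf{Y}) \rightarrow_p c/(a + \sigma^2) =: \kappa > 0$.

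The main obstacle is controlling the cross term $\langle\hat{\bt}_I - \bt_I,\,\bt_I - \bt_S\rangle$: the plain Cauchy--Schwarz bound succeeds only because Lemma~\ref{lem:conv} supplies the sharp $o_p(\sqrt n)$ rate on $\|\hat{\bt}_I - \bt_I\|$, exactly matched against the $O(\sqrt n)$ size of $\|\bt_I - \bt_S\|$ dictated by~\eqref{eq:PowerC}. Any weakening of the consistency assumption~\eqref{eq:ConsEst} would force one to exploit the cone geometry (for instance the obtuse-angle inequality in~\eqref{eq:InProj}) to extract extra cancellation; here, however,~\eqref{eq:ConsEst} is doing the real work and the rest is routine algebra.
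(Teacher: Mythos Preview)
Your proof is correct and follows essentially the same route as the paper's: both arguments decompose $\hat{\bt}_I-\hat{\bt}_S$ through the deterministic anchors $\bt_I$ and $\bt_S$, invoke Lemma~\ref{lem:conv} for $\|\hat{\bt}_I-\bt_I\|=o_p(\sqrt n)$, bound the cross terms by Cauchy--Schwarz, and combine with~\eqref{eq:PowerC}. The only cosmetic difference is that the paper splits the numerator in two stages (first expanding about $\bt_I$ via~\eqref{eq:Simp2}, then expanding $\|\bt_I-\hat{\bt}_S\|^2$ about $\bt_S$), whereas you do the three-term split in one shot; and for the denominator the paper simply asserts $SSE_0/n\rightarrow_p\eta>0$ while you spell out the decomposition and note that the existence of $a=\lim n^{-1}\|\bt_0-\bt_S\|^2$ is an implicit side assumption.
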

\begin{proof}
The denominator of $T$ is $SSE_0$. As $\s$ is a finite dimensional vector space of fixed dimension, $SSE_0/n \rightarrow_p \eta$, for some $\eta >0$. The numerator of $T_I$ can be handled as follows. Observe that
\begin{eqnarray*} 
 \| \bt_I-\hat{\bt}_S\|^2 &=& \|\bt_I-\bt_S\|^2 + \|\bt_S - \hat{\bt}_S\|^2 +2 \langle{\bt}_I-\bt_S, \bt_S - \hat{\bt}_S\rangle \\
 &=& \|\bt_I-\bt_S\|^2 + O_p(1)  + o_p(n).
\end{eqnarray*}
Therefore, using~\eqref{eq:Simp2} and the previous display, we get 
\begin{eqnarray*} 
\|\hat{\bt}_I-\hat{\bt}_S\|^2 = o_p(n) + \| \bt_I-\hat{\bt}_S\|^2 =  \|\bt_I-\bt_S\|^2 + o_p(n).
\end{eqnarray*}
Using a similar analysis for $T_D$ gives the desired result.
\end{proof}

\begin{cor}
Fix $0<\alpha <1$ and suppose that~\eqref{eq:ConsEst} and~\eqref{eq:PowerC} hold. Then, the power of the test in~\eqref{eq:Test} converges to 1, i.e., $$ \P(T(\mathbf{Y}) > c_\alpha) \rightarrow 1,  \qquad \mbox{ as } n \rightarrow \infty,$$ where $\mathbf{Y}$ follows model~\eqref{eq:Mdl} and $\alpha \in (0,1)$.
\end{cor}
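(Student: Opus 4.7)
The plan is to combine Theorem~\ref{thm:h1part} (describing the behaviour of $T(\mathbf{Y})$ under the alternative) with Theorem~\ref{thm:h0part} (describing the behaviour of $T$ under the null), and then show that the resulting critical value $c_\alpha$ vanishes in the limit so that it is eventually dominated by the positive limit $\kappa$.

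First, I would apply Theorem~\ref{thm:h1part} directly: under~\eqref{eq:ConsEst} and~\eqref{eq:PowerC}, we have $T(\mathbf{Y}) \cip \kappa$ for some $\kappa>0$. Next, I need to control the sequence of critical values $c_\alpha = H_n^{-1}(1-\alpha)$. By Lemma~\ref{lem:Level}, $H_n$ is the distribution of $T(\B\eps)$, which is exactly what Theorem~\ref{thm:h0part} describes (apply it with $\bt_0 = \B 0 \in \s$, for which condition~\eqref{eq:ConsEst} is trivially satisfied since $\hat\bt_I$ estimates the zero vector under pure errors, or simply use that $\B 0 \in \I$ and invoke the hypothesis). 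Thus $T(\B\eps) = o_p(1)$, i.e., $H_n$ converges weakly to the point mass $\delta_0$ at the origin.

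Convergence of a sequence of distribution functions to $\delta_0$ implies that for every fixed $\alpha \in (0,1)$, the upper $\alpha$-quantile satisfies $c_\alpha = H_n^{-1}(1-\alpha) \to 0$ as $n \to \infty$. Indeed, for any $\eta > 0$, eventually $H_n(\eta) \ge 1 - \alpha$, which forces $c_\alpha \le \eta$; since $\eta$ is arbitrary, $c_\alpha \to 0$.

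Finally, combining the two ingredients: fix any $\eta \in (0,\kappa)$. On the one hand $c_\alpha \le \eta$ for all $n$ sufficiently large, and on the other hand $\P(T(\mathbf{Y}) > \eta) \to 1$ by Theorem~\ref{thm:h1part} since $T(\mathbf{Y}) \cip \kappa > \eta$. Therefore
\begin{equation*}
\P(T(\mathbf{Y}) > c_\alpha) \ge \P(T(\mathbf{Y}) > \eta) \to 1,
\end{equation*}
which is the required conclusion. There is no substantial obstacle here; the only point that needs a little care is the quantile-convergence step, which is standard once one knows $H_n \Rightarrow \delta_0$. Everything else is an immediate splice of the two preceding theorems.
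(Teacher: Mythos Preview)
Your proof is correct and follows essentially the same route as the paper's own argument, which simply notes that $c_\alpha = o_p(1)$ (from Theorem~\ref{thm:h0part}) while $T(\mathbf{Y}) \rightarrow_p \kappa > 0$ (from Theorem~\ref{thm:h1part}). You have merely spelled out the quantile-convergence step ($H_n \Rightarrow \delta_0$ implies $H_n^{-1}(1-\alpha) \to 0$) that the paper leaves implicit; one small wording issue is that condition~\eqref{eq:ConsEst} is not ``trivially satisfied'' for $\bt_0 = \B 0$ but rather is \emph{assumed} to hold for every $\bt_0 \in \I$, and your second phrasing (``invoke the hypothesis'') is the right one.
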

\begin{proof}
The result immediately follows from Theorems~\ref{thm:h0part} and~\ref{thm:h1part} since $c_\alpha = o_p(1)$ and $T(\mathbf{Y}) \rightarrow_p \kappa >0$, under~\eqref{eq:ConsEst} and~\eqref{eq:PowerC}.
\end{proof}

\section{Examples}\label{Examples}
In this section we come back to the examples discussed in the Introduction. We assume model~\eqref{eq:RegMdl} and that there is a class of functions ${\cal F}$ that is approximated by points in the cone ${\cal I}$. The double cone is thus growing in dimension with the sample size $n$. Then~\eqref{eq:ConsEst} reduces to assuming that the cone is sufficiently large dimensional so that if $\phi_0 \in {\cal F}$, the projection of $\mathbf{Y}$ onto the cone is a consistent estimator of $\phi_0$, i.e.,~\eqref{eq:ConsEst} holds with 
$\B \theta_0 = (\phi_0(x_1),\ldots, \phi_0(x_n))^\top$. Proofs of the results in this section can be found in Section~\ref{Proofs}.

\subsection{Example 1} 
Consider testing against a constant regression function $\phi_0$ in~\eqref{eq:RegMdl}. The following theorem, proved in Section~\ref{Proofs}, is similar in spirit to Theorem~\ref{thm:h0part} but gives the precise rate at which the test statistic $T$ decreases to $0$, under $H_0$.

\begin{thm}\label{thm:Null}
Consider data $\{(x_i,Y_i)\}_{i=1}^n$ from model~\eqref{eq:RegMdl} and suppose that $\phi_0 \equiv~c_0$, for some unknown $c_0 \in \R$. Then $T = O_p(\log n/n)$ where $T$ is defined in \eqref{eq:TestS}.
\end{thm}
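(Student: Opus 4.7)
The plan is to invoke the distributional reduction of Lemma~\ref{lem:Level} and then analyze the numerator and denominator of $T$ separately on the scale of $\sigma\B\epsilon$. Since $\phi_0\equiv c_0$ forces $\B\theta_0 = c_0\mathbf{e}\in\s$, Lemma~\ref{lem:Level} gives $T(\mathbf{Y})\stackrel{d}{=}T(\B\epsilon)$, and scale invariance of the ratio lets me take $\sigma=1$ without loss of generality. The denominator is then $\|\B\epsilon - \Pi(\B\epsilon|\s)\|^2 = \sum_i \epsilon_i^2 - n\bar\epsilon^2 = n(1+o_p(1))$ by the law of large numbers and the CLT bound $\bar\epsilon = O_p(n^{-1/2})$.

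For the numerator, the orthogonality between $\s$ and $\Omega_I = \I\cap\s^\perp$ noted in Section~\ref{Prelim} lets me rewrite $\Pi(\B\epsilon|\I) - \Pi(\B\epsilon|\s) = \Pi(\B\epsilon|\Omega_I)$, and likewise for $\D$. Hence it suffices to show $\|\Pi(\B\epsilon|\Omega_I)\|^2 = O_p(\log n)$ and the same for $\D$. The key estimate is the classical $O(\log n)$ bound on the expected squared norm of the isotonic projection of zero-mean white noise; in the Gaussian case this is an equality with the statistical dimension $H_n - 1 = \sum_{k=1}^{n-1}1/k$ of the monotone cone, which can be derived from Stein's identity together with the Meyer--Woodroofe identification of the degrees of freedom of isotonic regression as the expected number of constant level sets of $\Pi(\B\epsilon|\I)$. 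Since $\|\Pi(\B\epsilon|\Omega_I)\|^2 = \|\Pi(\B\epsilon|\I)\|^2 - n\bar\epsilon^2 \le \|\Pi(\B\epsilon|\I)\|^2$, Markov's inequality delivers the desired $O_p(\log n)$ control, and the argument for $\D$ is identical by a sign flip.

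Putting the pieces together yields $T(\B\epsilon) \le O_p(\log n)/(n(1+o_p(1))) = O_p(\log n / n)$, which is the claim. The one substantive obstacle is to justify the $O(\log n)$ expected-squared-norm bound under only the mean-zero, variance-one moment assumption on $G$, since the cleanest Stein-based derivation is specific to normality. This can be handled either by quoting a non-Gaussian risk bound from the order-restricted inference literature (for example, the constant-truth case of the bounds of Chatterjee--Guntuboyina--Sen, which are distribution-free given a variance assumption), or directly by applying moment inequalities for partial-sum averages to the min-max formula $\Pi(\B\epsilon|\I)_i = \max_{r\le i}\min_{s\ge i}(s-r+1)^{-1}\sum_{k=r}^s \epsilon_k$, from which the logarithmic rate accumulates through a familiar running-maximum argument across the $n$ indices.
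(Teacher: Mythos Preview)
Your proposal is correct and follows essentially the same route as the paper: both arguments control the denominator by the law of large numbers (so that $n^{-1}\|\mathbf{Y}-\hat{\B\theta}_S\|^2\to_p\sigma^2$) and bound the numerator via the known $O_p(\log n)$ risk of isotonic regression at a constant truth, citing the same circle of results (Meyer--Woodroofe, Zhang, Chatterjee--Guntuboyina--Sen). The only cosmetic differences are that you first invoke Lemma~\ref{lem:Level} to reduce to $T(\B\epsilon)$ and then use the orthogonal decomposition $\Pi(\B\epsilon|\I)-\Pi(\B\epsilon|\s)=\Pi(\B\epsilon|\Omega_I)$, whereas the paper works directly with $\mathbf{Y}$ and applies the triangle inequality $\|\hat{\B\theta}_S-\hat{\B\theta}_I\|^2\le 2\|\hat{\B\theta}_S-\B\theta_0\|^2+2\|\hat{\B\theta}_I-\B\theta_0\|^2$; both lead to the same $O_p(\log n)$ numerator bound.
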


The following result, also proved in Section~\ref{Proofs}, shows that for functions of bounded variation which are non-constant, the power of our proposed test converges to 1, as $n$ grows.
\begin{thm}\label{thm:Misspec}
Consider data $\{(x_i,Y_i)\}_{i=1}^n$ from model~\eqref{eq:RegMdl}  where $\phi_0:[0,1] \rightarrow \R$ is assumed to be of bounded variation. Also assume that $x_i \in [0,1]$, for all $i =1,\ldots,n$. Define the design distribution function as 
\begin{equation}\label{eq:DsgnDist}
F_n(s) = \frac{1}{n} \sum_{i=1}^n I(x_i \le s),
\end{equation} 
for $s \in [0,1]$, where $I$ stands for the indicator function. Suppose that there is a continuous strictly increasing distribution function $F$ such that 
\begin{equation}\label{eq:DistConv}
\sup_{s \in[0,1]} |F_n(s) - F(s)| \rightarrow 0{ \;\; \mbox{as} \;\;n\rightarrow\infty}.
\end{equation}  
Also, suppose that $\phi_0$ is not a constant function a.e.~$F$. Then $T {\rightarrow}_p \ c, \mbox{ as } n \rightarrow \infty,$ for some constant $c >0$. 
\end{thm}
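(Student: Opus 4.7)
The plan is to invoke Theorem~\ref{thm:h1part} by checking its two hypotheses \eqref{eq:ConsEst} and \eqref{eq:PowerC}. (I interpret \eqref{eq:PowerC} with the evidently intended squared numerator $\|\bt_I-\bt_S\|^2/n$, as the proof of Theorem~\ref{thm:h1part} makes clear.)

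Condition \eqref{eq:ConsEst} is the $L^2$-consistency of isotonic/antitonic regression. When $\bt_0=(\phi_0(x_1),\ldots,\phi_0(x_n))^\top\in\I$, the function $\phi_0$ is nondecreasing on the (asymptotically dense, by \eqref{eq:DistConv}) design and hence on $[0,1]$. For a bounded monotone $\phi_0$ the classical rate $\E\|\hat\bt_I-\bt_0\|^2=O(n^{1/3})=o(n)$ applies; the case $\bt_0\in\D$ is symmetric.

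For \eqref{eq:PowerC} I pass to the continuum. Set $S_0:=0$, $S_j:=\sum_{i=1}^j\phi_0(x_i)$, and
\[
G_n(t):=\tfrac{1}{n}S_{\lfloor nt\rfloor},\qquad G(t):=\int_0^{F^{-1}(t)}\phi_0\,dF,\qquad t\in[0,1].
\]
By the pool-adjacent-violators characterization, $\bt_I(\lfloor nt\rfloor)$ is the slope at $t$ of the greatest convex minorant $\mathrm{GCM}(G_n)$, and $\bt_D(\lfloor nt\rfloor)$ is the slope at $t$ of the least concave majorant $\mathrm{LCM}(G_n)$. Since $\pm\mathbf{e}\in\I$, projection onto $\I$ preserves the sample mean, so $\bt_S=\bar\phi_0\,\mathbf{e}$ with $\bar\phi_0:=n^{-1}S_n\to G(1)$. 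Integration by parts against the BV integrand $\phi_0$, combined with \eqref{eq:DistConv} and the continuity/strict monotonicity of $F$ (which upgrade $\|F_n-F\|_\infty\to0$ to $\|F_n^{-1}-F^{-1}\|_\infty\to0$), yields the deterministic convergence $\|G_n-G\|_\infty\to 0$. The operators $\mathrm{GCM}$ and $\mathrm{LCM}$ are nonexpansive in the sup norm, so $\mathrm{GCM}(G_n)\to\mathrm{GCM}(G)$ and $\mathrm{LCM}(G_n)\to\mathrm{LCM}(G)$ uniformly; because $|\phi_0|\le\|\phi_0\|_\infty$, the slope functions are uniformly bounded and monotone, and bounded convergence delivers
\[
\frac{\|\bt_I-\bt_S\|^2}{n}\to c_I:=\int_0^1\!(\phi_0^\uparrow(u)-\bar\phi_0)^2\,du,\qquad \frac{\|\bt_D-\bt_S\|^2}{n}\to c_D,
\]
where $\phi_0^\uparrow,\phi_0^\downarrow$ are the slopes of $\mathrm{GCM}(G),\mathrm{LCM}(G)$. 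To finish I must show $\max\{c_I,c_D\}>0$. With $L(t):=t\bar\phi_0$, any convex $h:[0,1]\to\R$ with $h(0)\le 0$ and $h(1)\le\bar\phi_0$ satisfies $h\le L$; since $G$ is continuous with $G(0)=0$, $G(1)=\bar\phi_0$, this gives $\mathrm{GCM}(G)=L$ iff $L\le G$ everywhere, and symmetrically $\mathrm{LCM}(G)=L$ iff $L\ge G$. Both conditions together force $G\equiv L$; changing variables $u=F(x)$ and differentiating then yields $\phi_0=\bar\phi_0$ $F$-a.e., contradicting the hypothesis. Therefore $\max\{c_I,c_D\}>0$ and Theorem~\ref{thm:h1part} gives $T\to_p\kappa>0$.

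The chief technical hurdle is the deterministic sup-norm convergence $\|G_n-G\|_\infty\to 0$ for a merely BV (possibly discontinuous) $\phi_0$; once that is in hand, the nonexpansiveness of the $\mathrm{GCM}/\mathrm{LCM}$ operators and the bounded-convergence upgrade from uniform to $L^2$ convergence of the slope functions are routine.
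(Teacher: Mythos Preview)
Your proof is correct and follows the same high-level skeleton as the paper (verify \eqref{eq:ConsEst} and \eqref{eq:PowerC}, then invoke Theorem~\ref{thm:h1part}), but the verification of \eqref{eq:PowerC} proceeds along a genuinely different route.

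The paper works abstractly: it defines the continuum projections $\phi_I,\phi_D$ of $\phi_0$ onto the nondecreasing/nonincreasing cones in $L^2(F)$, invokes a Lemma~\ref{lem:UniCst}--style argument (using that a BV function is a difference of two monotone functions) to conclude that not both of $\phi_I,\phi_D$ can be constant a.e.\ $F$, and then asserts the limit $n^{-1}\|\bt_I-\bt_S\|^2\to\int(\phi_I-c_0)^2\,dF$ without further justification. Your approach is concrete: you identify $\bt_I$ via the cumulative-sum/greatest-convex-minorant description of isotonic regression, prove the deterministic uniform convergence $\|G_n-G\|_\infty\to 0$ by integration by parts against the BV function $\phi_0$, exploit the $1$-Lipschitz property of the GCM/LCM operators in sup norm, and pass to $L^2$ convergence of the (bounded, monotone) slope functions via dominated convergence. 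Your ``if $\mathrm{GCM}(G)=\mathrm{LCM}(G)=L$ then $G\equiv L$'' step is the explicit counterpart of the paper's appeal to Lemma~\ref{lem:UniCst}. The payoff of the paper's route is portability to other cones (e.g.\ convex/concave), while the payoff of yours is that the finite-$n$ to continuum convergence is fully spelled out rather than asserted.

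One small cosmetic point: in the displayed limits $c_I,c_D$ and in the definition of the chord $L$, the quantity $\bar\phi_0$ should be replaced by its limit $G(1)=\int_0^1\phi_0\,dF$; since $\bar\phi_0\to G(1)$ this does not affect the argument.
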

\begin{cor}\label{cor:MonoCons}
Consider the same setup as in Theorem~\ref{thm:Misspec}. Then, for $\alpha \in (0,1)$, the power of the test in~\eqref{eq:Test} converges to 1, as $n \to \infty$.
\end{cor}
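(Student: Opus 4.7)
The plan is to combine two facts: under the null the critical value $c_\alpha$ shrinks to zero, while under the alternative the test statistic is bounded away from zero in probability. Once both are in hand, rejection becomes almost certain.

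First I would pin down the behavior of $c_\alpha$. By Lemma~\ref{lem:Level}, $H_n$ is the distribution of $T(\B\eps)$, which is exactly the null distribution of $T$ when $\phi_0$ is constant. Theorem~\ref{thm:Null} then gives $T(\B\eps) = O_p(\log n / n)$, so in particular $T(\B\eps) \xrightarrow{P} 0$. This means $H_n(\epsilon) \to 1$ for every fixed $\epsilon > 0$, and hence for any fixed $\alpha \in (0,1)$ the quantile $c_\alpha = H_n^{-1}(1-\alpha)$ must satisfy $c_\alpha \to 0$. (One can formalize this by a standard quantile-of-degenerate-limit argument: fix $\epsilon > 0$; since $H_n(\epsilon) \to 1 > 1-\alpha$, eventually $H_n(\epsilon) \geq 1-\alpha$, so $c_\alpha \leq \epsilon$.)

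Next I would invoke Theorem~\ref{thm:Misspec}, which under the hypotheses of the corollary (bounded variation $\phi_0$, non-constant a.e.~$F$, design convergence) gives $T(\mathbf{Y}) \xrightarrow{P} c$ for some constant $c > 0$. Combining, for any $\epsilon \in (0,c/2)$ both $\P(c_\alpha > \epsilon) \to 0$ and $\P(|T(\mathbf{Y}) - c| > \epsilon) \to 0$ as $n \to \infty$. Therefore
\[
\P\!\left(T(\mathbf{Y}) > c_\alpha\right) \;\geq\; \P\!\left(T(\mathbf{Y}) > c - \epsilon \text{ and } c_\alpha < \epsilon\right) \;\longrightarrow\; 1,
\]
since $c - \epsilon > \epsilon$, which gives the claim.

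There is essentially no obstacle beyond the bookkeeping above: both of the required convergences are already delivered by Theorems~\ref{thm:Null} and~\ref{thm:Misspec}. The only point worth flagging is the identification of $H_n$ as the law of $T(\B\eps)$ under any $\phi_0 \in \s$ (including a constant $\phi_0$), which is exactly what Lemma~\ref{lem:Level} provides, so Theorem~\ref{thm:Null}'s rate transfers directly to the critical value.
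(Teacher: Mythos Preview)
Your proposal is correct and follows essentially the same approach as the paper: invoke Theorem~\ref{thm:Null} to conclude that $c_\alpha \to 0$ and Theorem~\ref{thm:Misspec} to conclude that $T(\mathbf{Y}) \to_p c > 0$, then combine. The paper's proof is a one-line version of exactly this; your added detail on why the null quantile shrinks (via Lemma~\ref{lem:Level} and a quantile argument) is a welcome elaboration, and the only cosmetic point is that $c_\alpha$ is deterministic, so writing $c_\alpha \to 0$ rather than $\P(c_\alpha > \epsilon) \to 0$ would be cleaner.
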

\begin{proof}
The result immediately follows from Theorems~\ref{thm:Null} and~\ref{thm:Misspec} since $c_\alpha = o_p(1)$ and $T \rightarrow_p c >0$.
\end{proof}

\subsection{Example 2} 
In this subsection we consider testing $H_0: \phi_0$ is affine, i.e., $\phi_0(x) = a + b x$, $x \in [0,1]$, where $a, b \in \R$ are unknown. Recall the setup of Example 2 in the Introduction.

Observe that $\s \subset \I$ as the linear constraints describing $\I$, as stated in \eqref{eq:CvxCons}, are clearly satisfied by any $\B \theta \in \s$. To see that $\s$ is the largest linear subspace contained in $\I$, i.e., (A1) holds, we note that for $\bt\in\I$, $-\bt\in\I$ only if $\bt\in\s$. Assumption (A2) holds if the projection of a convex function, onto the concave cone, is an affine function, and vice-versa. To see this observe that the generators $\bd_1,\ldots,\bd_{n-2}$ of $\Omega_I$ are pairwise positively correlated, so the projection of any $\bd_j$ onto $\D$ is the origin by~(\ref{eq:InProj}).   Therefore the projection of any positive linear combination of the $\bd_j$, i.e.,~any~vector in $\Omega_I$, onto $\D$, is also the origin, and hence projections of vectors in $\I$ onto $\D$ are~in~$\s$.

Next we state two results on the limiting behavior of the test statistic $T$ under the following condition:
\begin{itemize}
	\item[$(C)$] Let $x_1,\ldots, x_n \in [0,1]$. Assume that there exists $c_1,c_2 >0$ such that $c_1/n \le x_i - x_{i-1} \le c_2/n$, for $i=2,\ldots,n$.
\end{itemize}
\begin{thm}\label{thm:CvxNull}
Consider data $\{(x_i,Y_i)\}_{i=1}^n$ from model~\eqref{eq:RegMdl} and suppose that $\phi_0:[0,1] \rightarrow \R$ is affine, i.e., $\phi_0(x) = a+ b x$, for some unknown $a,b \in \R$.  Also suppose that the errors $\eps_i$, $i=1,\ldots, n$, are sub-gaussian. If condition (C) holds then 
$$T = O_p \left( n^{-1} \left(\log \frac{n}{2 c_1}\right)^{5/4} \right), $$ where $T$ is defined in \eqref{eq:TestS}.
\end{thm}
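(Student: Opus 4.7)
The plan is to exploit the scale and translation invariance of $T$ to reduce to $\bt_0 = \B 0$, $\mathbf{Y} = \sigma \B\epsilon$, and then invoke a sharp adaptive risk bound for the univariate convex least-squares estimator at an affine truth. Since $\phi_0$ affine gives $\bt_0 \in \s$, Lemma~\ref{lem:Level} yields $T(\mathbf{Y}) \stackrel{d}{=} T(\sigma\B\epsilon)$. The denominator $SSE_0 = \sigma^2 \B\epsilon^\top (\B I_n - \B P_{\s}) \B\epsilon$, where $\B P_{\s}$ is the rank-$2$ orthogonal projector onto $\s$, satisfies $SSE_0/n \to \sigma^2$ in probability by sub-gaussianity and Chebyshev, so $SSE_0 = \Theta_p(n)$. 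It therefore suffices to show that the numerator of $T$ is $O_p\!\left((\log(n/(2c_1)))^{5/4}\right)$.

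Because $\s$ is the maximal linear subspace of $\I$, the projection decomposes as $\Pi(\cdot|\I) = \Pi(\cdot|\s) + \Pi(\cdot|\Omega_I)$, and analogously for $\D$. Hence $\hat\bt_I - \hat\bt_S = \Pi(\sigma\B\epsilon|\Omega_I)$ and $\hat\bt_D - \hat\bt_S = \Pi(\sigma\B\epsilon|\Omega_D)$. Since $\Omega_D = -\Omega_I$ and $-\B\epsilon$ is sub-gaussian with the same constants as $\B\epsilon$, the analyses of $T_I$ and $T_D$ are identical, and the task reduces to proving the single bound $\|\Pi(\B\epsilon|\Omega_I)\|^2 = O_p\!\left((\log(n/(2c_1)))^{5/4}\right)$. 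This quantity is precisely the squared error, at the design points, of the univariate convex LSE when the true regression function is affine, after removing its affine part.

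The key analytic step is the adaptive risk bound for the convex LSE at an affine truth. I would start from the ``basic inequality'' $\|\Pi(\B\epsilon|\Omega_I)\|^2 \le \langle \B\epsilon, \Pi(\B\epsilon|\Omega_I)\rangle$ and, via a peeling (localization) argument, reduce the problem to controlling the expected supremum of the sub-gaussian linear process $\B v \mapsto \langle \B\epsilon, \B v\rangle$ over $\{\B v \in \Omega_I : \|\B v\| \le t\}$ for a suitable scale $t$. Under condition~(C), the empirical and Lebesgue $L_2$ norms on $[0,1]$ are equivalent up to constants depending on $(c_1,c_2)$, so Bronshtein's classical metric-entropy bound $\log N(\delta) \lesssim \delta^{-1/2}$ for uniformly bounded convex functions on $[0,1]$ applies with constants depending only on $c_1,c_2$. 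A Dudley chaining argument then produces an expectation bound of order $(\log(n/(2c_1)))^{5/4}$, which is upgraded to an $O_p$ statement by sub-gaussian (Hanson--Wright type) concentration.

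The hard part will be the adaptive bound in the last step. The rate $(\log n)^{5/4}$ is much sharper than the minimax rate $n^{4/5}$ for estimating a generic convex function, and is attainable only because affine functions lie at an ``extreme'' face of the cone $\I$, so that the tangent cone at the origin of $\Omega_I$ is itself low-complexity near zero. Establishing this requires delicate localization over $\Omega_I \cap \{\|\B v\| \le t\}$ combined with the Bronshtein entropy estimate, and a careful balancing of $t$ against the design resolution $c_1/n$---the latter being exactly what yields the logarithmic factor $\log(n/(2c_1))$ rather than a looser $\log n$. The extension from the Gaussian-noise version of this bound to sub-gaussian errors is routine but should be flagged when invoked.
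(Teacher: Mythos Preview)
Your proposal is correct and follows essentially the same route as the paper: reduce to the null via translation/scale invariance, show the denominator is $\Theta_p(n)$, and bound the numerator by the adaptive risk of the convex LSE at an affine truth. The paper's write-up is even shorter than yours because it simply cites the bound $\|\bt_0-\hat\bt_I\|^2=O_p\big((\log(n/(2c_1)))^{5/4}\big)$ from Remark~2.2 of \citet{GS13} and then mimics the triangle-inequality decomposition used in the proof of Theorem~\ref{thm:Null}, namely $\|\hat\bt_S-\hat\bt_I\|^2\le 2\|\hat\bt_S-\bt_0\|^2+2\|\hat\bt_I-\bt_0\|^2=O_p(1)+O_p\big((\log(n/(2c_1)))^{5/4}\big)$. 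Your identity $\|\hat\bt_I-\hat\bt_S\|^2=\|\Pi(\sigma\B\epsilon|\Omega_I)\|^2$ is a slightly cleaner way to isolate the numerator, but the substantive content is identical, and the basic-inequality/peeling/Bronshtein-entropy argument you sketch is precisely how \citet{GS13} obtain that rate---so you can cite it directly rather than re-deriving it.
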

\begin{remark} The proof of the above result is very similar to that of Theorem~\ref{thm:Null}; we now use the fact $ \|\B \theta_0 - \hat{\B  \theta}_I\|^2 = O_p \left((\log \frac{n}{2 c_1})^{5/4} \right)$ (see Remark 2.2 of \citet{GS13}).  
\end{remark}

The following result shows that for any twice-differentiable function $\phi_0$ on $[0,1]$ which is not affine, the power of our test converges to 1, as $n$ grows.
\begin{thm}\label{thm:CvxMisspec}
Consider data $\{(x_i,Y_i)\}_{i=1}^n$ from model~\eqref{eq:RegMdl}  where $\phi_0:[0,1] \rightarrow \R$ is assumed to be twice-differentiable. Assume that condition (C) holds and suppose that the errors $\eps_i$, $i=1,\ldots, n$, are sub-gaussian. Define the design distribution function $F_n$ as in \eqref{eq:DsgnDist} and suppose that there is a continuous strictly increasing distribution function $F$ on $[0,1]$ such that \eqref{eq:DistConv} holds.  Also, suppose that $\phi_0$ is not an affine function a.e.~$F$. Then $T  {\rightarrow}_p \ c,$ for some constant $c >0$. 
\end{thm}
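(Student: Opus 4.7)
The plan is to apply Theorem~\ref{thm:h1part} by verifying its two hypotheses in the present setting: the consistency condition~\eqref{eq:ConsEst} for the convex LSE, and the population deviation condition~\eqref{eq:PowerC} for the projections of $\bt_0$ onto $\I$ and $\D$. Once both are in hand, Theorem~\ref{thm:h1part} directly yields $T \to_p \kappa$ for some $\kappa > 0$, giving the conclusion with $c = \kappa$.

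Condition~\eqref{eq:ConsEst} follows from the rate of convergence invoked in the remark after Theorem~\ref{thm:CvxNull}: for any $\bt_0 \in \I$ coming from a convex function on $[0,1]$, under condition~(C) and sub-gaussian errors, the convex least-squares estimator $\hat\bt_I$ satisfies $\|\hat\bt_I - \bt_0\|^2 = O_p(n^{4/5}(\log n)^{5/4}) = o_p(n)$. The analogous bound holds on the concave side.

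The substantive work is in verifying~\eqref{eq:PowerC}. Let $\phi_S$ denote the $L^2(F)$-projection of $\phi_0$ onto the affine functions on $[0,1]$, and let $\phi_I, \phi_D$ denote its $L^2(F)$-projections onto the closed convex cones of convex and concave functions on $[0,1]$, respectively. First I would show non-affinity of $\phi_0$ a.e.~$F$ implies $c^\ast := \max\{\|\phi_I - \phi_S\|_{L^2(F)}, \|\phi_D - \phi_S\|_{L^2(F)}\} > 0$. The argument: if both $\phi_I$ and $\phi_D$ equal $\phi_S$ in $L^2(F)$, the first-order optimality conditions for the cone projections would give $\langle \phi_0 - \phi_S, \psi - \phi_S \rangle_{L^2(F)} \leq 0$ for every convex $\psi$ and every concave $\psi$; applying the inequality to a convex $\psi$ and its negative $-\psi$ (concave) and adding yields $L^2(F)$-orthogonality of $\phi_0 - \phi_S$ to $\psi - \phi_S$ for every $\psi$ in the linear span of convex functions. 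Since differences of convex functions are $L^2(F)$-dense in the orthogonal complement of the affine functions on $[0,1]$, this forces $\phi_0 = \phi_S$ a.e.~$F$, contradicting the hypothesis.

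Second, I would transfer this population statement to the discrete projections. Using uniform convergence~\eqref{eq:DistConv} together with the fact that $\phi_I, \phi_D$ are bounded and Lipschitz on $[0,1]$ (a consequence of twice-differentiability of $\phi_0$ and compactness of the domain), one compares the $L^2(F_n)$-constrained minimization that defines $\bt_I/\sqrt{n}$ with the $L^2(F)$-constrained minimization that defines $\phi_I$. This gives $\|\bt_I - (\phi_I(x_i))_{i=1}^n\|^2/n = o(1)$, hence $\|\bt_I - \bt_S\|^2/n \to \|\phi_I - \phi_S\|_{L^2(F)}^2$, and analogously on the concave side, establishing~\eqref{eq:PowerC} with $c = c^\ast > 0$. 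This stability step is the main obstacle: it requires continuity of $L^2$-projection onto a closed convex cone of convex functions as the weighting measure is perturbed uniformly from $F$ to $F_n$. I expect to handle it by approximating convex functions on $[0,1]$ by piecewise-linear convex functions (which are $L^\infty$-dense in convex functions on a compact interval), comparing objective values under $F_n$ versus $F$ via~\eqref{eq:DistConv}, and using strict convexity of the squared $L^2$ loss to turn closeness of objective values into closeness of minimizers.
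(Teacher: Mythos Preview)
Your proposal is correct and follows essentially the same route as the paper: verify~\eqref{eq:ConsEst} and~\eqref{eq:PowerC} and invoke Theorem~\ref{thm:h1part}, with the key structural input being that a twice-differentiable function on $[0,1]$ decomposes as a difference of two convex functions, so that the cone analogue of Lemma~\ref{lem:UniCst} (equivalently, the argument in the proof of Theorem~\ref{PowerCond}) forces at least one of $\phi_I,\phi_D$ to be non-affine in $L^2(F)$. The paper's own proof is just the remark pointing to Theorem~\ref{thm:Misspec} with this decomposition substituted for the bounded-variation one.

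Two small comments. First, the rate you quote for~\eqref{eq:ConsEst} is off: for a general convex $\phi_0$ under condition~(C) with sub-gaussian errors, the total squared error of the convex LSE is of order $n^{1/5}$ (per-point risk $n^{-4/5}$), not $n^{4/5}$; the $(\log n)^{5/4}$ bound you cite from the remark after Theorem~\ref{thm:CvxNull} is the special adaptive rate at an \emph{affine} truth. Either way the conclusion $o_p(n)$ stands, so this does not affect the argument. Second, you are more explicit than the paper about the passage from the $L^2(F)$ projections $\phi_I,\phi_D,\phi_S$ to the finite-sample projections $\bt_I,\bt_D,\bt_S$; the paper simply asserts the limit $\tfrac{1}{n}\|\bt_I-\bt_S\|^2\to\|\phi_I-\phi_S\|_{L^2(F)}^2$ (as in the proof of Theorem~\ref{thm:Misspec}), whereas you sketch a stability-of-projection argument under the perturbation $F_n\to F$. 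Your outline is reasonable, though note that convex functions on $[0,1]$ need not be uniformly Lipschitz up to the endpoints, so the approximation step should work with $L^2$ rather than $L^\infty$ control near the boundary.
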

\begin{remark} The proof of the above result is very similar to that of Theorem~\ref{thm:Misspec}; we now use the fact that a twice-differentiable function on $[0,1]$ can be expressed as the difference of two convex functions.
\end{remark}

\begin{cor}\label{cor:1DimCvxCons}
Consider the same setup as in Theorem~\ref{thm:CvxMisspec}. Then, for $\alpha \in (0,1)$, the power of the test based on $T$ converges to 1, as $n \rightarrow \infty.$
\end{cor}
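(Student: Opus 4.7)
The plan is to mimic the proof of Corollary~\ref{cor:MonoCons} exactly, replacing its two input theorems by Theorems~\ref{thm:CvxNull} and~\ref{thm:CvxMisspec}. The claim is that the rejection probability $\P(T(\mathbf{Y})>c_\alpha)$ tends to $1$, where $c_\alpha=H_n^{-1}(1-\alpha)$ is the $(1-\alpha)$-quantile of the null distribution $H_n$ of $T$.

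First I would invoke Theorem~\ref{thm:CvxNull}: under $H_0$ (with sub-gaussian errors and condition (C)), we have $T=O_p\!\left(n^{-1}(\log(n/(2c_1)))^{5/4}\right)=o_p(1)$. Since $H_n$ is the law of a sequence of random variables that converges to $0$ in probability, its $(1-\alpha)$-quantile $c_\alpha$ is a deterministic sequence that must satisfy $c_\alpha\to 0$ as $n\to\infty$. (Indeed, $H_n(\eta)\to 1$ for every $\eta>0$, so eventually $c_\alpha<\eta$.)

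Next I would invoke Theorem~\ref{thm:CvxMisspec}, which asserts that when $\phi_0$ is twice-differentiable and not a.e.\ affine under $F$, $T\to_p c$ for some constant $c>0$. Combining the two pieces, for any $\eta\in(0,c)$, eventually $c_\alpha<\eta$, and
\[
\P(T(\mathbf{Y})>c_\alpha)\ge \P(T(\mathbf{Y})>\eta)\to 1,
\]
by definition of convergence in probability to $c>\eta$. This yields the desired conclusion.

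There is essentially no obstacle: the only point requiring a line of justification is the passage from $T=o_p(1)$ under $H_0$ to $c_\alpha\to 0$, which is standard (convergence in probability to a constant implies convergence of the corresponding quantile sequence). Everything else is an immediate packaging of the two theorems already established in the section, exactly parallel to the proof of Corollary~\ref{cor:MonoCons}.
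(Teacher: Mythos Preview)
Your proposal is correct and matches the paper's approach: the paper does not even write out a separate proof for this corollary, leaving it implicit that one argues exactly as in the proof of Corollary~\ref{cor:MonoCons}, with Theorems~\ref{thm:CvxNull} and~\ref{thm:CvxMisspec} in place of Theorems~\ref{thm:Null} and~\ref{thm:Misspec}. Your added line justifying that $c_\alpha\to 0$ (rather than writing the slightly imprecise ``$c_\alpha=o_p(1)$'' for a deterministic sequence) is if anything a small improvement in rigor over the paper's phrasing.
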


\subsection{Example 3}\label{MultCvx}
Consider model~\eqref{eq:RegMdl} where now $\X := \{\B{x}_1,\ldots,\B{x}_n\} \subset \R^d$, for $d \ge 2$, is a set of $n$ distinct points and $\phi_0$ is defined on a closed convex set $\mathfrak{X} \subset \R^d$. In this subsection we address the problem of testing $H_0: \phi_0$ is affine. Recall the notation from Example 3 in the Introduction. As the convex cone $\I$ under consideration cannot be easily represented as \eqref{eq:CvxPoly}, we first discuss the computation of $\Pi(\B Y| \I)$ and $\Pi(\B Y| \D)$. We can compute $\Pi(\B Y| \I)$ by solving the following (quadratic) optimization  problem:
\begin{equation}\label{eq:CvxLSE}
\begin{aligned}
\mbox{minimize}_{\B\xi_1, \ldots, \B\xi_n;  \B\theta} & \;\;\;\; \| \mathbf{Y} -  \B{\theta} \|^2\\
\sbt \;\;\; &\;\;\;  \theta_{j} + \langle {\Delta}_{ij}, \B{\xi}_{j} \rangle \leq \theta_{i}; \;\;\; i = 1, \ldots, n; \; j = 1, \ldots, n, 
\end{aligned}
\end{equation}
where $\Delta_{ij} := \B{x}_{i} - \B{x}_j \in \R^{d}$, and $\B\xi_i \in \R^d$ and $\B\theta=(\theta_1, \ldots, \theta_{n})^\top  \in \R^{n}$; see \citet{SS11}, \citet{LG12}, \citet{Kuo08}. Note that the solution to the above problem is unique in $\B\theta$ due to the strong convexity of the objective in $\B\theta$. The computation, characterization and consistency of $\Pi(\B Y|\I)$ has been established in \citet{SS11}; also see \citet{LG12}. We use the {\it cvx} package in MATLAB to compute $\Pi(\B Y|\I)$. The projection on $\D$ can be obtained by solving \eqref{eq:CvxLSE} where we now replace the ``$\le$'' in the constraints by ``$\ge$''.

Although we expect Theorem~\ref{thm:CvxMisspec} to generalize to this case, a complete proof of this fact is difficult and beyond the scope of the paper. The main difficulty is in showing that \eqref{eq:ConsEst} holds for $d \ge 2$.  The convex regression problem described in \eqref{eq:CvxLSE} suffers from possible over-fitting at the boundary of Conv$(\X)$, where Conv$(A)$ denotes the convex hull of the set $A$. The norms of the fitted $\hat{\B \xi}_j$'s near the boundary of Conv$(\X)$ can be very large and there can be a large proportion of data points at the boundary of Conv$(\X)$ for $d \ge 2$. Note that \citet{SS11} shows that the estimated convex function converges to the true $\phi_0$ a.s.~(when $\phi_0$ is convex) only on {\it compacts} in the interior of support of the convex hull of the design points, and does not consider the boundary points.

As a remedy to this possible over-fitting we can consider solving the least squares problem over the class of convex functions that are uniformly Lipschitz. For a convex function $\psi: \mathfrak{X} \rightarrow \R$, let us denote by $\partial \psi(\B x)$ the sub-differential (set of all sub-gradients) set at $\B x \in \mathfrak{X}$, and by $\|\partial \psi(\B x)\|$ the supremum norm of vectors in $\partial \psi(\B x)$. 

For $L >0$, consider the class $\tilde \I_L$ of convex functions with Lipschitz norm bounded by $L$, i.e.,
\begin{equation}\label{eq:C_L}
\tilde \I_L := \{\psi: \mathfrak{X} \rightarrow \R|\ \psi \mbox{ is convex},\ \|\partial \psi\|_{\mathfrak{X}} \le L\}.
\end{equation} 
The resulting optimization problem can now be expressed as (compare with \eqref{eq:CvxLSE}):
\begin{equation*}
\begin{aligned}
\mbox{minimize}_{\B\xi_1, \ldots, \B\xi_n;  \B\theta} & \;\;\;\; \frac{1}{2}\| \mathbf{Y} -  \B{\theta} \|^2\\
\sbt \;\;\; &\;\;\;  \theta_{j} + \langle \B{\Delta}_{ij}, \B{\xi}_{j} \rangle \leq \theta_{i}; \;\;\; i = 1, \ldots, n; \; j = 1, \ldots, n, \\
& \|\B \xi_j\| \le L, \;\;\; j = 1, \ldots, n.
\end{aligned}
\end{equation*}
Let $\hat \bt_{I,L}$ and $\hat \bt_{D,L}$ denote the projections of $\mathbf{Y}$ onto $\I_L$ and $\D_L$, the set of all evaluations (at the data points) of functions in  $\tilde \I_L$ and $\tilde \D_L := -\tilde \I_L$, respectively. We will use the modified test-statistic 
\begin{equation*}
T_L := \frac{\min\{ \|\hat{\B \theta}_S - \hat{\B \theta}_{D,L}\|^2, \|\hat{\B \theta}_S - \hat{\B \theta}_{I,L}\|^2 \}}{\|\mathbf{Y} - \hat{\B \theta}_S\|^2}.
\end{equation*}
Note that in defining $T_L$ all we have done is to use $\hat \bt_{I,L} $ and $\hat \bt_{D,L} $ instead of $\hat \bt_{I} $ and $\hat \bt_{D} $ as in our original test-statistic $T$. In the following we show that \eqref{eq:ConsEst} holds for $T_L$. The proof of the result can be found in Section~\ref{Proofs}.

\begin{thm}\label{thm:MultCvxCons}
Consider data $\{(\B x_i,Y_i)\}_{i=1}^n$ from the regression model $Y_i = \phi_0(\B x_i) + \eps_i$, for $i=1,2,\ldots,n$, where we now assume that (i) $\mathfrak{X} =[0,1]^d$; (ii) $\phi_0 \in \tilde \I_{L_0}$ for some $L_0 >0$; (iii) $\B x_i \in \mathfrak{X}$'s are fixed constants; and (iv) $\epsilon_i$'s are i.i.d.~sub-gaussian errors. Given data from such a model, and letting $\bt_0 = (\phi_0(\B x_1), \phi_0(\B x_2), \ldots, \phi_0(\B x_n))^\top$, we can show that for any $L>L_0$,
\begin{equation}\label{eq:L_2RateLip}
	\|\hat \bt_{I,L} - \bt_0\|^2 = o_p(n).
\end{equation}
\end{thm}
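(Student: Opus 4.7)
The plan is to prove the $L_2(P_n)$-consistency of the Lipschitz-constrained convex LSE $\hat\phi$ (so that $\hat\bt_{I,L}=(\hat\phi(\B x_1),\ldots,\hat\phi(\B x_n))^\top$), since $\|\hat\bt_{I,L}-\bt_0\|^2 = n\|\hat\phi-\phi_0\|_n^2$, where $\|f\|_n^2 := \frac{1}{n}\sum_{i=1}^n f(\B x_i)^2$. Because $\phi_0\in\tilde\I_{L_0}\subset\tilde\I_L$, the basic inequality for a well-specified least-squares estimator yields
\begin{equation*}
\|\hat\phi-\phi_0\|_n^2 \;\le\; \frac{2}{n}\sum_{i=1}^n \eps_i\bigl(\hat\phi(\B x_i)-\phi_0(\B x_i)\bigr),
\end{equation*}
so it suffices to control the stochastic term on the right.

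The first step is a localization: I would argue that $\hat\phi$ lies in a deterministic sup-norm ball with high probability. From $\|\hat\phi-Y\|_n\le\|\phi_0-Y\|_n$ and the triangle inequality, $\|\hat\phi\|_n \le 2\|Y\|_n+\|\phi_0\|_\infty = O_p(1)$. Hence the index achieving $\min_i|\hat\phi(\B x_i)|$ satisfies $|\hat\phi(\B x_{i^*})|\le\|\hat\phi\|_n=O_p(1)$; combining this with the $L$-Lipschitz constraint on the compact domain $[0,1]^d$ gives $\|\hat\phi\|_\infty\le M$ for some random $M=O_p(1)$. So, up to events of vanishing probability, $\hat\phi$ lies in the class $\F_M := \{\phi\in\tilde\I_L : \|\phi\|_\infty\le M\}$, which by Kolmogorov--Tikhomirov has $L_\infty$ metric entropy of order $\eps^{-d}$ (convexity sharpens this to $\eps^{-d/2}$, but we do not need that here).

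The second step is a standard chaining argument for sub-gaussian empirical processes (see, e.g., van~der~Vaart \& Wellner Sec.~3.4, or van~de~Geer Thm.~9.1). Since $\F_M$ has polynomial bracketing entropy, Dudley's integral is finite after truncation, and a peeling argument combined with the basic inequality above yields
\begin{equation*}
\|\hat\phi-\phi_0\|_n \;=\; O_p\!\bigl(n^{-1/(2+d)}\bigr),
\end{equation*}
up to logarithmic factors. In particular $\|\hat\phi-\phi_0\|_n^2=o_p(1)$, so $\|\hat\bt_{I,L}-\bt_0\|^2=o_p(n)$ as required.

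The main obstacle is the localization step. Without the Lipschitz constraint the convex LSE can blow up near the boundary of $\mathrm{Conv}(\X)$, as the paper itself emphasizes; it is precisely the uniform Lipschitz bound defining $\tilde\I_L$ that allows us to convert an empirical $L_2$-bound on $\hat\phi$ into an $L_\infty$-bound and thereby restrict to a class of finite polynomial entropy. Once this restriction is in place, the empirical-process machinery is routine, and the desired rate follows.
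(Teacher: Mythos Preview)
Your proposal is correct and follows essentially the same two-step architecture as the paper: first localize to show $\hat\phi$ is uniformly bounded with high probability, then invoke finite metric entropy of uniformly bounded, uniformly Lipschitz convex functions together with standard empirical-process machinery (the paper cites Theorem~4.8 of van~de~Geer and the Bronshtein/Guntuboyina--Sen entropy bound $\log N \lesssim \eps^{-d/2}$).

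The only noteworthy difference is in how the localization is carried out. The paper exploits the cone structure directly: since constant vectors belong to $\I_L$, the projection characterization forces $\frac{1}{n}\sum_i \hat\bt_{I,L}^{(i)} = \bar Y$, and then the Lipschitz bound controls each coordinate's deviation from $\bar Y$. Your route---bounding $\|\hat\phi\|_n$ via the basic inequality, observing $\min_i|\hat\phi(\B x_i)| \le \|\hat\phi\|_n$, and propagating via Lipschitz---is a bit more elementary and does not use the cone-projection identities; it would work equally well for any Lipschitz-constrained (not necessarily convex) least-squares estimator. The paper also makes the reduction explicit by introducing the auxiliary estimator $\hat\bt_{I,L,B}$ over the bounded class and showing $\hat\bt_{I,L}=\hat\bt_{I,L,B_0}$ on the high-probability event, whereas you work directly with the basic inequality on that event; both formalizations are standard and equivalent.
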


\begin{remark} 
At a technical level, the above result holds because the class of all convex functions that are uniformly bounded and uniformly Lipschitz is totally bounded (under the $L_\infty$ metric) whereas the class of all convex functions is not totally bounded. 
\end{remark}
The next result shows that the power of the test based on $T_L$ indeed converges to $1$, as $n$ grows. The proof follows using a similar argument as in the proof of Theorem~\ref{thm:Misspec}.

\begin{thm}\label{thm:MultCvxPower}
Consider the setup of Theorem~\ref{thm:MultCvxCons}. Moreover, if the design distribution (of the $\B x_i$'s) converges to a probability measure $\mu$ on $\mathfrak{X}$ such that $\phi_0$ is not an affine function a.e.~$\mu$, then $T_L{\rightarrow}_p \ c,$ for some constant $c >0$. Hence the power of the test based on $T_L$ converges to 1, as $n \rightarrow \infty$, for any significance level $\alpha \in (0,1)$.
\end{thm}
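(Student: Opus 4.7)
The plan is to verify the abstract sufficient conditions \eqref{eq:ConsEst} and \eqref{eq:PowerC} for the Lipschitz-restricted cones $\I_L$ and $\D_L$, and then apply the template of Theorems~\ref{thm:h0part}--\ref{thm:h1part} verbatim. Since $\phi_0\in\tilde\I_{L_0}$ with $L_0<L$, the mean vector $\bt_0$ lies in $\I_L$, and Theorem~\ref{thm:MultCvxCons} directly supplies $\|\hat\bt_{I,L}-\bt_0\|^2=o_p(n)$, which is \eqref{eq:ConsEst}. The proof of Lemma~\ref{lem:conv} goes through verbatim for any closed convex cone, so it also gives $\|\hat\bt_{D,L}-\bt_{D,L}\|^2=o_p(n)$, where $\bt_{D,L}:=\Pi(\bt_0\,|\,\D_L)$.

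To verify \eqref{eq:PowerC}, note that $\bt_{I,L}=\bt_0$ (since $\bt_0\in\I_L$). Letting $\bt_S:=\Pi(\bt_0\,|\,\s)$, we have $\frac{1}{n}\|\bt_{I,L}-\bt_S\|^2=\frac{1}{n}\sum_{i=1}^n(\phi_0(\B x_i)-\ell^\star_n(\B x_i))^2$, where $\ell^\star_n$ is the least-squares affine fit to $\phi_0$ at the design points. Because the empirical design measure converges to $\mu$ and $\phi_0$ is bounded and continuous on $\mathfrak{X}=[0,1]^d$, a Riemann-sum / continuous-mapping argument (together with convergence of the empirical Gram matrix $\B X^\top\B X/n$ to an invertible $\mu$-limit) shows this quantity converges to $\int_{\mathfrak{X}}(\phi_0-\phi_0^\star)^2\,d\mu$, where $\phi_0^\star$ is the $L^2(\mu)$-best affine approximation to $\phi_0$. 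The limit is strictly positive by the non-affinity assumption, yielding \eqref{eq:PowerC} with some $c>0$.

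The remainder of the proof now proceeds exactly as in Theorems~\ref{thm:h0part}--\ref{thm:h1part}. The decomposition \eqref{eq:Simp2} applied to $\I_L$ gives $\|\hat\bt_{I,L}-\hat\bt_S\|^2/n\rightarrow_p\int(\phi_0-\phi_0^\star)^2\,d\mu$, the denominator satisfies $\|\mathbf{Y}-\hat\bt_S\|^2/n\rightarrow_p\sigma^2+\int(\phi_0-\phi_0^\star)^2\,d\mu$ by the WLLN, and therefore $T_L\rightarrow_p c>0$ (using the convex-branch contribution, which alone forces the statistic away from zero). Under $H_0$ the same template yields $T_L=o_p(1)$, so the critical value $c_\alpha$ is $o_p(1)$ and $\P(T_L>c_\alpha)\rightarrow 1$. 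The main obstacle is the Riemann-sum / least-squares consistency step linking the sample $\hat\bt_S$ to the population best affine approximation $\phi_0^\star$; this rests on invertibility of the limiting design Gram matrix, which is implicit in the hypothesis that ``$\phi_0$ is not affine a.e.~$\mu$'' be a meaningful condition.
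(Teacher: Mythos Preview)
Your proposal is essentially correct and follows the same template as the paper: verify \eqref{eq:ConsEst} via Theorem~\ref{thm:MultCvxCons}, verify \eqref{eq:PowerC}, and invoke the machinery of Theorems~\ref{thm:h0part}--\ref{thm:h1part}. The paper's own proof is simply ``the proof follows using a similar argument as in the proof of Theorem~\ref{thm:Misspec}'', which is exactly this route.

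You take one legitimate shortcut that the generic template of Theorem~\ref{thm:Misspec} does not: because assumption~(ii) of Theorem~\ref{thm:MultCvxCons} already places $\phi_0\in\tilde\I_{L_0}\subset\tilde\I_L$, you get $\bt_{I,L}=\bt_0$ for free and can read off \eqref{eq:PowerC} directly from $\frac{1}{n}\|\bt_0-\bt_S\|^2\to\int(\phi_0-\phi_0^\star)^2\,d\mu>0$, bypassing the population-projection argument (Theorem~\ref{PowerCond}/Lemma~\ref{lem:UniCst}) that the paper would use in the general case. This is a clean simplification that the setup here permits.

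One small inaccuracy worth flagging: you invoke Lemma~\ref{lem:conv} for $\D_L$, calling it a closed convex cone. It is not a cone---the Lipschitz bound $\|\B\xi_j\|\le L$ destroys positive homogeneity---and the proof of Lemma~\ref{lem:conv} genuinely uses the cone identities \eqref{eq:InProj}. Fortunately this is harmless for your argument: since you only need the convex branch to push the \emph{maximum} in $T_L$ away from zero, the behavior of $\hat\bt_{D,L}$ is irrelevant, and you already note this. Just delete or soften the sentence about Lemma~\ref{lem:conv} applying to $\D_L$.
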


\begin{remark}
We conjecture that for the test based on $T$, as defined in~\eqref{eq:TestS}, Theorems~\ref{thm:MultCvxCons} and~\ref{thm:MultCvxPower} will hold  under appropriate conditions on the design distribution, but a complete proof is beyond the scope of this paper.
\end{remark}
\begin{remark}
The assumption that $\mathfrak{X} =[0,1]^d$ can be extended to any compact subset of $\R^d$.
\end{remark}

\section{Extensions}\label{Ext}
\subsection{Weighted regression}\label{WtReg}

If $\B\epsilon$ is mean zero Gaussian with covariance matrix $\sigma^2\B\Sigma$, for a known positive definite $\B\Sigma$, we can readily transform the problem to the i.i.d.~case. If $\B{U}^\top\B{U}$ is the Cholesky decomposition of $\B\Sigma$, pre-multiply the model equation $\mathbf{Y}=\bt_0+\B{\epsilon}$ through by $\B{U}^\top$ to get $\tilde{\mathbf{Y}}=\tilde{\bt}_0+\tilde{\B{\epsilon}}$, where $\tilde{\epsilon}_1,\ldots,\tilde{\epsilon}_n$ are i.i.d.~mean zero Gaussian errors with variance $\sigma^2$.   Then, minimize $\| \tilde{\mathbf{Y}} - \tilde{\bt}\|^2$  over $\tilde{\bt}\in\tilde{\I}\cup\tilde{\D}$, where $\tilde{\I}$ is defined by $\tilde{\B{A}}=\B{A}(\B{U}^\top)^{-1}$.  A basis for the null space $\tilde{\s}$ is obtained by premultiplying a basis for $\s$ by $\B{U}^\top$, and generators for $\tilde{\Omega}_I=\tilde{\I}\cap\tilde{\s}^\perp $ are obtained by premultiplying the generators of $\Omega_I$ by $\B{U}^\top$.  The test may be performed within the transformed model. If the distribution of the error is non-Gaussian we can still standardize $\mathbf{Y}$ as above, and perform our test after making appropriate modifications while simulating the null distribution.

This is useful for correlated errors with known correlation function, or when the observations are weighted.   Furthermore, we can relax the assumption that the $x$ values are distinct, for if the values of $x$ are not distinct, the $Y_i$ values may be averaged over each distinct $x_i$, and the test can be performed on the averages using the number of terms in the average as weights.

\subsection{Linear versus partially linear models}\label{LinVsPrtLM}
We now consider testing against a parametric regression function, with parametrically modeled covariates.  The model is
\begin{equation}\label{eq:partlin}
Y_i=\phi_0(x_i) + \B{z}_i^\top\B{\alpha}+\epsilon_i,\;\;i=1,\ldots,n,
\end{equation}
where $\B{\alpha}$ is a $k$-dimensional parameter vector, $\B{z}_i$ is the $k$-dimensional covariate, and interest is in testing against a parametric form of $\phi_0$, such as constant or linear, or more generally $\bt_0=\B{X}\B{\beta}$ where $\s=\{\bt\in\R^n:\bt=\B{X}\B{\beta}\}$ is the largest linear space in a convex cone $\I$, for which (A1) and (A2) hold. For example, $\B{X}=\mathbf{e}$ can be used to test for the significance of the predictor, while controlling for the effects of covariates $\B{z}$.    If $\B{X}=[\mathbf{e}|\mathbf{e}_1]$, the null hypothesis is that the expected value of the response is linear in $x$, for any fixed values of the covariates.

Accounting for covariates is important for two reasons.  First, if the covariates explain some of the variation in the response, then the power of the test is higher when the variation is modeled.   Second, if the covariates are related to the predictor, confounding can occur if the covariates are missing from the model.

The assumption that the $x_i$ values are distinct is no longer practical; without covariates we could assume distinct $x_i$ values without loss of generality, because we could average the $Y_i$ values at the distinct $x_i$ and perform a weighted regression.  However, we could have duplicate $x_i$ values that have different covariate values.  Therefore, we need equality constraints as well as inequality constraints, to ensure that $\theta_{0i}=\theta_{0j}$ when $x_i=x_j$.   An appropriate cone can be defined as
$\I =  \{\bt\in\R^n:\B{A}\bt\geq\B{0}\;\;\mbox{and}\;\; \B{B}\bt=\B{0}\}$,
where $\s=\{\bt\in\R^n:\bt=\B{X}\B{\beta}\}$ is the largest linear space in $\I$. 

For identifiability considerations, we assume that the columns of $\B{Z}$ and $\B{X}$ together form a linearly independent set, where $\B{Z}$ is the $n\times k$ design matrix whose rows are $\B{z}_1,\ldots,\B{z}_n$.  
Let ${\cal L}=\s+{\cal Z}$, where ${\cal Z}$ is the column space of $\B{Z}$.   Define $\tbd_j=\B{\delta}_j-\B{P}_L\B{\delta}_j$, for $j=1,\ldots, M$, where $\B{P}_L$ is the projection matrix for the linear space ${\cal L}$ and $\B{\delta}_1,\ldots,\B{\delta}_M$ are the generators of ${\Omega}_I$. 
We may now define the cone $\tilde{\Omega}_I$ as generated by $\tbd_1,\ldots,\tbd_M$.  Similarly, the generators of $\tilde{\Omega}_D$ are $-\tbd_1,\ldots,-\tbd_M$.   

Define $\B\xi= \B{\theta}_0 +\B{Z}\B{\alpha}$.  Then $H_0:\B\xi\in{\cal L}$ is the appropriate null hypothesis and the alternative hypothesis is $\B\xi\in\tilde{\I}\cup\tilde{\D}\backslash{\cal L}$, where  $\tilde{\I}={\cal L}+\tilde{\Omega}_I$ and $\tilde{\D}={\cal L}+\tilde{\Omega}_D$.   Then ${\cal L}$ is the largest linear space contained in $\tilde{\I}$ or in $\tilde{\D}$, and it is straight-forward to verify that if $\Omega_D \subseteq\I^o$, we also have $\tilde{\Omega}_D\subseteq\tilde\I^o$.   Therefore the conditions (A1) and (A2) hold for the model with covariates, whenever they hold for the cone without covariates. 

\subsection{Additive models}\label{Additive}
We consider an extension of (\ref{eq:partlin}), where 
\[ Y_i=\phi_{01}(x_{1i})+\cdots+ \phi_{0d}(x_{di}) +\B{z}_i^\top\B{\alpha}+\epsilon_i,\] and the null hypothesis specifies parametric formulations for each $\phi_{0j}$, $j=1,\ldots,d$.  
Let $\theta_{ji}=\phi_{0j}(x_{ji})$, $j=1,\ldots,d$, and $\B\theta=\B\theta_{1}+\cdots+\B\theta_d+\B{z}_i^\top\B{\alpha}\in\R^n$.  The null hypothesis is $H_0:\B\theta_j\in\s_j$, for $j=1,\ldots,d$, or $H_0:\B\theta\in\s$ where $\s=\s_1+\cdots+\s_d+\Z$, and $\Z$ is the column space of the $n\times k$ matrix whose rows are $\B z_1,\ldots,\B z_n$.   
Define closed convex cones $\I_1,\ldots,\I_d$, where $\s_j$ is the largest linear space in $\I_j$.    Then $\I=\I_1+\cdots+\I_d+\Z$ is a closed convex cone in $\R^n$, containing the linear space $\s$.  The projection $\hat{\B\theta}$ of the data $\mathbf{Y}$ onto the cone $\I$ exists and is unique, and \citet{meyer13b} gave necessary and sufficient conditions for identifiability of the components $\hat{\B\theta}_1,\ldots,\hat{\B\theta}_d$, and $\B\alpha$.    When the identifiability conditions hold, then $\s$ is the largest linear space in $\I$.

Define $\D_j :=-\I_j$ for $i=1,\ldots,d$, and $\D=\D_1+\cdots+\D_d+\Z$.   Then $\I\cup\D$ is a double cone, and we may test the null hypothesis $H_0:\B\theta\in\s$ versus $H_a:\B\theta\in \I\cup\D\backslash\s$ {using the test statistic~\eqref{eq:TestS}}.   However, we may like to include in the alternative hypothesis the possibility that, say, $\B\theta_{1}\in\I_1$ and $\B\theta_{2}\in\D_2$. Thus, for $d=2$, we would like the alternative set to be the quadruple cone defined as the union of four cones: $\I_1+\I_2$, $\I_1+\D_2$, $\D_1+\I_2$, and $\D_1+\D_2$.  Then $\D_1+\D_2$ is the cone opposite to $\I_1+\I_2$, and $\D_1+\I_2$ is the cone opposite to $\I_1+\D_2$, and the largest linear space contained in any of these cones is $\s=\s_1+\s_2+\Z$.   For arbitrary $d\geq 1$, the multiple cone alternative has $2^d$ components; call these $\C_1,\ldots,\C_{2^d}$.  The proposed test involves projecting $\mathbf{Y}$ onto each of the $2^d$ combinations of cones, and 
\[T(\mathbf{Y}) = \frac{\max_{j=1,\ldots,2^d}  \left\{  \|\Pi(\mathbf{Y}|\s) - \Pi(\mathbf{Y}|\C_j)\|^2\right\}}{\|\mathbf{Y} - \Pi(\mathbf{Y}|\s)\|^2}.
\]
using the smallest sum of squared residuals in $T(\mathbf{Y})$.  The distribution of the test statistic is again invariant to scale and translations in $\s$, so for known error distribution $G$, the null distribution may be simulated to the desired precision.

This provides another option for testing against the linear model, that is different from the fully convex/concave alternative of Example~3, but requires the additional assumption of additivity.    It also provides tests for more specific alternatives: for example, suppose that the null hypothesis is $E(y)=\beta_0+\beta_1x_1+\beta_2x_2+\beta_3x_2^2+\beta_4z$, where $z$ is an indicator variable.  If we can assume that the effects are additive, then we can use the cone $\I_1$ of convex functions and the cone $\I_2$ of functions with positive third derivative as outlined in Example~2 of the Introduction. If the additivity assumptions are correct, this quadruple cone alternative might provide better power than the more general, fully convex/concave alternative.

\subsection{Testing against a constant function}\label{TestCnst}
The traditional $F$-test for the parametric least-squares regression model has the null hypothesis that none of the predictors is (linearly) related to the response. For an $n\times p$ full-rank design matrix, the $F$ statistic has null distribution $F(p-1,n-p)$. To test against the constant function when the relationship of the response with the predictors is unspecified, we can turn to our cone alternatives.   

Consider model~(\ref{eq:RegMdl}) where the predictor values are ${\cal X} =\{\B{x}_1, \B{x}_2,\ldots, \B{x}_n\} \subset \R^d$, $d \ge 1$. A cone that contains the one-dimensional null space of all constant vectors is defined for multiple isotonic regression using a partial order on $\mathcal{X}$. That is, $\B{x}_i\preceq \B{x}_j$ if $\B{x}_i\leq \B{x}_j$ holds coordinate-wise.  Two points $\B{x}_i$ and $\B{x}_j$ in  ${\cal X}$  are comparable if either $\B{x}_i\preceq\B{x}_j$ or $\B{x}_j\preceq\B{x}_i$.   Partial orders are reflexive, anti-symmetric, and transitive, but differ from complete orders in that pairs of points are not required to be comparable.  The regression function $\phi_0$ is isotonic with respect to $\preceq$ on ${\cal X}$ if $\phi_0(\B{x}_i)\leq\phi_0(\B{x}_j)$ whenever $\B{x}_i\preceq\B{x}_j$, and $\phi_0$ is anti-tonic if $\phi_0(\B{x}_i)\geq\phi_0(\B{x}_j)$ whenever $\B{x}_i\preceq\B{x}_j$.

In Section~\ref{TestCons} we show that assumptions (A1) and (A2) hold for the double cone of isotonic and anti-tonic functions. However, the double cone for multiple isotonic regression is unsatisfactory because if one of the predictors reverses sign, the value of the statistic~(\ref{eq:TestS}) (for testing against a constant function) also changes.   For two predictors, it is more appropriate to define a quadruple cone, considering pairs of increasing/decreasing relationships in the partial order.
For three predictors we need an octuple cone, which is comprised of four double-cones. See Section~\ref{TestCons} for more details and simulations results.

\section{Simulation studies}\label{Simul}
In this section we investigate the finite-sample performance of the proposed procedure based on $T$, as defined in \eqref{eq:TestS}, for testing the goodness-of-fit of parametric regression models.  We consider the case of a single predictor, the test against a linear regression function with multiple predictors, and the test of linear versus partial linear model, comparing our procedure with competing methods.  In all the simulation settings we assume that the errors are Gaussian. Overall our procedure performs well; although for some scenarios there are other methods that are somewhat better, none of the other methods has the same consistent good performance. Our procedure, being an exact test, always gives the desired level of significance, whereas other methods have inflated test size in some scenarios and are only approximate.   Further, most of the other methods depend on tuning parameters for the alternative fit.

The goodness-of-fit of parametric regression models has received a lot of attention in the statistical literature. \citet{StuteEtAl98} used the empirical process of the regressors marked by the residuals to construct various omnibus goodness-of-fit tests. Wild bootstrap approximations were used to find the cut-off of the test statistics. We denote the two variant test statistics -- the Kolmogorov-Smirnov type and the Cram\'{e}r-von Mises type -- by $S_1$ and $S_2$, respectively. We implement these methods using the ``IntRegGOF'' library in the R package.

\citet{Fan01} proposed a lack-of-fit test based on Fourier transforms; also see \citet{Christensen10} for a very similar method. The main drawback of this approach is that the method needs a reliable estimator of $\sigma^2$ to compute the test-statistic, and it can be very difficult to obtain such an estimator under model mis-specification. We present the power study of the adaptive Neyman test ($T_{AN,1}^*$; see equation (2.1) of \citet{Fan01}) using the  known  $\sigma^2$ (as a gold standard) and an estimated $\sigma^2$. We denote this method by $FH$.  

\citet{PS06} proposed an easy-to-implement single global procedure for testing the various assumptions of a linear model. The test can be viewed as a Neyman smooth test and relies only on the standardized residual vector. We implemented the procedure using the ``gvlma'' library in the R package and denote it by $PS$.

\subsection{Examples with a one-dimensional predictor}
Proportions of rejections for 10,000 data sets simulated from $Y_i=\phi_0(x_i)+\epsilon_i$, $i=1,\ldots,n=100$, are shown in Fig.~\ref{quadpowerplot}. In the first plot, the power of the test against a constant function is shown when the true regression function is $\phi_0(x) = 10a(x-2/3)^2_+$, where $(\cdot)_+=\max(\cdot,0)$, and the effect size $a$ ranges from $0$ to $7$.   The alternative for our proposed test (labeled $T$ in the figures) is the increasing/decreasing double cone, and the power is compared with the $F$-test with linear alternative, and the FH test with both known and estimated variance.

In the second plot, power of the test against a linear function is shown for the same ``ramp'' regression function $\phi_0(x) = 10a(x-2/3)^2_+$.  
The alternative for the proposed test is the convex/concave double cone, and the power is compared with the $F$-test with quadratic alternative, the FH test with known and estimated variance, $S_1$ and $S_2$, and the PS test.   As with the test against a constant function, ours has better power and the FH test with estimated variance has inflated test size.

Finally, we consider the null hypothesis that $\phi_0$ is quadratic, and the true regression function is $\phi_0(x)=a\exp(3x-2)$.   The double-cone alternative is as given in Example~2 of the Introduction.  The $S_2$ test has slightly higher power than ours in this situation, and the PS test has power similar to the $S_1$ test.   The FH test with known variance has a small test size compared to the target, and low power.

\begin{figure}
\centerline{\includegraphics[height=1.8in, width=5.6in]{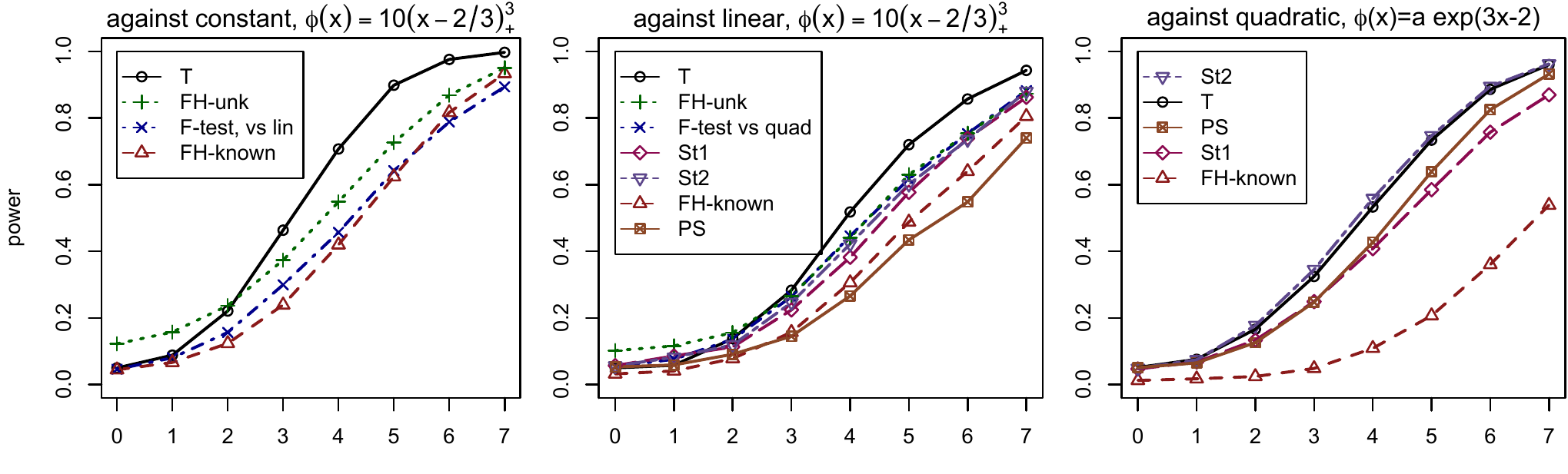} }
\caption{Power function for test against parametric models with one dimensional predictor, $n=100$ observations, equally spaced $x\in[0,1]$ and $\sigma^2=1$.}
\label{quadpowerplot}
\end{figure}

\subsection{Testing against the linear model}

We consider two data generating models. Model 1 is adapted from \citet{StuteEtAl98} (see Model 3 of their paper) and can be expressed as
	$Y = 2 + 5 X_{1} -  X_{2} + a X_1 X_2 + \eps$, 
with covariate $(X_1, \ldots , X_{d})$, where $X_1, \ldots, X_{d}$ are i.i.d.~Uniform$(0,1)$, and $\eps$ is drawn from a normal distribution with mean 0. \citet{StuteEtAl98} used $d = 2$ in their simulations but we use $d = 2, 4$.  Model 2 is adapted from \citet{Fan01} (see Example 4 of their paper) and can be written as
	$Y = X_1 + a X_2^2 + 2 X_4 + \eps$, 
where $(X_1, X_2, X_3, X_4)$ is the covariate vector. The covariates $X_1, X_2, X_3$ are normally distributed with mean 0 and variance 1 and pairwise correlation 0.5. The predictor $X_4$ is binary with probability of ``success'' 0.4 and independent of $X_1, X_2$ and $X_3$. Random samples of size $n = 100$, are drawn from Model 1 (and also from Model 2) and a multiple linear regression model is fitted to the samples, without the interaction $X_1 X_2$ term ($X_2^2$ term). Thus, the null hypothesis holds if and only if $a = 0$. In all the following $p$-value calculations, whenever required, we use 1000 bootstrap samples to estimate the critical values of the tests.  {For models~1 and~2 we implement the fully convex/concave double-cone alternative and denote the method by $T$. For model~2 we also implement the octuple cone alternative under the assumption that the effects are additive, and treating $X_4$ as a parametrically modeled covariate (as described in Section~\ref{Additive}). We denote this method by $T_2$.}

We also implement the generalized likelihood ratio test of \citet{FJ07}; see equation (4.24) of their paper (also see \citet{FJ05}). The test computes a likelihood ratio statistic, assuming normal errors, obtained from the parametric and nonparametric fits. We denote this method by $L$. As the procedure involves fitting a smooth nonparametric model, it involves the delicate choice of smoothing bandwidth(s). We use the ``np'' library in the R package to compute the nonparametric kernel estimator with the optimal bandwidth being chosen by the ``npregbw'' function in that package. This procedure is similar in spirit to that used in \citet{HM93}. To compute the critical value of the test we use the wild bootstrap method.

\begin{figure}
\centerline{\includegraphics[height=1.8in, width=5.6in]{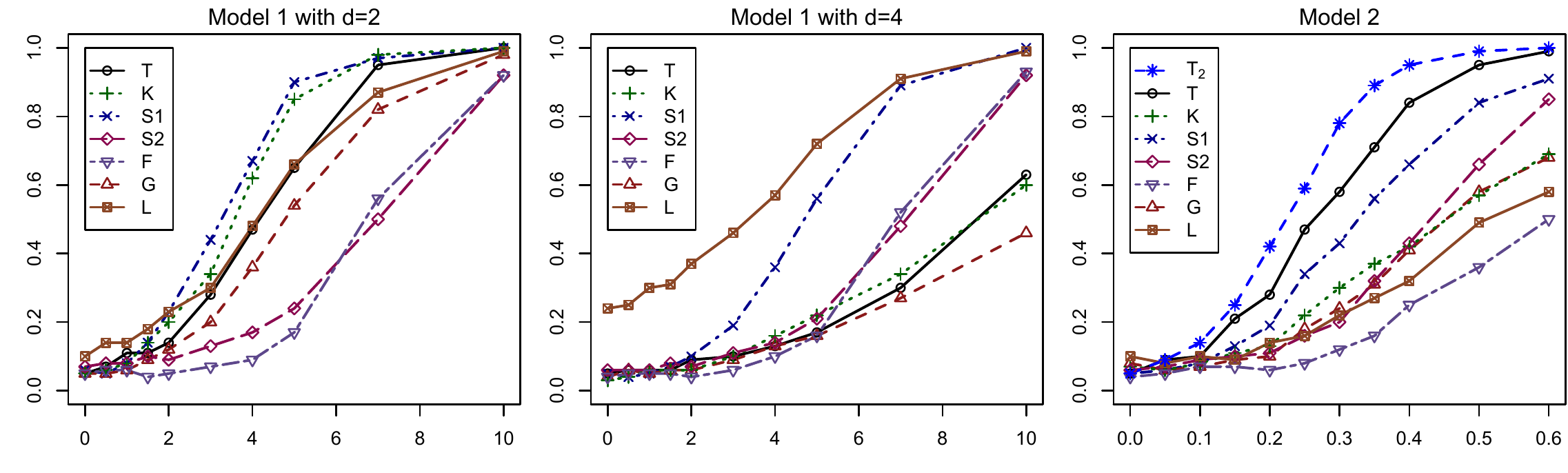}}
\caption{Power function for test against linear models for Models 1 and 2 with $n=100$ observations and $\sigma^2=1$.}
\label{LMpowerplot}
\end{figure}



We also compare our method with the recently proposed goodness-of-fit test of a linear model by \citet{SS13}. Their procedure assumes the independence of the error and the predictors in the model and tests for the independence of the residual (obtained from the fitted linear model) and the predictors. 
The critical value of the test is computed using a bootstrap approach. We denote this method by $K$. 

From Fig.~\ref{LMpowerplot} it is clear that our procedure overall has good finite sample performance compared to the competing methods. Note that as $a$ increasing, the power of our test monotonically increases in all problems. As expected, $S_1$ and $S_2$ behave poorly as the dimension of the covariate increases. The method $L$ is anti-conservative and hence shows higher power in some scenarios. It is also computationally intensive, especially for higher dimensional covariates. For model~2, both the fully convex/concave and the octuple-cone additive alternative perform quite well compared to the other methods.

\subsection{Linear versus partial linear model}
We compare the power of the test for linear versus partial linear model~(\ref{eq:partlin}), with $H_0:\phi_0$ is affine, including a categorical covariate with three levels.    The $n=100$ $x$ values are equally spaced in $[0,1]$.   We compare our test with the convex/concave alternative to the standard $F$-test using quadratic alternative, and the test for linear versus partial linear from \citet{Fan01}, Section 2.4 (labeled FH).  Two versions of the FH test are used; the first version uses an estimate of the model variance and the second assumes the variance is known.   

The first two plots in Fig.~\ref{lincovpowerplot} show power for $a=0,1,\ldots,6$ when the true function is $\phi_0(x) = 3ax^2+x$, and the target test size is $\alpha=.05$.   In the first plot, the values of the covariate are generated independently of $x$, and for the second, the predictors are related, so that the categorical covariate is more likely to have level=1 when $x$ is small, and is more likely to have level=3 when $x$ is large.  Here the $F$-test is the gold standard, because the true model satisfies all the assumptions.    The proposed test performs similarly to the FH test with known variance; for the unknown variance case the power for the FH test is larger but the test size is inflated.

In the third and fourth plots, the regression function is $\phi_0(x) =20a(x-1/2)^3+x$.   The $F$-test is not able to reject the null hypothesis because the alternative is incorrect, but the true function is also not contained in the double-cone (convex/concave) of the proposed method.   However, the proposed test still compares well with the FH test, especially when the predictors are correlated.

\begin{figure}
\centerline{\includegraphics[height=1.6in, width=5.6in]{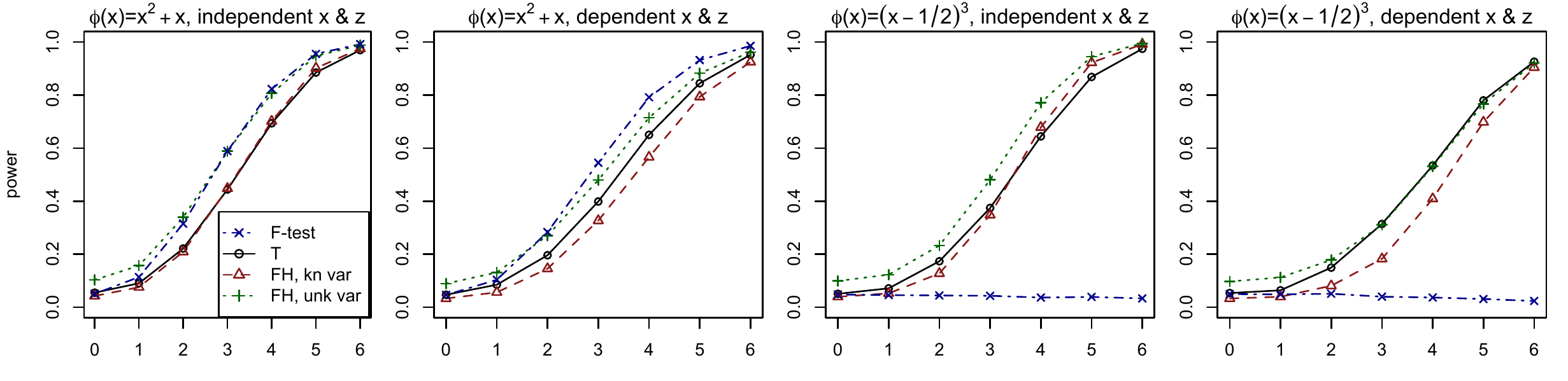} }
\caption{Power for test against affine $\phi_0$ with $n=100$ observations with equally spaced $x\in[0,1]$ and $\sigma^2=1$. The covariate $z$ is categorical with three levels.}
\label{lincovpowerplot}
\end{figure}

\subsection{Testing against constant function, with covariates}

Testing the significance of a predictor while controlling for covariate effects can be accomplished using the partial linear model~(\ref{eq:partlin}), with $H_0:\phi_0(x)\equiv c$, for some unknown $c$, using the double cone alternative for monotone $\phi_0$.  Our method is compared with the standard $F$-test with linear alternative and the FH test, both with known and unknown variance as in the previous subsection.  The first two plots of Fig.~\ref{constcovpowerplot} display power for 10,000 simulated data sets from $\phi_0(x)=ax$, for $n=100$ $x$ values equally spaced in $[0,1]$, with $a$ ranging from 0 to 3. The power of the proposed test is close to the gold-standard $F$-test, and the FH with unknown variance again has inflated test size.   When the predictors are related, the FH test has unacceptably large test size.   

In the third and fourth plots, data were simulated using $\phi_0(x)=a\sin(3\pi x)$, with $a$ ranging from 0 to 1.5.   The true $\phi_0$ is not in the alternative set for either the $F$-test or the proposed test; however the proposed test can reject the null consistently for higher values of $a$. In this scenario, the size of FH test with known variance is inflated for correlated predictors..

\begin{figure}
\centerline{\includegraphics[height=1.5in, width=5.6in]{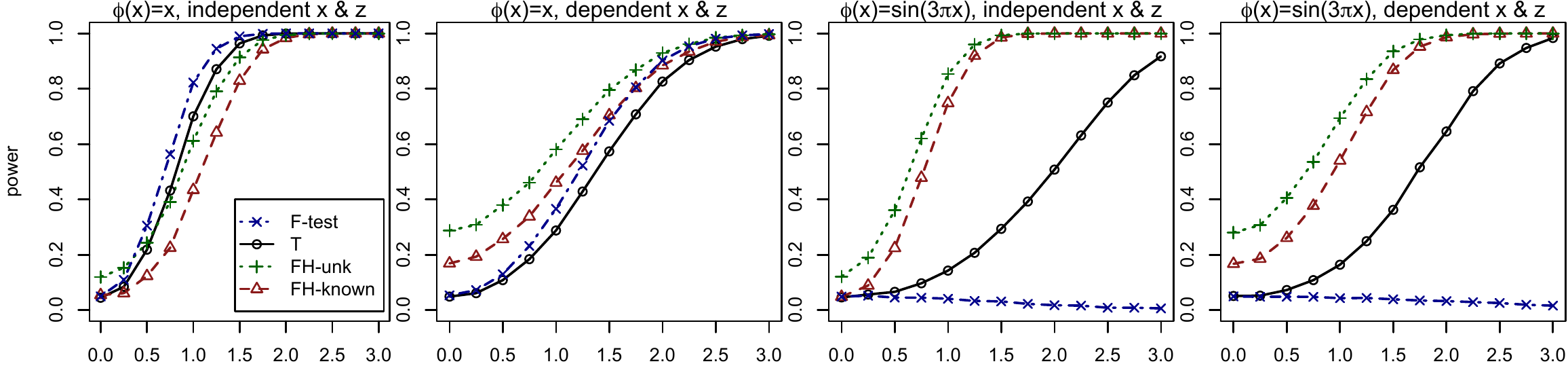} }
\caption{Power for test against constant $\phi_0$ with $n=100$ observations with equally spaced $x\in[0,1]$ and $\sigma^2=1$. The covariate $z$ is categorical with three levels.}
\label{constcovpowerplot}
\end{figure}

\subsection{Real data analysis}

{\bf Data example 1:} We study the well-known Boston housing dataset collected by \citet{HR78} to study the effect of air pollution on real estate price in the greater Boston area in the 1970s. The data consist of 506 observations on 16 variables, with each observation pertaining to one census tract. We use the version of the data that incorporates the minor corrections found by \citet{GP96}. Our procedure, assuming normal errors, yields a $p$-value of essentially 0 and rejects the linear model specification, as used in \citet{HR78}, while the method of \citet{StuteEtAl98} yields a $p$-value of more than 0.2. The method of \citet{SS13} also yields a highly significant $p$-value. \newline

\noindent {\bf Data example 2:} We consider the {\tt Rubber} data set, found in the {\tt R} package {\tt MASS}, representing ``accelerated testing of tyre rubber''.  The response variable is the abrasion loss in gm/hr, with two predictors of loss: the hardness in Shore units, and the tensile strength in kg/sq m.
A linear regression (fitting a plane to the data) provides $R^2=.84$ and the usual residual plots do not provide evidence against linearity.  Further, the Stute tests provide $p$-values of .39 and .18, respectively, and the Pe\~na and Slate test against the linear model provides $p=.92$.   The quadruple cone alternative of Section~\ref{Additive} provides $p=.047$, although the fully convex/concave double cone alternative does not reject at $\alpha=.05$.  Some insight into the true function can be found by fitting the constrained additive model, using the reasonable assumption that the expected response is decreasing in both predictors.  The fit is roughly linear in ``hardness'' but is more like a sigmoidal or step function in ``tensile strength''; these components are shown in Fig.~\ref{fig:rub}.  Although neither fit is convex or concave, the quadruple cone method can still detect the departure from linearity.
 \newline
 
\begin{figure}
\centerline{\includegraphics[height=1.7in,width=4.0in]{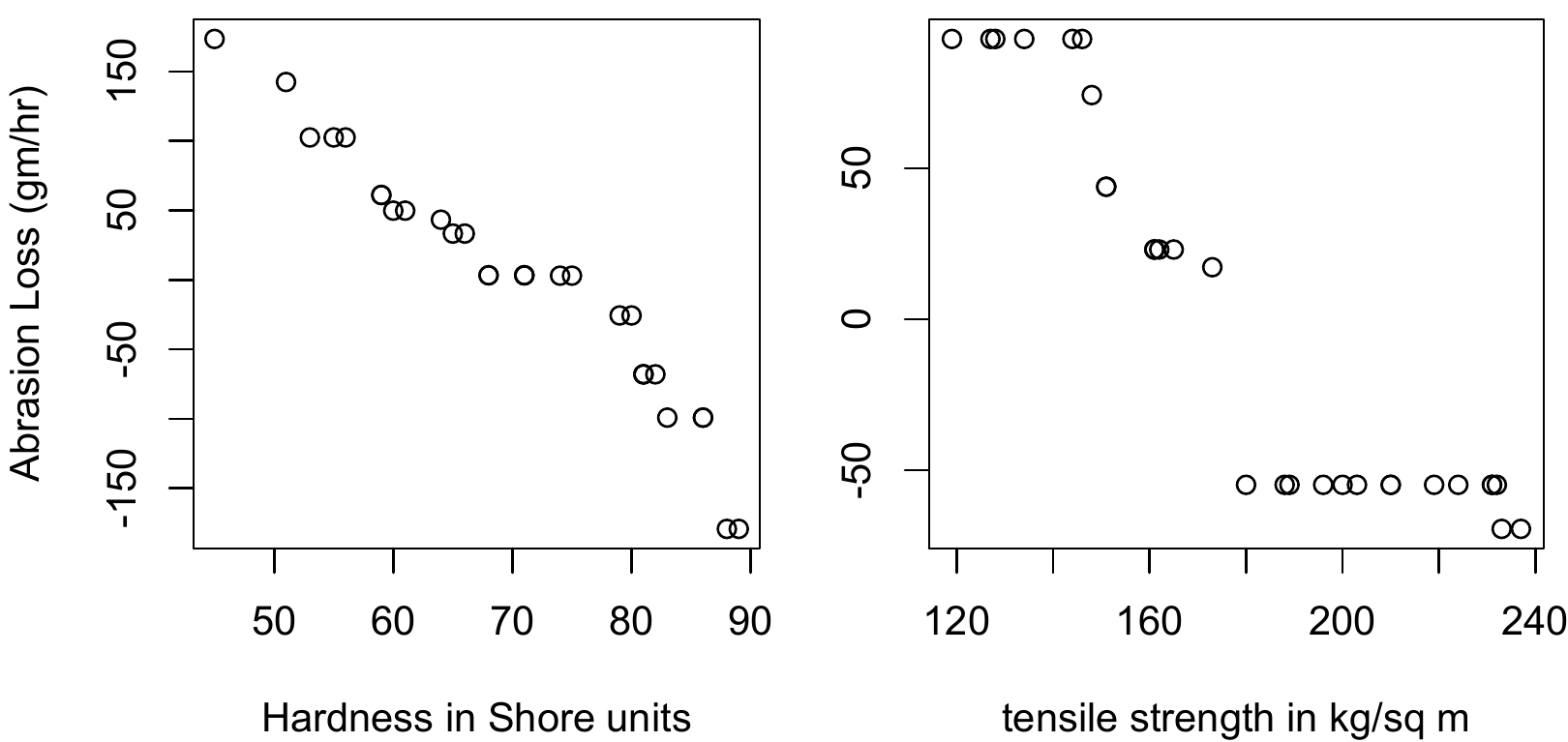}}
\caption{Centered components of the additive anti-tonic fit to the {\tt Rubber} data set.}
\label{fig:rub}
\end{figure}

\noindent {\bf Data example 3:}  To demonstrate the partial linear test, we use data from a study of predictors of blood plasma levels of the micronutrient beta carotene, in healthy subjects, as discussed by \citet{nierenberg89}.   Smoking status (current, former, never) and sex are categorical predictors whose effects on the response (log of blood plasma beta carotene) are determined to be significant at $\alpha = .05$.   If interest is in determining the effect of age of the subject, a linear relationship might be assumed; this fit is shown in the first plot of Fig.~\ref{fig:bpbc}, where the six lines correspond to the smoking/sex combinations.  The covariates must be included in the model because they are related to both the response and the predictor of interest.  To test whether the linear fit is appropriate, we use our test for linear versus partial linear model, which returns a $p$-value of $.047$.   The convex and concave fits have similar sums of squared residuals, with the concave fit having the smaller, so the concave fit represents the projection of the response vector onto the double cone.   However, the fact that the convex fit is almost as close to the data implies that neither is correct; perhaps the true function is concave at the left and convex at the right.

\begin{figure}
\centerline{$\hspace{0.7in}$\includegraphics[height=1.7in,width=6.2in]{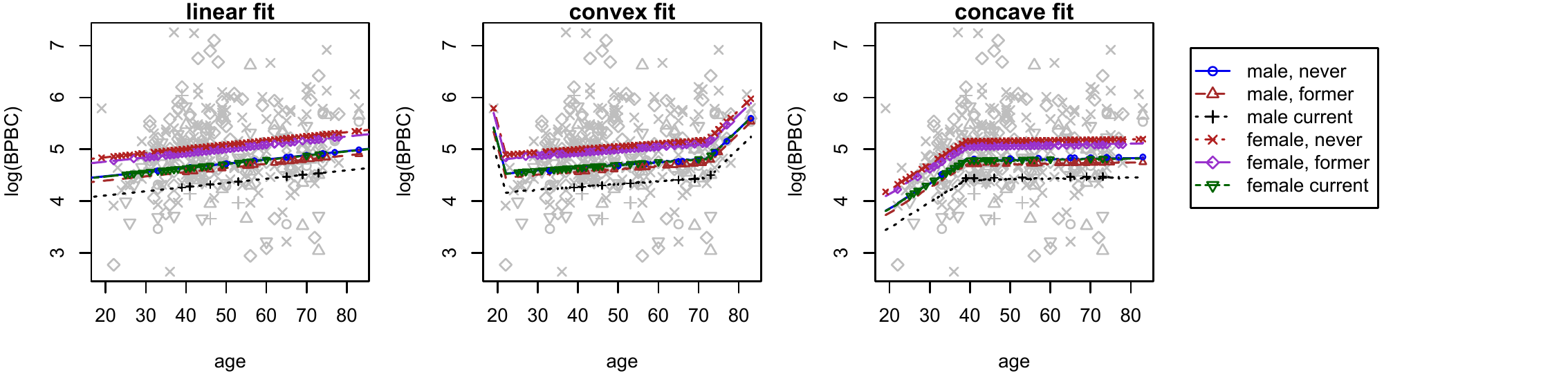}}
\caption{Log of blood plasma beta carotene (BPBC) is plotted against age, with plot character representing combinations of sex and smoking status. }
\label{fig:bpbc}
\end{figure}

\section{Discussion}
We have developed a test against a parametric regression function, where the alternative involves large-dimensional convex cones. The critical value of the test can be easily computed, via simulation, and the test is exact if we assume a known form of the error distribution. For a given parametric model,  a very general alternative is guaranteed to have power tending to one as the sample size increases, under mild conditions.  However, if additional {\em a priori} assumptions are available, these can be incorporated to boost the power in small to moderate-sized samples. For example, when testing against an additive linear function such as $\phi_0(x_1,x_2,x_3) = \beta_0+\beta_1x_1 + \beta_2x_2+\beta_3x_3$, we can use the ``fully convex'' model of Example~3, or if we feel confident that the additivity assumption is valid, we can use the octuple cone of Section~\ref{Additive}.   This power improvement was seen in model~2 simulations, in Fig.~\ref{LMpowerplot}.

The authors have provided {\tt R} and {\tt matlab} routines for the general method and for the specific examples.   In the {\tt R} package {\tt DoubleCone}, there are three functions.  The first, {\tt doubconetest} is the generic version; the user provides a constraint matrix that defines the cone $\I$ for which the null space of the constraint matrix is $\s$ and (A1) and (A2) hold.  The function provides a $p$-value for the test that the expected value of a vector is in the null space using the double-cone alternative.    The function {\tt agconst} performs a test of the null hypothesis that the expected value of {\tt y} is constant versus the alternative that it is monotone (increasing or decreasing) in each of the predictors, using double, quadruple, or octuple cones.   Finally, the function {\tt partlintest} performs a test of a linear model versus a partial linear model, using a double-cone alternative.   The user can test against a constant, linear, or quadratic function, while controlling for the effects of (optional) covariates.    The {\tt matlab} routine ({\tt http://www.stat.columbia.edu/$\sim$bodhi/Bodhi/Publications.html}) performs the test of Example 3.  

\appendix
\section{Appendix}
\subsection{Unbiasedness}\label{Unbiased}
We show that the power of the test for $\pmb{\theta}_0\in \I\cup\D$ is at least as large as the test size. In the following we give the proof of Theorem~\ref{unbiasedness} in the main paper.
\begin{thm}\label{Unbiasedness}
(Restatement of Theorem~\ref{unbiasedness}) Let $\mathbf{Y}_0 := \B s + \sigma \B \eps$, for $\B s \in \s$, where the components of $\B \eps$ are i.i.d.~$G$. Suppose further that $G$ is a symmetric (around 0) distribution. Choose any $\bt\in\Omega_I$, and let $\mathbf{Y}_1 :=\mathbf{Y}_0+\bt$. Then for any $a>0$,  
\begin{equation}\label{eq:Unbiasedness}
	\P\left(T(\mathbf{Y}_1)>a\right)\geq \P\left(T(\mathbf{Y}_0)>a\right).
\end{equation}
\end{thm}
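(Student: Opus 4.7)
By Lemma~\ref{lem:Level} the distribution of $T$ is invariant under positive scaling and under translations in $\s$; without loss of generality I take $\B{s} = \B{0}$ and $\sigma = 1$, so that $\mathbf{Y}_0 = \B\eps$ and $\mathbf{Y}_1 = \bt + \B\eps$. Exploiting the symmetry of $G$ around $0$, so that $\B\eps \stackrel{d}{=} -\B\eps$, one obtains
\[
\P(T(\bt + \B\eps) > a) \;=\; \tfrac{1}{2}\bigl[\P(T(\bt + \B\eps) > a) + \P(T(\bt - \B\eps) > a)\bigr],
\]
while the evenness $T(-\B v) = T(\B v)$ gives $\P(T(\B\eps) > a) = \P(T(-\B\eps) > a)$. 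The task is therefore to compare the symmetric average $\tfrac{1}{2}\bigl[\P(T(\bt + \B\eps) > a) + \P(T(\bt - \B\eps) > a)\bigr]$ with $\P(T(\B\eps) > a)$.

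The heart of the argument is a pair of complementary cone-projection inequalities that exploit \emph{both} $\bt \in \Omega_I$ and, via assumption (A2), $\bt \in \Omega_D^o$. First, because $\bt \in \Omega_I$ and $\Omega_I$ is closed under addition, the vector $\bt + \Pi(\B u|\Omega_I)$ is feasible in the projection of $\bt + \B u$ onto $\Omega_I$; combining its suboptimality with the cone-Pythagorean identity $\|\B v\|^2 = \|\Pi(\B v|\Omega_I)\|^2 + \|\Pi(\B v|\Omega_I^o)\|^2$ (Moreau) yields
\[
\|\Pi(\bt + \B u|\Omega_I)\|^2 \;\geq\; \|\bt\|^2 + 2\langle \bt, \B u\rangle + \|\Pi(\B u|\Omega_I)\|^2.
\]
Dually, since (A2) delivers $\Omega_I \subseteq \D^o \subseteq \Omega_D^o$, the variational characterization~\eqref{eq:InProj} of the projection onto $\Omega_D$, used with $\langle \bt, \B d\rangle \leq 0$ for every $\B d \in \Omega_D$, produces
\[
\|\Pi(\bt + \B u|\Omega_D)\|^2 \;\leq\; \|\Pi(\B u|\Omega_D)\|^2.
\]
Replacing $\B u$ by $-\B u$ delivers the analogous bounds at $\bt - \B u$. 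Together with the exact identity $\|\bt + \B u - \Pi(\bt + \B u|\s)\|^2 = \|\bt\|^2 + 2\langle\bt,\B u\rangle + \|\B u - \Pi(\B u|\s)\|^2$ (which uses $\bt \in \s^\perp$), these inequalities give tight control of how each of $\|\Pi(\cdot|\Omega_I)\|^2$, $\|\Pi(\cdot|\Omega_D)\|^2$ and $SSE_0$ evolves along the pair $\{\bt + \B u,\,\bt - \B u\}$.

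The last step is to extract the stochastic inequality $\P(T(\bt + \B\eps) > a) \geq \P(T(\B\eps) > a)$ from these bounds. The main obstacle is that the shift $\bt \in \Omega_I$ simultaneously boosts the numerator of $T_I$, depresses the numerator of $T_D$, and inflates the denominator $SSE_0$, so that $T = \max(T_I,T_D)$ need not grow sample by sample -- a naive pointwise comparison of the form ``$T(\bt + \B u) \geq T(\B u)$'' can fail. The symmetric pairing $(\bt + \B\eps,\,\bt - \B\eps)$ set up in the first step is precisely what is needed to counterbalance these effects: for each realization $\B u$, at least one of $\langle\bt,\B u\rangle$ and $\langle\bt,-\B u\rangle$ is non-negative, and on that side of the pair the numerator gain for $T_I$ dominates the inflation of $SSE_0$, while (A2) ensures that the loss incurred in the $\Omega_D$ direction on the opposite side is bounded by the corresponding $\Omega_I$ gain. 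Averaging against the symmetric density of $\B\eps$ then converts the two cone inequalities above into the expected inequality $\E\bigl[\tfrac{1}{2}(\ind{T(\bt+\B\eps)>a}+\ind{T(\bt-\B\eps)>a})\bigr] \geq \E[\ind{T(\B\eps)>a}]$, which by the symmetric decomposition of $\P(T(\bt + \B\eps) > a)$ is exactly~\eqref{eq:Unbiasedness}.
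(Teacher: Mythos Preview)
Your inequalities (1), (2) and (3) are all correct, and the symmetrization $\P(T(\bt+\B\eps)>a)=\tfrac12[\P(T(\bt+\B\eps)>a)+\P(T(\bt-\B\eps)>a)]$ is a sensible starting point. The problem is the final paragraph: it is a description of what one hopes will happen, not a proof. You never actually produce the inequality
\[
\E\Bigl[\tfrac12\bigl(\ind{T(\bt+\B\eps)>a}+\ind{T(\bt-\B\eps)>a}\bigr)\Bigr]\ \ge\ \E\bigl[\ind{T(\B\eps)>a}\bigr],
\]
and your ingredients, by themselves, are not strong enough to yield it. From (1) and (3) you only get $T_I(\bt+\B u)\ge (c+p)/(c+r)$ with $c=\|\bt\|^2+2\langle\bt,\B u\rangle$, which dominates $T_I(\B u)=p/r$ \emph{only when $c\ge 0$}. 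When $c<0$ the bound $(c+p)/(c+r)$ can fall strictly below $T(\B u)$, and your inequality (2) on the $\Omega_D$ side is an \emph{upper} bound on $T_D(\bt+\B u)$, so it cannot rescue the comparison. The vague sentence ``the loss incurred in the $\Omega_D$ direction on the opposite side is bounded by the corresponding $\Omega_I$ gain'' does not correspond to any inequality you have written down.

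What is missing is exactly the step the paper isolates as its Lemma: using (A2) in the form $-\bt\in\Omega_D\subseteq\I^o$ (rather than $\bt\in\Omega_D^o$ as in your (2)), one shows $\|\Pi(\bt+\B u|\Omega_I)\|^2\ge\|\Pi(\B u|\Omega_I)\|^2$ \emph{unconditionally} in $\B u$, because $\Pi(\bt+\B u|\I^o)-\bt$ is feasible in the projection of $\B u$ onto $\I^o$. Combining this with your (1) handles both signs of $c$ and gives the pointwise relation $T_I(\bt+\B u)\ge T_I(\B u)$ for every $\B u$. The paper then closes the argument not by your indicator-sum average but by conditioning on $A=\{\,\|\Pi(\B\eps|\Omega_I)\|^2\ge\|\Pi(\B\eps|\Omega_D)\|^2\,\}$: on $A$ one has $T(\B\eps)=T_I(\B\eps)\le T(\bt+\B\eps)$, and symmetry of $G$ sends $A^c$ to $A$ while sending $\bt-\B\eps$ to $\bt+\B\eps$, which is how the two halves are matched. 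Your write-up contains neither the unconditional lower bound nor this conditioning device, and without at least one of them the last step does not go through.
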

Without loss of generality, we assume that $\B s = \B 0$ as the distribution of $T$ is invariant for any $\B s \in \s$, by Lemma~\ref{lem:Level}. 
To prove~\eqref{eq:Unbiasedness}, define $X_1 :=\|\Pi(\mathbf{Y}_0|\Omega_I)\|^2$ and $X_2 :=\|\Pi(-\mathbf{Y}_0|\Omega_I)\|^2$.  Then $X_1$ and $X_2$ have the same distribution as $G$ is a symmetric around 0, and $\|\Pi(\mathbf{Y}_0|\Omega_D)\|^2=\|\Pi(-\mathbf{Y}_0|\Omega_I)\|^2$.  In particular,
\[ \max\left\{\|\Pi(\mathbf{Y}_0|\Omega_I)\|^2,\|\Pi(\mathbf{Y}_0|\Omega_D)\|^2 \right\}=\max\left\{X_1,X_2\right\}=:T_0.
\]  Let $A$ be the event that $\{X_1\geq X_2\}$.  By symmetry $\P(A)=1/2$, and for any $a>0$,
\begin{eqnarray*}
\P(T_0\geq a) &=&\frac{1}{2} \left[\P(T_0\geq a|A)+\P(T_0\geq a|A^c)\right] \\
&=& \frac{1}{2} \left[ \P(X_1\geq a|A)+\P(X_2\geq a|A^c)\right].
\end{eqnarray*}
Let $\mathbf{Y}_2 :=\mathbf{Y}_0-\bt$ and define $W_1 :=\|\Pi(\mathbf{Y}_1|\Omega_I)\|^2$, $W_2 :=\|\Pi(-\mathbf{Y}_1|\Omega_I)\|^2$, $W_3 :=\|\Pi(\mathbf{Y}_2|\Omega_I)\|^2$, and $W_4 :=\|\Pi(-\mathbf{Y}_2|\Omega_I)\|^2$.  Then $W_1$ and $W_4$ are equal in distribution, as are $W_2$ and $W_3$.

\begin{lemma} $W_1\geq X_1$ and $W_4\geq X_2$.
\end{lemma}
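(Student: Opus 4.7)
The plan is to prove $W_1\geq X_1$ directly from the projection characterization in~\eqref{eq:InProj} and the positive-correlation consequence of assumption (A2); the claim $W_4\geq X_2$ then follows by the identical argument applied to $-\mathbf{Y}_0$ in place of $\mathbf{Y}_0$, since $\mathbf{Y}_2=\mathbf{Y}_0-\bt$ so $-\mathbf{Y}_2=(-\mathbf{Y}_0)+\bt$ has the same structure as $\mathbf{Y}_1=\mathbf{Y}_0+\bt$.

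Set $P:=\Pi(\mathbf{Y}_0|\Omega_I)$ and $Q:=\Pi(\mathbf{Y}_1|\Omega_I)=\Pi(\mathbf{Y}_0+\bt|\Omega_I)$, so $X_1=\|P\|^2$ and $W_1=\|Q\|^2$. From the first equation in~\eqref{eq:InProj} applied to the projection of $\mathbf{Y}_0$ onto $\Omega_I$, we have $\langle \mathbf{Y}_0,P\rangle=\|P\|^2$. From the second equation in~\eqref{eq:InProj} applied to the projection of $\mathbf{Y}_0+\bt$ onto $\Omega_I$, using the fact that $P\in\Omega_I$, we get $\langle \mathbf{Y}_0+\bt-Q,P\rangle\leq 0$, and hence
\begin{equation*}
\|P\|^2+\langle\bt,P\rangle \;\leq\; \langle Q,P\rangle.
\end{equation*}
By Cauchy--Schwarz, $\langle Q,P\rangle\leq\|Q\|\,\|P\|$, so
\begin{equation*}
\|P\|^2+\langle\bt,P\rangle\;\leq\;\|Q\|\,\|P\|.
\end{equation*}

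Now I invoke assumption~(A2). Since $P,\bt\in\Omega_I\subset\I$, the vector $-P$ lies in $\D$; and $\bt\in\Omega_I\subset\D^o$ by~(A2). By definition of the polar cone, $\langle \bt,-P\rangle\leq 0$, which means $\langle\bt,P\rangle\geq 0$. Combining this with the previous display gives $\|P\|^2\leq\|Q\|\,\|P\|$. If $P=\B 0$ the inequality $W_1\geq X_1=0$ is trivial; otherwise we may divide by $\|P\|$ to obtain $\|P\|\leq\|Q\|$, and squaring yields $W_1\geq X_1$. The same chain of inequalities, run with $\mathbf{Y}_0$ replaced by $-\mathbf{Y}_0$ (so that $X_2=\|\Pi(-\mathbf{Y}_0|\Omega_I)\|^2$ plays the role of $X_1$ and $W_4=\|\Pi(-\mathbf{Y}_0+\bt|\Omega_I)\|^2$ plays the role of $W_1$), delivers $W_4\geq X_2$.

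There is no real obstacle here beyond noticing the right use of~(A2): the assumption $\Omega_I\subset\D^o$ is exactly what forces the cross-term $\langle\bt,P\rangle$ to be nonnegative, which is the only step that could fail for a generic convex cone. All other ingredients are the standard two-line characterization of projections onto closed convex cones together with Cauchy--Schwarz.
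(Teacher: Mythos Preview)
Your proof is correct, and it takes a genuinely different route from the paper's. The paper argues via the polar cone $\I^o$ and Lemma~\ref{projresid}: it writes $W_1=\|\mathbf{Y}_1-\Pi(\mathbf{Y}_1|\I^o)\|^2-\|\Pi(\mathbf{Y}_1|\s)\|^2$, then uses that (A2) forces $-\bt\in\I^o$, so that $\Pi(\mathbf{Y}_1|\I^o)-\bt\in\I^o$ and hence $\|\mathbf{Y}_0-(\Pi(\mathbf{Y}_1|\I^o)-\bt)\|^2\geq\|\mathbf{Y}_0-\Pi(\mathbf{Y}_0|\I^o)\|^2$. Your argument instead stays entirely inside $\Omega_I$, combining the two projection identities in~\eqref{eq:InProj} with Cauchy--Schwarz and the nonnegativity of $\langle\bt,P\rangle$ that follows from (A2). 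Your approach is more self-contained---it avoids the residual/polar-cone machinery of Lemma~\ref{projresid} and the orthogonal decomposition $\I=\s\oplus\Omega_I$---at the cost of a slightly less geometric picture. Both arguments hinge on exactly the same consequence of (A2), namely that vectors in $\Omega_I$ are nonnegatively correlated with each other, and both handle $W_4\geq X_2$ by the obvious symmetry $-\mathbf{Y}_2=(-\mathbf{Y}_0)+\bt$.
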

\begin{proof} Using Lemma~\ref{projresid},
\begin{eqnarray*}
W_1 &=& \|\Pi(\mathbf{Y}_1|\Omega_I)\|^2 \\
&=& \|\Pi(\mathbf{Y}_1|\I)\|^2 -  \|\Pi(\mathbf{Y}_1|\s)\|^2 \\
&=&\|\mathbf{Y}_1-\Pi(\mathbf{Y}_1|\I^o)\|^2  -  \|\Pi(\mathbf{Y}_1|\s)\|^2\\
&=& \|\mathbf{Y}_1-\bt-[\Pi(\mathbf{Y}_1|\I^o)-\bt]\|^2 -  \|\Pi(\mathbf{Y}_1|\s)\|^2\\
&\geq&  \|\mathbf{Y}_0-\Pi(\mathbf{Y}_0|\I^o)\|^2  -  \|\Pi(\mathbf{Y}_1|\s)\|^2 \\
&=&  \|\Pi(\mathbf{Y}_0|\I)\|^2  -  \|\Pi(\mathbf{Y}_1|\s)\|^2 \\
&=&  \|\Pi(\mathbf{Y}_0|\Omega_I)\|^2 =X_1,   \\
\end{eqnarray*}
where the last equality uses $\Pi(\mathbf{Y}_1|\s)=\Pi(\mathbf{Y}_0|\s)$.  The inequality holds because $\Pi(\mathbf{Y}_1|\I^o)-\bt\in\I^o$, by (A2).  The proof of $W_4\geq X_2$ is similar.
\end{proof}
{\em Proof of Theorem~\ref{Unbiasedness}:}
Let $S=\max(W_1,W_2)$, which is equal in distribution to $\max(W_3,W_4)$.  For any $a>0$,
\begin{eqnarray*}
 \P(S>a)&=&\frac{1}{2}\left[ \P(S>a|A)+\P(S>a|A^c)\right]  \\
& \geq& \frac{1}{2}\left[ \P(W_1>a|A)+\P(W_4>a|A^c)\right]  \\
& \geq& \frac{1}{2}\left[ \P(X_1>a|A)+\P(X_2>a|A^c)\right]  \\
&=& \P(T_0>a).
\end{eqnarray*}

Finally, we note that $T(\mathbf{Y}_1) =S/SSE_0$ and $T(\mathbf{Y}_0) =T_0/SSE_0$,
where $SSE_0$ is the sum of squared residuals of the projection of either $\mathbf{Y}_1$ or $\mathbf{Y}_0$ onto $\s$,  because $\bt$ is orthogonal to $\s$.  \qed

\subsection{Some intuition under which the power goes to~1}\label{PowerTo1}
The following result shows that if both projections $\bt_I$ and $\bt_D$ belong to $\s$ then $\bt_0$ must itself lie in $\s$, if $\I$ is a ``large'' cone. This motivates the fact that if $\bt_0 \notin \s$, then both $\bt_I$ and $\bt_D$ cannot be very close to $\bt_S$, and~\eqref{eq:PowerC} might hold. The largeness of $\I$ can be represented through the following condition. Suppose that any $\pmb \xi \in \R^n$ can be expressed as 
\begin{eqnarray}\label{eq:LargeI}
\pmb \xi = \pmb \xi_I + \pmb \xi_D
\end{eqnarray} 
for some $\pmb \xi_I \in \I$ and $\pmb \xi_D \in \D$. This condition holds for $\I=\{\bt:\pmb{A}\bt\geq\pmb{0}\}$ where $\pmb{A}$ is {\em irreducible}, and $\s$ is the null row space of $\pmb{A}$.   A constraint matrix is irreducible as defined by \citet{meyer99} if the constraints defined by the rows are in a sense non-redundant.  Then bases for $\s$ and $\Omega_I$ together span $\R^n$, so any $\bt_0 \in \R^n$ can be written as the sum of vectors in $\s$, $\Omega_I$, and $\Omega_D$ simply by writing $\bt_0$ as a linear combination of these basis vectors, and gathering terms with negative coefficients to be included in the $\Omega_D$ component.

\begin{lemma}\label{lem:UniCst}
If~\eqref{eq:LargeI} holds then $\pmb \theta_0 \in \s$ if and only if $\pmb \theta_I \in \s$ and $\pmb \theta_D \in \s$.
\end{lemma}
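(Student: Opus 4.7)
The plan is to handle each direction separately. The forward direction is essentially immediate: if $\bt_0 \in \s$, then because $\s \subset \I$ (given) and $\s = -\s \subset -\I = \D$, the projections satisfy $\bt_I = \bt_D = \bt_0 \in \s$. So the content is in the converse.

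For the converse, assume $\bt_I \in \s$ and $\bt_D \in \s$. The first step is to identify these projections with the projection onto $\s$. By~\eqref{eq:ProjC}, $\bt_0 - \bt_I$ is orthogonal to $\s$; combined with $\bt_I \in \s$, this means $\bt_I$ satisfies the defining conditions of $\bt_S := \Pi(\bt_0 | \s)$, so $\bt_I = \bt_S$. The same argument gives $\bt_D = \bt_S$. Next, I use the fact that $\bt_0 - \bt_I \in \I^o$ (the residual of projection onto a convex cone lies in its polar; see Lemma~\ref{projresid}) and analogously $\bt_0 - \bt_D \in \D^o$. Together, $\bt_0 - \bt_S$ lies in $\I^o \cap \D^o$.

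The final step invokes the hypothesis~\eqref{eq:LargeI}: decompose $\bt_0 - \bt_S = \pmb\xi_I + \pmb\xi_D$ with $\pmb\xi_I \in \I$ and $\pmb\xi_D \in \D$. Then
\begin{equation*}
\|\bt_0 - \bt_S\|^2 \;=\; \langle \bt_0 - \bt_S,\, \pmb\xi_I \rangle + \langle \bt_0 - \bt_S,\, \pmb\xi_D \rangle \;\le\; 0,
\end{equation*}
since each inner product is nonpositive by the polarity relations $\bt_0 - \bt_S \in \I^o$ and $\bt_0 - \bt_S \in \D^o$. Hence $\bt_0 = \bt_S \in \s$.

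There is no serious obstacle: the only step requiring any thought is recognizing that $\bt_I = \bt_S$ under the assumption $\bt_I \in \s$, which drops out of the first-order conditions~\eqref{eq:InProj}--\eqref{eq:ProjC}. The rest is a short computation that exploits the decomposition hypothesis to collapse the polar containment into genuine orthogonality.
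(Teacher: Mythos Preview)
Your proof is correct and follows essentially the same route as the paper's: both show $\bt_I = \bt_D$ (you additionally identify this common value as $\bt_S$), then use the projection inequality~\eqref{eq:InProj} together with the decomposition hypothesis~\eqref{eq:LargeI} to force $\bt_0 - \bt_I = \pmb 0$. The only cosmetic difference is that the paper applies the inequality to an arbitrary $\pmb\xi \in \R^n$ and then uses $\pm\pmb\xi$, whereas you apply it specifically to $\pmb\xi = \bt_0 - \bt_S$ to obtain $\|\bt_0 - \bt_S\|^2 \le 0$ directly.
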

\begin{proof}
Suppose that $\pmb \theta_0 \in \s$. Then $\pmb \theta_0 \in \I$ and thus $\pmb \theta_I = \pmb \theta_0$. Similarly, $\pmb \theta_0 \in \D$ and $\pmb \theta_D = \pmb \theta_0$. Hence,  $\pmb \theta_0 = \pmb \theta_I = \pmb \theta_D \in \s$.

Suppose now that $\pmb \theta_I, \pmb \theta_D \in \s$.   By~\eqref{eq:ProjC}, $\langle \bt_0-\bt_I,\pmb s\rangle =\langle \bt_0-\bt_D,\pmb s\rangle =0$ for any $\pmb s\in{\s}$; this implies that $\langle\bt_I,\pmb{s}\rangle=\langle\bt_D,\pmb{s}\rangle$ for all $\pmb s\in{\s}$, so $\bt_I=\bt_D=:\pmb\gamma$.   From~(\ref{eq:InProj}) applied to ${\cal D}$ and ${\cal I}$ it follows that
$$\langle \pmb \theta_{0} - \pmb \gamma, \pmb \xi_I + \pmb \xi_D \rangle \le 0,$$ for all $\pmb \xi_I \in \I$ and $\pmb \xi_D \in \D$. As any $\pmb \xi \in \R^n$ can be expressed as $\pmb \xi = \pmb \xi_I + \pmb \xi_D$ for $\pmb \xi_I \in \I$ and $\pmb \xi_D \in \D$, the above display yields $\langle \pmb \theta_{0} - \pmb \gamma, \pmb \xi \rangle \le 0$ for all $\pmb \xi \in \R^n$. Taking $\pmb \xi$ and $-\pmb \xi$ in the above display we get that $\langle \pmb \theta_{0} - \pmb \gamma, \pmb \xi \rangle = 0$ for all $\pmb \xi \in \R^n$, which implies that $\pmb \theta_{0} = \pmb \gamma$, thereby proving the result.
\end{proof}

\subsection{Testing against a constant function}\label{TestCons}\label{TestCnst}
The traditional $F$-test for the parametric least-squares regression model has the null hypothesis that none of the predictors is (linearly) related to the response. For an $n\times p$ full-rank design matrix, the $F$ statistic has null distribution $F(p-1,n-p)$. To test against the constant function when the relationship of the response with the predictors is unspecified, we can turn to our cone alternatives.   

Consider model~(\ref{eq:RegMdl}) where $\phi_0$ is the unknown true regression function and ${\cal X} =\{\pmb{x}_1, \pmb{x}_2,\ldots, \pmb{x}_n\} \subset \R^d$ is the set of predictor values.  We can assume without loss of generality that there are no duplicate $\pmb{x}$ values; otherwise we average the response values at each distinct $\pmb{x}$, and do weighted regression.  Interest is in $H_0:\phi_0\equiv c$ for some unknown scalar $c\in\R$, against a general alternative. A cone that contains the one-dimensional null space is defined for multiple isotonic regression.  

We start with some definitions. A partial ordering on $\mathcal{X}$ may be defined as $\pmb{x}_i\preceq \pmb{x}_j$ if $\pmb{x}_i\leq \pmb{x}_j$ holds coordinate-wise.  Two points $\pmb{x}_i$ and $\pmb{x}_j$ in  ${\cal X}$  are comparable if either $\pmb{x}_i\preceq\pmb{x}_j$ or $\pmb{x}_j\preceq\pmb{x}_i$.   Partial orderings are reflexive, anti-symmetric, and transitive, but differ from complete orderings in that pairs of points are not required to be comparable. 
A function $\phi_0:{\cal X}\rightarrow\R$ is isotonic with respect to the partial ordering if $\phi_0(\pmb{x}_i)\leq \phi_0(\pmb{x}_j)$ whenever $\pmb{x}_i\preceq \pmb{x}_j$.   If $\bt\in\R^n$ is defined as $\theta_i= \phi_0(\pmb{x}_i)$, we can consider the set $\I$ of $\bt\in\R^n$ such that $\theta_i\leq \theta_j$ whenever $\pmb{x}_i\preceq \pmb{x}_j$.  The set $\I$ is a convex cone in $\R^n$, and a constraint matrix $\pmb{A}$ can be found so that $\I=\{\bt:\pmb{A}\bt\geq\pmb 0\}$.

Assumption (A1) holds if ${\cal X}$ is {\em connected}; i.e., for all proper subsets ${\cal X}_0\subset{\cal X}$, each point in ${\cal X}_0$ is comparable with at least one point not in ${\cal X}_0$.   If ${\cal X}$ is not connected, it can be broken down into smaller connected subsets, and the null space of $\pmb{A}$ consists of vectors that are constant over the subsets. We have the following lemma.  
\begin{lemma} The null space $\s$ of the constraint matrix $\pmb{A}$ associated with isotonic regression on the set ${\cal X}$ is spanned by the constant vectors if and only if the set ${\cal X}$ is connected.
\label{vconst}
\end{lemma}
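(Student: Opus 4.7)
The approach is to translate the linear-algebraic question about the null space of $\B A$ into a combinatorial question about the comparability graph on $\X$. Each row of $\B A$ corresponds to a pair $\B x_i \preceq \B x_j$ and enforces $\theta_j - \theta_i \geq 0$, so $\B A \bt = \B 0$ if and only if $\theta_i = \theta_j$ for every such generating pair. By transitivity of the partial order, this is in turn equivalent to $\theta_i = \theta_j$ whenever $\B x_i$ and $\B x_j$ are linked by a chain of comparable points. Hence $\s$ is precisely the set of vectors that are constant on each connected component of the comparability graph on $\X$.

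For the ``if'' direction, assume $\X$ is connected and fix any $\bt \in \s$. For any two indices $i, j$, the connectedness hypothesis yields a finite chain $\B x_i = \B x_{i_0}, \B x_{i_1}, \ldots, \B x_{i_k} = \B x_j$ in which consecutive elements are comparable. The equality constraints $\B A \bt = \B 0$ propagate along this chain, giving $\theta_{i_0} = \theta_{i_1} = \cdots = \theta_{i_k}$, and thus $\theta_i = \theta_j$. Since $i, j$ were arbitrary, $\bt$ is a scalar multiple of the all-ones vector $\B e$, so $\s = \mathrm{span}(\B e)$.

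For the ``only if'' direction I argue by contrapositive. If $\X$ is not connected, there exists a proper nonempty subset $\X_0 \subsetneq \X$ such that no point in $\X_0$ is comparable with any point in $\X \setminus \X_0$. Define $\bt$ by $\theta_i = 1$ if $\B x_i \in \X_0$ and $\theta_i = 0$ otherwise. Every row of $\B A$ corresponds to a comparable pair $(\B x_i, \B x_j)$, which by construction lies entirely inside $\X_0$ or entirely inside its complement; in either case the row's nontrivial entries multiply equal coordinates of $\bt$ and yield $0$. Therefore $\B A \bt = \B 0$, so $\bt \in \s$, yet $\bt$ is clearly not a constant vector, showing $\s$ strictly contains $\mathrm{span}(\B e)$.

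The main (very mild) subtlety is bookkeeping: one must verify that the rows of $\B A$ indeed encode only pairwise comparability constraints between points of $\X$, so that a disconnection of $\X$ translates cleanly into a block structure of $\B A$. This is immediate from the construction of the isotonic-regression constraint matrix. A minor additional point is that the definition of connectedness used in the paper must be interpreted in the standard graph-theoretic sense (no partition of $\X$ into two nonempty subsets without comparabilities between them), which is how the converse direction is invoked.
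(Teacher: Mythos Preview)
Your proof is correct and, in several respects, cleaner than the paper's. Both arguments rest on the same idea---that membership in $\s$ forces equality of coordinates across comparable pairs---but the packaging differs.

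For the ``if'' direction, the paper works with the neighborhood sets $\X_a=\{\B x_j:\B x_j\text{ comparable to }\B x_a\}$ and argues that connectedness forces $\X_a\cap\X_b\neq\emptyset$ for every pair $a,b$, whence $\theta_a=\theta_j=\theta_b$ via a common comparable point $\B x_j$. Your chain argument is the natural graph-theoretic reformulation of the same step and is arguably more transparent, since it makes explicit that equality propagates along paths in the comparability graph rather than requiring a single shared neighbor.

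For the ``only if'' direction, your indicator-vector construction is exactly what is needed, and in fact the paper's written proof does not supply this direction: its two paragraphs labeled ``First'' and ``Second'' both establish the implication ``connected $\Rightarrow$ $\s$ spanned by constants'' (once by contrapositive, once directly). So your argument is more complete on this point.

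Your closing remark about interpreting the paper's definition of connectedness is well taken. The paper's phrasing (``each point in $\X_0$ is comparable with at least one point not in $\X_0$'') is literally stronger than standard graph connectedness; under that literal reading the ``only if'' direction can fail (e.g., three points with comparability pattern $b\text{--}a\text{--}c$). The lemma is true, and your proof works, precisely under the standard graph-theoretic reading you adopt.
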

\begin{proof}
Let $\pmb{A}$ be the constraint matrix associated with isotonic regression on the set ${\cal X}$ such that there is a row for every comparable pair  (i.e., before reducing).  Let ${\cal C}$ be the null space of $\pmb{A}$ (it is easy to see that the null space of the reduced constraint matrix is also $\s$).  
First, suppose $\pmb{\theta}\in \s$ and $\theta_a\neq \theta_b$ for some $1\leq a,b\leq n$.  Let ${\cal X}_a \subset {\cal X}$ be the set of points that are comparable to $\pmb{x}_a$, and let ${\cal X}_b  \subset {\cal X}$ be the set of points that are comparable to $\pmb{x}_b$.  If $\pmb{x}_j$ is in the intersection, then there is row of $\pmb{A}$ where the $j$-th element is $-1$ and the $a$-th element is $+1$ (or vice-versa), as well as a row where the $j$-th element is $-1$ and the $b$-th element is $+1$ (or vice-versa), so that $\theta_a=\theta_j$ and $\theta_b=\theta_j$.  Therefore if $\theta_a\neq \theta_b$, ${\cal X}_a\cap{\cal X}_b$ is empty and ${\cal X}$ is not connected.
Second, suppose ${\cal X}$ is connected and $\pmb{\theta}\in \s$.  For any $a\neq b$, $1\leq a,b\leq n$, define ${\cal X}_a$ and ${\cal X}_b$ as above; then because of connectedness there must be an $\pmb{x}_j$ in the intersection, and hence $\theta_a=\theta_b$. Thus only constant vectors are in $\s$.  
\end{proof}

The classical isotonic regression with a partial ordering can be formulated using upper and lower sets.   The set $U\subseteq{\cal X}$ is an upper set with respect to $\preceq$ if $\pmb{x}_1\in U$, $\pmb{x}_2\in {\cal X}$, and $\pmb{x}_1\preceq \pmb{x}_2$ imply that $\pmb{x}_2\in U$.     Similarly, the set $L\subseteq{\cal X}$ is a lower set with respect to $\preceq$ if $\pmb{x}_2\in L$, $\pmb{x}_1\in {\cal X}$, and $\pmb{x}_1\preceq \pmb{x}_2$ imply that $\pmb{x}_1\in L$.    

The following results can be found in \citet{gebhardt70}, \citet{barlow72}, and \citet{dykstra81}.  Let ${\cal U}$ and ${\cal L}$ be the collections of upper and lower sets in ${\cal X}$, respectively. The isotonic regression estimator at $\pmb{x}\in{\cal X}$ has the closed form
\[\hat{\phi}_0(\pmb{x}) = \max_{U\in{\cal U}: \pmb{x} \in U} \min_{L\in{\cal L}:\pmb{x}\in L} Av_{\mathbf Y}(L\cap U),\]
where $Av_{\mathbf{Y}}(S)$ is the average of all the $Y$ values for which the predictor values are in $S\subseteq {\cal X}$.  Further, if $\bt\in\I$, then for any upper set $U$ we have
\begin{equation}\label{eq:upperlower} 
Av_{\bt}(U)\geq Av_{\bt}(U^c).
\end{equation}
Similarly, if $L$ is a lower set and $\bt\in\I$, $Av_{\bt}(L)\leq Av_{\bt}(L^c)$, as the complement of an upper set is a lower set and vice-versa.    Finally, any $\bt\in\I$ can be written as a linear combination of indicator functions for upper sets, with non-negative coefficients, plus a constant vector.   

We could define a double cone where $\D=\{\bt:\pmb{A}\bt\leq\pmb 0\}$ for antitonic $\bt$.  To show the condition (A2) holds, we first show that if $\bt\in\Omega_I=\I\cap\s^{\perp}$, then the projection of $\bt$ on $\D$ is the origin; hence $\Omega_I\subseteq\D^o$.    Let $\bt_{U}$ be the ``centered'' indictor vector for an upper set $U$.  That is, $\bt_{Ui}=a$ if $\pmb{x}_i\in U$, $\bt_{Ui}=b$ if $\pmb{x}_i\notin U$, and $\sum_{i=1}^n\bt_{Ui}=0$.   Then $\bt_U\in\Omega_I$, and $\langle \bt_U, \bt\rangle\geq 0$ for any $\bt\in\I$ by (\ref{eq:upperlower}).  For, 
\begin{eqnarray*}
 \langle \bt_U, \bt\rangle &=&a\sum_{i:x_i\in U^c} \theta_i + b\sum_{i:x_i\in U} \theta_i \\
 &=& (n-n_u)aAv_{\theta}(U^c)+n_ubAv_{\theta}(U) \\
 &\geq& \{(n-n_u)a+n_ub\}Av_{\theta}(U^c) =0,
\end{eqnarray*}
where $n_u$ is the number of elements in $U$.

  Similarly, $\langle \bt_U, \br\rangle\leq 0$ for any $\br\in\D$.   We can write any $\bt\in\Omega_I$ as a linear combination of centered upper set indicator vectors with non-negative coefficients, so that $\langle \bt, \br\rangle\leq 0$ for any $\bt\in\Omega_I$ and $\br\in\D$.
Then for any $\bt\in\Omega_I$ and $\br\in\D$,
\[\| \bt-\br\|^2 = \|\bt\|^2 +\|\br\|^2 -2\langle \bt, \br\rangle \geq \|\bt\|^2,
\]
hence the projection of $\bt\in\Omega_I$ onto $\D$ is the origin.

The double cone for multiple isotonic regression is unsatisfactory because if one of the predictors reverses sign, the value of the statistic~(\ref{eq:TestS}) (for testing against a constant function) also changes.   For two predictors, it is more appropriate to define a quadruple cone.   Define:
\begin{itemize}
\item $\I_1$: the cone defined by the partial ordering: $\pmb{x}_i\preceq \pmb{x}_j$ if $x_{1i}\leq x_{1j}$  and $x_{2i} \leq x_{2j}$;
\item $\I_2$: the cone defined by the partial ordering: $\pmb{x}_i\preceq \pmb{x}_j$ if $x_{1i}\leq x_{1j}$  and $x_{2i}\geq x_{2j}$;
\item $\I_3$: the cone defined by the partial ordering: $\pmb{x}_i\preceq \pmb{x}_j$ if $x_{1i}\geq x_{1j}$  and $x_{2i}\leq x_{2j}$; and
\item $\I_4$: the cone defined by the partial ordering: $\pmb{x}_i\preceq \pmb{x}_j$ if $x_{1i}\geq x_{1j}$  and $x_{2i}\geq x_{2j}$.
\end{itemize}
The cones $\I_1$ and $\I_4$ form a double cone as do $\I_2$ and $\I_3$.   
If ${\cal X}$ connected, the one-dimensional space $\s$ of constant vectors is the largest linear space in any of the four cones and (A1) is met.   

Let $\hat \bt_j$ be the projection of $\pmb{Y}$ onto $\I_j$, and $SSE_j=\|\pmb{Y} - \hat \bt_j\|^2$ for $j=1,2,3,4$, while $SSE_0=\sum_{i=1}^n(Y_i-\bar{Y})^2$.   Define $T_j=(SSE_0-SSE_j)/SSE_0$, for $j=1,2,3,4$, 
and $T= \max\{T_1,T_2,T_3,T_4\}$.
The distribution of $T$ is invariant to translations in $\s$; this can be proved using the same technique as Lemma~\ref{lem:DistFreeH_0}.  Therefore the null distribution can be simulated up to any desired precision for known $G$.  To show that the test is unbiased for $\bt_0$ in the quadruple cone $\I_1\cup\I_2\cup\I_3\cup\I_4$, we note that condition (A2) becomes $\Omega_1\subseteq\I_4^o$ and $\Omega_2\subseteq\I_3^o$, and vice-versa; the results of Section~\ref{Unbiased} hold for the quadruple cone. For three predictors we need an octuple cone, which is comprised of four double-cones and similar results follow.   

The results of Section~\ref{AsymPower} depend on the consistency of the multiple isotonic regression estimator. Although \citet{hanson73} proved point-wise and uniform consistency (on compacts in the interior of the support of the covariates) for the projection estimator in the bivariate case (also see~\citet{makowski77}), result~(\ref{eq:ConsEst}) for the general case of multiple isotonic regression is still an open problem.

\subsubsection{Simulation study}
We consider the test against a constant function using the quadruple cone alternative of Section~\ref{TestCnst}, and model $Y_i=\phi_0(x_{1i},x_{2i})+\epsilon_i$, $i=1,\ldots,100$, where for each simulated data set, the $(x_1,x_2)$ values are generated uniformly in the unit square.  The power is compared with the standard $F$-test where the alternative model is $Y_i=\beta_0+\beta_1x_{1i}+\beta_2x_{2i}+\beta_3x_{1i}x_{2i}+\epsilon_i$, the FH test with known variance, and $S_1$ and $S_2$.  In the first plot of Figure~\ref{quadpowerplot2}, power is shown when the true regression function is $\phi_0(x_1,x_2)=2ax_1x_2$.  Here the assumptions for the parametric $F$-test are correct, so the $F$-test has the highest power, although the power for the proposed test is only slightly smaller.   In the second plot, the regression function is quadratic in $x_1$, with vertex in the center of the design points.   The $F$-test fails to reject the constant model because of the non-linear alternative, but the true regression function is also far from the quadruple-cone alternative.   However, the proposed test has power comparable to the FH test with known variance.   In the final plot, the regression function is constant on $[0,2/3]\times[0,2/3]$, and increasing in both predictors beyond 2/3.  The proposed test has the best power compared with the alternatives.  The $S_1$ and $S_2$ tests do not perform well for the test against a constant function in two dimensions, compared with other testing situations.

\begin{figure}
\centerline{\includegraphics[height=2.1in, width=6in]{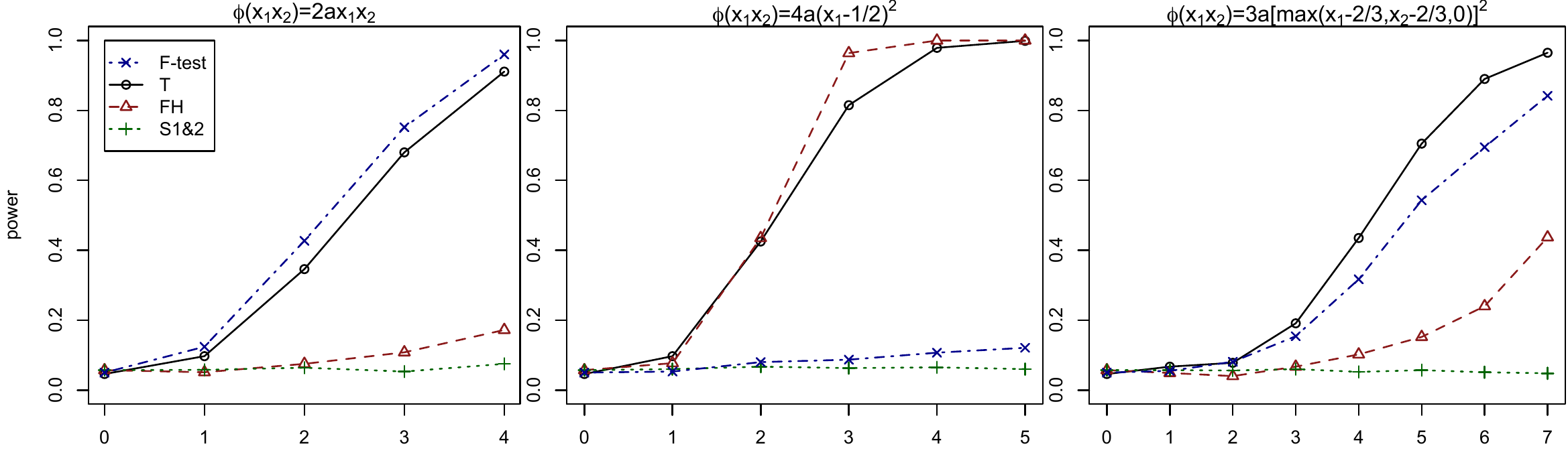} }
\label{quadpowerplot2}
\caption{Power function for test against the constant model, with $n=100$ observations, predictors generated uniformly in $[0,1]^2$ and $\sigma^2=1$.}
\end{figure}

\subsection{Proofs of Lemmas and Theorems}\label{Proofs}
\noindent {\bf Proof of Theorem~\ref{thm:CutOff}}
To study the distribution of $T(\hat {\B \eps}) \sim D_n$ and relate it to that of $T({\B \eps}) \sim H_n$, we consider the following quantile coupling: Let $Z_1,\ldots$ be a sequence of i.i.d.~Uniform(0,1) random variables. Define $\eps_j = G^{-1}(Z_j)$ and let $\hat \eps_{nj} = \hat G_n^{-1}(Z_j)$, for $j=1,\ldots, n$; $n \ge 1$. Observe that $T(\eps_1,\ldots, \eps_n) \sim H_n$ and the conditional distribution of $T(\hat {\eps}_{n1},\ldots,\hat {\eps}_{nn})$, given the data, is $D_n$. For notational simplicity, for the rest of the proof, we will denote by $ {\B \eps} := (\eps_1,\ldots, \eps_n)$ and by $\hat {\B \eps} := (\hat {\eps}_{n1},\ldots,\hat {\eps}_{nn})$. Note that,
\begin{eqnarray}\label{eq:ProjLip}
	& & \|\Pi(\B \eps|\s) - \Pi(\B \eps|\I)\| \notag \\
& = & \|\{\Pi(\B \eps|\s) - \Pi(\hat{\B \eps}|\s)\} + \{\Pi(\hat{\B \eps}|\s) - \Pi(\hat{\B \eps}|\I)\} + \{\Pi(\hat{\B \eps}|\I) - \Pi(\B \eps|\I)\}\| \notag \\
	& \le & \|\Pi(\B \eps|\s) - \Pi(\hat{\B \eps}|\s)\| + \|\Pi(\hat{\B \eps}|\s) - \Pi(\hat{\B \eps}|\I)\| + \|\Pi(\hat{\B \eps}|\I) - \Pi(\B \eps|\I)\| \notag \\
	& \le & 2 \|\B \eps - \hat{\B \eps}\| + \|\Pi(\hat{\B \eps}|\s) - \Pi(\hat{\B \eps}|\I)\|
\end{eqnarray}
where we have used the fact that the projection operator is 1-Lipschitz. To simplify notation, let $U :=  \|\Pi(\B \eps|\s) - \Pi(\B \eps|\I)\|$, $V := \|\Pi(\B \eps|\s) - \Pi(\B \eps|\D)\|$ and $W := \|\B \eps - \Pi(\B \eps|\s)\|$. Also, let $\hat U :=  \|\Pi(\hat {\B \eps}|\s) - \Pi(\hat {\B \eps}|\I)\|$, $\hat V := \|\Pi(\hat {\B \eps}|\s) - \Pi(\hat {\B \eps}|\D)\|$ and $\hat W := \|\hat {\B \eps} - \Pi(\hat {\B \eps}|\s)\|$. 
Thus, using a similar argument as in~\eqref{eq:ProjLip} we obtain,
$$\max \left\{ |U-\hat U|,|V - \hat V|,|W - \hat W| \right\} \le 2 \|\B \eps - \hat{\B \eps}\|.$$ Therefore,
\begin{eqnarray}\label{eq:BoundTeps}
|{T^{1/2}(\B \eps)} - {T^{1/2}(\hat{\B \eps})}|& = & \left|\frac{\max  \left\{U, V \right\}}{W} - \frac{\max \{\hat U, \hat V\}}{\hat W} \right| \notag \\
& \le & \frac{|\max \{U, V \} - \max \{\hat U, \hat V\}|}{W}  +  \max \{\hat U, \hat V\} \left| \frac{1}{W} - \frac{1}{\hat W}\right| \notag \\ 
& \le &\frac{\max \{|U - \hat U|, |V - \hat V|\}}{W} +  \frac{ \max \{\hat U, \hat V \}}{\hat W} \frac{| \hat W - W|}{W} \notag \\ 
& \le &  2 \frac{\|\B \eps - \hat{\B \eps}\|}{W} + \frac{| \hat W - W|}{W} \le 4 \frac{\|\B \eps - \hat{\B \eps}\|}{W} 
\end{eqnarray}
as $\max \{\hat U, \hat V\}/{\hat W} \le 1$. For two probability measure $\mu$ and $\nu$ on $\R$, let $d_p(\mu,\nu)$ denote the $p$-Wasserstein distance between $\mu$ and $\nu$, i.e., $$[d_p(\mu,\nu)]^p := \inf_J \{\E |S - T|^p: S \sim \mu, T \sim \nu\},$$ where the infimum is taken over all joint distributions $J$ with marginals $\mu,\nu$. 
Now, 
\begin{eqnarray}\label{eq:error}
\E \left( \frac{1}{n} \| \B \eps - \hat{\B \eps}\|^2 \right) = \frac{1}{n} \sum_{j=1}^n \E (\eps_i - \hat \eps_{nj})^2 = \int_{0}^1 |G_n^{-1}(t) - G^{-1}(t)|^2 dt = d_2(G_n,G),\;\;\;\;
\end{eqnarray}
where the last equality follows from~\citet[Theorem 2, page 64]{SW86}. Further, by~\citet[Theorem 1, page 63]{SW86}, and two conditions on $\hat G_n$ stated in the theorem, it follows that $d_2(G_n,G) \to 0$ a.s. 
Therefore, 
\begin{eqnarray*}
	d_1(H_n,D_n) & \le & \E |T(\hat {\B \eps}) - T({\B \eps})| \;\;	\le \;\; 2 \E |{T^{1/2}(\hat {\B \eps})} - T^{1/2}({\B \eps})|  \\
	& \le & 8 \E \left( \frac{\|\B \eps - \hat{\B \eps}\|}{W}\right) \;\; \le \;\; 8 \sqrt{\E \left( \frac{n}{W^2} \right) \E \left( \frac{1}{n} \|\B \eps - \hat{\B \eps}\|^2\right)} \to 0 \quad \mbox{a.s.},
\end{eqnarray*}
where we have used~\eqref{eq:BoundTeps} and~\eqref{eq:error}, $ \E ( n W^{-2} )$ is uniformly bounded, and the fact that $d_2(G_n,G) \to 0$ a.s. The result now follows from the fact that $d_L(H_n,D_n) \le \sqrt{d_1(H_n,D_n)}$; see \citet[pages 31--33]{Huber81}. \newline \qed

{\bf Proof of Lemma~\ref{lem:conv}:}  Let $\pmb\rho := \bt_0-\bt_I$.  Then $\pmb\rho\perp\bt_I$.  Let $\pmb{Z}=\mathbf{Y}-\pmb{\rho}$, and note that $\pmb{Z} =\bt_I+\pmb{\epsilon}$. If $\check{\bt}_I$ is the projection of $\pmb{Z}$ onto $\I$, then 
\begin{equation}\label{eq:FactZ}
\| \check{\bt}_I-\bt_I \|^2=o_p(n) \ \mbox{ and } \ \langle \pmb{Z}-\check{\bt}_I,\bt_I \rangle =o_p(n).
\end{equation}
The first follows from assumption~\eqref{eq:ConsEst} and the latter holds because
\[ 0 \leq -\langle \pmb{Z}-\check{\bt}_I,\bt_I\rangle = \langle \pmb{Z}-\check{\bt}_I, \check{\bt}_I - \bt_I\rangle  \leq \|\pmb{Z}-\check{\bt}_I\| \|\check\bt_I -\bt_I\| = o_p(n),
\] as $\|\pmb{Z}-\check{\bt}_I\|^2/n\leq \|\pmb{Z} - \bt_I \|^2/n=O_p(1)$, where we have used the characterization of projection on a closed convex cone~\eqref{eq:InProj}. Starting with
\[ \|\check{\bt}_I -\bt_I\|^2 = \|\check{\bt}_I -\hat{\bt}_I \|^2 +\|\hat{\bt}_I -\bt_I\|^2 + 2\langle \check{\bt}_I -\hat{\bt}_I, \hat{\bt}_I -\bt_I\rangle,
\]
we rearrange to get 
{\small \begin{eqnarray}\label{eq:Simp1}
 \|\check{\bt}_I -\bt_I\|^2 - \|\hat{\bt}_I -\bt_I\|^2  &=&  \|\check{\bt}_I -\hat{\bt}_I \|^2+2 \langle \check{\bt}_I -\hat{\bt}_I, \hat{\bt}_I -\bt_I\rangle \nonumber \\
 &=& \|\check{\bt}_I -\hat{\bt}_I \|^2 + 2\langle \check{\bt}_I - \mathbf{Y} , \hat{\bt}_I - \bt_I\rangle  +2 \langle \mathbf{Y} - \hat{\bt}_I, \hat{\bt}_I -\bt_I\rangle. \qquad  \qquad
\end{eqnarray}}
As $\mathbf{Y}= \pmb{Z}+\pmb{\rho}$ and $\pmb\rho\perp\bt_I$, we have
\begin{eqnarray*}
\langle \mathbf{Y} -\check{\bt}_I, \hat{\bt}_I - \bt_I \rangle & = & \langle \mathbf{Y} -\check{\bt}_I, \hat{\bt}_I \rangle - \langle \mathbf{Y} -\check{\bt}_I  ,\bt_I\rangle \\
& = & [\langle \pmb{Z} -\check{\bt}_I, \hat{\bt}_I \rangle +\langle \pmb{\rho}, \hat{\bt}_I \rangle] - \langle \pmb{Z}-\check{\bt}_I,\bt_I\rangle.
\end{eqnarray*}
Also, $\langle \mathbf{Y} - \hat{\bt}_I, \hat{\bt}_I -\bt_I\rangle = - \langle \mathbf{Y} - \hat{\bt}_I, \bt_I\rangle$. Thus,~\eqref{eq:Simp1} equals
 \begin{eqnarray*}
\|\check{\bt}_I -\hat{\bt}_I \|^2 - 2 \langle \mathbf{Y} - \hat{\bt}_I,\bt_I\rangle -2 \langle \pmb{Z} -\check{\bt}_I, \hat{\bt}_I \rangle -2\langle \pmb{\rho}, \hat{\bt}_I \rangle  + 2\langle \pmb{Z}-\check{\bt}_I,\bt_I\rangle.
 \end{eqnarray*}
The first four terms in the above expression are positive (with their signs), and the last is $o_p(n)$ by~\eqref{eq:FactZ}. Therefore, $$ \|\hat{\bt}_I -\bt_I\|^2  \le \|\check{\bt}_I -\bt_I\|^2 + o_p(n),$$ which when combined with~\eqref{eq:FactZ} gives the desired result. The proof of the other part is completely analogous. \newline \qed

\noindent {\bf Proof of Theorem~\ref{PowerCond}:} To show that \eqref{eq:PowerC} holds let us define the convex projection of $\phi_0$ with respect to the measure $\mu$ as $$\phi_I = \argmin_{\phi \mbox{ convex }} \int_{[0,1]^d} (\phi_0(\mathbf{x}) - \phi(\mathbf{x}))^2 d \mu(\mathbf{x}),$$ where the minimization is over all convex functions from $[0,1]^d$ to $\R$. The existence and uniqueness (a.s.)~of $\phi_I$ follows from the fact that $\h := L_2([0,1]^d, \mu)$ is a Hilbert space with the inner product $\langle f,g\rangle_\h = \int_{[0,1]^d} f(\mathbf{x}) g(\mathbf{x}) d \mu(\mathbf{x})$, and the space $\tilde \I$ of all convex functions in $\h$ is a closed convex set in $\h$. Let $\tilde \D$ of the space of all concave functions in $\h$. We can similarly define the non-increasing projection $\phi_D$ of $\phi_0$.

Next we show that if $\phi_0$ is not affine a.e.~$\mu$ then either $\phi_I$ is not affine a.e.~$\mu$ or $\phi_D$ is not affine a.e.~$\mu$. Suppose not, i.e., suppose that $\phi_I$ and $\phi_D$ are both affine a.e.~$\mu$. Then by~\eqref{eq:ProjC}, $\langle \phi_0 - \phi_I, f \rangle_\h = \langle \phi_0-\phi_D, f\rangle_\h =0$ for any affine $f$. This implies that $\langle\phi_I,f\rangle_\h=\langle\phi_D,f\rangle_\h$ for all affine $f$, so $\phi_I=\phi_D=: \phi_S$ a.e., where $\phi_S$ is affine. Note that $\phi_S$ is indeed the projection of $\phi_0$ onto the space of all affine functions in $\h$. From~(\ref{eq:InProj}) applied to ${\tilde \D}$ and ${\tilde \I}$ it follows that
$$\langle \phi_{0} - \phi_S, f_I + f_D \rangle_\h \le 0,$$ for all $f_I \in \tilde \I$ and $f_D \in \tilde \D$. As any $f \in \h$ that is twice continuously differentiable can be expressed as $f = f_I + f_D$ for $f_I \in \tilde \I$ and $f_D \in \tilde \D$, the above display yields $\langle \phi_{0} - \phi_S, f \rangle_\h \le 0$ for all $f$ that is twice continuously differentiable. Taking $f$ and $-f$ in the last inequality we get that $\langle \phi_{0} - \phi_S, f \rangle_\h = 0$ for all $f$ that is twice continuously differentiable, which implies that $\phi_{0} = \phi_S$ a.e.~$\mu$, giving rise to a contradiction.

We can also show that $$\lim_{n \rightarrow \infty} \frac{1}{n} \|{\pmb  \theta}_I - {\pmb  \theta}_S\|^2 =  \int_{[0,1]^d} (\phi_I(\mathbf{x}) - \phi_S(\mathbf{x}))^2 d \mu(\mathbf{x}) > 0.$$ Similarly, ${n}^{-1} \|{\pmb  \theta}_D - {\pmb  \theta}_S\|^2$ also converges to a positive number. This proves~\eqref{eq:PowerC}. \newline
\qed

\noindent {\bf Proof of Theorem~\ref{thm:Null}:} Observe that $T_I = \|\hat{\pmb \theta}_S - \hat{\pmb \theta}_I\|^2/\|\mathbf{Y} - \hat{\pmb \theta}_S\|^2$.  Letting $\mathbf{e}_n := (1,1,\ldots, 1)^\top \in \R^n$, and noting that $\hat{\pmb \theta}_S = \bar Y \mathbf{e}_n$, where $\bar Y = \sum_{i=1}^n Y_i/n$, we have
\begin{equation*}
\frac{1}{n}\|\mathbf{Y} - \hat{\pmb \theta}_S\|^2  \rightarrow_p \sigma^2 > 0.
\end{equation*} 
Now,
\begin{eqnarray*}
	\|\hat{\pmb \theta}_S - \hat{\pmb \theta}_I\|^2 & = & \|\hat{\pmb \theta}_S - \pmb \theta_0 + \pmb \theta_0  - \hat{\pmb \theta}_I\|^2 \\
	& \le & 2 \|\hat{\pmb \theta}_S - \pmb \theta_0 \|^2 + 2 \|\hat{\pmb  \theta}_I - \pmb \theta_0 \|^2 \\
	& = & O_p(1) + O_p(\log n),
\end{eqnarray*}
where we have used the facts that $\|\hat{\pmb \theta}_S - \pmb \theta_0 \|^2 = O_p(1)$ and $ \|\hat{\pmb  \theta}_I - \pmb \theta_0\|^2 = O_p(\log n)$ (see \citet{CGS13}; also see~\citet{MW00} and~\citet{Zhang02}). A similar result can be obtained for $T_D$ to arrive at the conclusion $T = O_p(\log(n)/n)$. \newline
\qed

\noindent {\bf Proof of Theorem~\ref{thm:Misspec}:} We will verify \eqref{eq:ConsEst} and \eqref{eq:PowerC} and apply Theorem~\ref{thm:h1part} to obtain the desired result. First observe that \eqref{eq:ConsEst} follows immediately from the results in \citet{CGS13}; also see \citet{Zhang02}. To show that \eqref{eq:PowerC} holds let us define the non-decreasing projection of $\phi_0$ with respect to the measure $F$ as $$\phi_I = \argmin_{\phi \uparrow} \int_0^1 (\phi_0(x) - \phi(x))^2 d F(x),$$ where the minimization is over all non-decreasing functions. We can similarly define the non-increasing projection $\phi_D$ of $\phi_0$.

As $\phi_0$ is not a constant a.e.~$F$, it can be shown by a very similar argument as in Lemma~\ref{lem:UniCst} that either $\phi_I$ is not a constant a.e.~$F$ or $\phi_D$ is not a constant a.e.~$F$ (note that here we use the fact that a function of bounded variation on an interval can be expressed as the difference of two monotone functions). Without loss of generality let us assume that $\phi_I$ is not a constant a.e.~$F$. Therefore, $$\lim_{n \rightarrow \infty} \frac{1}{n} \|{\pmb  \theta}_I - {\pmb  \theta}_S\|^2 =  \int_0^1 (\phi_I(x) - c_0)^2 d F(x) > 0,$$ which proves \eqref{eq:PowerC}, where $c_0 = \argmin_{c \in \R} \int_0^1 (\phi_0(x) - c)^2 d F(x)$. \newline \qed

\noindent {\bf Proof of Theorem~\ref{thm:MultCvxCons}:}
The theorem follows from known metric entropy results on the class of uniformly bounded convex functions that are uniformly Lipschitz in conjunction with known results on consistency of least squares estimators; see Theorem 4.8 of \citet{vdG00}. We give the details below. 

The notion of covering numbers will be used in the
sequel. For $\epsilon > 0$ and a subset $\G$ of functions, the
$\epsilon$-covering number of $\G$ under the metric $\ell$, denoted by $N(S, \epsilon;\ell)$, is defined as the smallest number of closed balls of radius $\epsilon$ whose union contains $\G$.

Fix any $B>0$ and $L> L_0$. Recall the definition of the class $ \tilde \I_L$, given in~\eqref{eq:C_L}. We define the class of uniformly bounded convex functions that are uniformly Lipschitz as
\begin{equation*}
\tilde \I_{L,B} := \{\psi \in \I_L: \|\psi\|_\mathfrak{X} \le B\}.
\end{equation*} 

Using Theorem 3.2 of \citet{GS13} (also see \citet{Bronshtein76}) we know that   
\begin{eqnarray}\label{difflip.eq}
   \log N \left(\tilde \I_{L, B}, \epsilon; L_{\infty} \right)  \leq c \left(\frac{B + d L}{\epsilon}  \right)^{d/2},
  \end{eqnarray}
for all $0 < \epsilon \leq \epsilon_0 (B + d L)$, where $\epsilon_0 >0$ is a fixed constant and $L_\infty$ is the supremum norm. In the following we denote by $\I_{L}$ and $\I_{L,B}$ the convex sets (in $\R^n$) of all evaluations (at the data points) of functions in $\I_{L}$ and $\I_{L,B}$, respectively; cf.~\eqref{eq:CvxCone}. Thus, we can say that $\hat \bt_{I,L}$ is the projection of $\mathbf{Y}$ on $\I_{L}$. Let $\hat \bt_{I,L,B} $ denote the projection of $\mathbf{Y}$ onto $\I_{L,B}$. We now use Theorem 4.8 of \citet{vdG00} to show that 
\begin{equation}\label{eq:L_2RateBdLip}
	\|\hat \bt_{I,L,B} - \bt_0\|^2 = o_p(n).
\end{equation}
Note that equation (4.26) in \citet{vdG00} is trivially satisfied as we have sub-gaussian errors and equation (4.27) easily follows from \eqref{difflip.eq}. 

Denote the $i$-th coordinate of a vector $\mathbf{b} \in \R^n$ by $b^{(i)}$, for $i=1,\ldots,n$. Define the event $A_n := \{ \max_{i=1,\ldots,n}|\hat \bt_{I,L}^{(i)}|\le B_0 \}$. Next we show that there exists $B_0>0$ such that 
\begin{equation}\label{eq:A_n}
\P(A_n) \rightarrow 1, \qquad \mbox{as } n \rightarrow \infty.
\end{equation} 

As $\hat \bt_{I,L}$ is a projection on the closed convex set $ \I_L$, we have $$\langle \mathbf{Y} - \hat \bt_{I,L}, \B \gamma - \hat \bt_{I,L} \rangle \le 0, \qquad \mbox{for all } \B \gamma \in \I_L.$$ Letting $\mathbf{e} := (1,1,\ldots,1)^\top \in \R^n$, note that for any $c \in \R$, $c  \mathbf{e} \in \I_L$. Hence, $$ \langle \mathbf{Y} - \hat \bt_{I,L}, c  \mathbf{e} - \hat \bt_{I,L} \rangle  = c \langle \mathbf{Y} - \hat \bt_{I,L}, \mathbf{e} \rangle - \langle \mathbf{Y} - \hat \bt_{I,L}, \hat \bt_{I,L} \rangle \le 0, \qquad \mbox{ for all } c \in \R,$$ and thus $\langle \mathbf{Y} - \hat \bt_{I,L}, \mathbf{e} \rangle = 0,$ i.e., $n \bar Y = \sum_{i=1}^n Y_i = \sum_{i=1}^n \hat \bt_{I,L}^{(i)}$. Now, for any $i \in \{1,2,\ldots,n\}$,
\begin{eqnarray*}
	|\hat \bt_{I,L}^{(i)}| & \le & |\hat \bt_{I,L}^{(i)} - \bar Y| + |\bar Y|  \;\; =\;\; \left| \hat \bt_{I,L}^{(i)} - \frac{1}{n} \sum_{j=1}^n\hat \bt_{I,L}^{(j)} \right| + |\bar Y| \\
	& \le & \frac{1}{n} \sum_{j=1}^n  \left|\hat \bt_{I,L}^{(i)} - \hat \bt_{I,L}^{(j)} \right| + |\bar Y| \;\; \le \;\; \frac{L}{n} \sum_{j=1}^n  \|\B x_i - \B x_j \| + \|\phi_0\|_\mathfrak{X} + |\bar \epsilon| \\
	& \le & \sqrt{d} L + \kappa +1 =: B_0,
\end{eqnarray*}
a.s.~for large enough $n$, where we have used the fact that $\|\B x_i - \B x_j \| \le \sqrt{d}$, $\|\phi\|_\mathfrak{X} < \kappa$ for some $\kappa >0$, and that $\bar \epsilon = \sum_{i=1}^n \epsilon_i/n \rightarrow 0$ a.s.

As $\I_{L,B_0} \subset \I_L$, we trivially have $$\|  \mathbf{Y} - \hat \bt_{I,L} \|^2 \le \| \mathbf{Y} - \hat \bt_{I,L,B_0}\|^2.$$ If $A_n$ happens, $\hat \bt_{I,L} \in \I_{L,B_0}$, and thus, $$\|\mathbf{Y} - \hat \bt_{I,L,B_0} \|^2 = \argmin_{\theta \in \tilde \I_{L,B_0}} \|\mathbf{Y} - \bt \|^2 \le \| \mathbf{Y} - \hat \bt_{I,L}\|^2.$$
From the last two inequalities it follows that if $A_n$ occurs, then $\hat \bt_{I,L} = \hat \bt_{I,L,B_0}$, as $\hat \bt_{I,L,B_0}$ is the unique minimizer. Now using \eqref{eq:A_n}, \eqref{eq:L_2RateLip} immediately follows from \eqref{eq:L_2RateBdLip}. \qed

\bibliographystyle{apalike}
\bibliography{shapes}

\end{document}